%

\documentclass[11pt]{article}
\usepackage{randolph}
\usepackage[boxed,commentsnumbered,noend,linesnumbered]{algorithm2e}
\usepackage{xfrac}
\usepackage{lmodern}
\usetikzlibrary{trees, shapes, arrows.meta, positioning, calc}

\newcommand{\Aa}{\mathcal{A}}
\newcommand{\Bb}{\mathcal{B}}

\newcommand{\Oh}{O}
\newcommand{\Os}{\Oh^\ast}
\newcommand{\Prob}[1]{\mathbf{P} \left[ #1 \right]}

\newcommand{\diam}{\mathrm{diam}}
\let\oldnl\nl
\newcommand{\nlnonumber}{\renewcommand{\nl}{\let\nl\oldnl}}

\newcommand{\pCert}{\mathcal{Q}} 
\newcommand{\inputA}{A} 
\newcommand{\inputB}{B} 
\newcommand{\auxL}{\Aa} 
\newcommand{\auxR}{\Bb} 

\newenvironment{claimproof}[1]{\par\noindent Proof:\space#1}{\hfill $\blacksquare$}

\newcommand{\newreptheorem}[2]{%
\newenvironment{rep#1}[1]{%
 \def\reptitle{#2 \ref{##1}}%
 \begin{reptheorem}}%
 {\end{reptheorem}}}
 \makeatletter
\newsavebox\myboxA
\newsavebox\myboxB
\newlength\mylenA

\newcommand*\xoverline[2][0.75]{%
    \sbox{\myboxA}{$\m@th#2$}%
    \setbox\myboxB\null
    \ht\myboxB=\ht\myboxA%
    \dp\myboxB=\dp\myboxA%
    \wd\myboxB=#1\wd\myboxA
    \sbox\myboxB{$\m@th\overline{\copy\myboxB}$}
    \setlength\mylenA{\the\wd\myboxA}
    \addtolength\mylenA{-\the\wd\myboxB}%
    \ifdim\wd\myboxB<\wd\myboxA%
       \rlap{\hskip 0.5\mylenA\usebox\myboxB}{\usebox\myboxA}%
    \else
        \hskip -0.5\mylenA\rlap{\usebox\myboxA}{\hskip 0.5\mylenA\usebox\myboxB}%
    \fi}
\makeatother
 
\newreptheorem{theorem}{Theorem}
\newreptheorem{lemma}{Lemma}

\title{ 
Beating Meet-in-the-Middle for Subset Balancing Problems 
}

\author{ Tim Randolph\footnote{Harvey Mudd College,
    \texttt{trandolph@hmc.edu}.} \and Karol W\k{e}grzycki\footnote{Max Planck Institute for Informatics,
        Saarbrücken, Germany, \texttt{kwegrzyc@mpi-inf.mpg.de}. Supported by the Deutsche Forschungsgemeinschaft (DFG, German Research Foundation) grant number
     559177164. 
 }
 }
\date{}

\begin{document}

\maketitle

\begin{abstract}
    We consider exact algorithms for Subset Balancing, a family of related problems that generalizes Subset Sum, Partition, and Equal Subset Sum. Specifically, given as input an integer vector $\vec{x} \in \Z^n$ and a constant-size coefficient set $C \subset \Z$, we seek a nonzero solution vector $\vec{c} \in C^n$ satisfying $\vec{c} \cdot \vec{x} = 0$.

    For $C = \{-d,\ldots,d\}$, $d > 1$ and $C = \{-d,\ldots,d\}\setminus\{0\}$, $d > 2$, we present algorithms that run in time $O(|C|^{(0.5 - \epsilon)n})$ for a constant $\epsilon > 0$ that depends only on $C$. These are the first algorithms that break the $O(|C|^{n/2})$-time ``Meet-in-the-Middle barrier'' for these coefficient sets in the worst case. This improves on the result of Chen, Jin, Randolph and Servedio (SODA 2022), who broke the Meet-in-the-Middle barrier on these coefficient sets in the average-case setting. We also improve the best exact algorithm for Equal Subset Sum (Subset Balancing with $C = \{-1,0,1\}$), due to Mucha, Nederlof, Pawlewicz, and W\k{e}grzycki (ESA 2019), by an exponential margin. This positively answers an open question of Jin, Williams, and Zhang (ESA 2025).  Our results leave two natural cases in which we cannot yet break the Meet-in-the-Middle barrier: $C = \{-2, -1, 1, 2\}$ and $C = \{-1, 1\}$ (Partition). 

    Our results bring the representation technique of Howgrave-Graham and Joux (CRYPTO 2010) from average-case to worst-case inputs for many $C$. This requires a variety of new techniques: we present strategies for (1) achieving good ``mixing'' with worst-case inputs, (2) creating flexible input representations for coefficient sets without 0, and (3) quickly recovering compatible solution pairs from sets of vectors containing ``pseudosolution pairs''. These techniques may find application to other algorithmic problems on integer sums or be of independent interest.
\end{abstract}

\thispagestyle{empty}
\clearpage
\setcounter{page}{1}
\section{Introduction}
\label{sec:intro}

In the Subset Sum problem, we are given a set of $n$ integers and want to partition
it into two subsets of equal sum.\footnote{An alternative formulation searches for a subset of the input that sums to a specified target $t$. Our (equivalent) definition of the problem is often called \emph{Partition}. See~\cite{mfcs12} for a discussion of more equivalent problems such as Knapsack.} This problem and its variants are among the most famous \NP-complete problems and are central to the study of algorithmic problems on integers. 

In the early 1970s, Horowitz and Sahni~\cite{horowitz1974computing} designed an exact algorithm for Subset Sum using a strategy now known as ``Meet-in-the-Middle''. A careful implementation of this algorithm runs in $O(2^{0.5n})$ time, without additional polynomial factors \cite{chen2023logshavingSS}. Despite many attempts to improve on this algorithm, Meet-in-the-Middle remains the standard for exact algorithms for Subset Sum, and whether a $\Oh(2^{(0.5-\eps)n})$-time algorithm for Subset Sum exists for any constant $\eps > 0$ is one of the major open questions in the field.

The last decade has witnessed a surge of research activity focused on exact algorithms for the Subset Sum problem, including several promising results closely related to this open question.
In a breakthrough result, Howgrave-Graham and
Joux \cite{howgrave2010new} broke the Meet-in-the-Middle barrier for Subset Sum
in the \emph{average-case} setting with a $\Oh(2^{0.337n})$-time algorithm. Their
work introduced what is now known as the ``representation technique''
(see \Cref{subsec:rep-method}). Subsequent work along these lines has refined the technique and improved the average-case runtime to $\Oh(2^{0.283n})$ \cite{becker2011improved,bonnetain2020improved}.  

Progress on Subset Sum has also spurred research on subset balancing problems, a category that generalizes Subset Sum. For example,
Mucha, Nederlof, Pawlewicz and Węgrzycki considered the problem of \emph{Equal
Subset Sum} (ESS): does a given set of integers contain any two
subsets with the same sum? Surprisingly, these authors showed that the
Meet-in-the-Middle barrier for Equal Subset Sum could be broken by an
exponential factor for \emph{worst-case} inputs \cite{mucha2019equal}.

\subsection{Subset Balancing}

Equal Subset Sum and Subset Sum are both special cases of the following
problem. (They result from setting $C = \{-1, 0, 1\}$ and $\{-1, 1\}$, respectively.)

\begin{prob}[Subset Balancing]{prob:SB}
    \textbf{In:} A vector $\vec{x} \in \Z^n$ and a fixed set $C \subseteq \Z$. \\
    \textbf{Out:} A solution vector $\vec{c} \in C^n \setminus \{\vec{0}\}$ satisfying $\vec{c}
    \cdot \vec{x} = 0$.
\end{prob}

Subset Balancing was previously studied by Chen, Jin, Randolph and Servedio~\cite{chen2022average}. We follow these authors in considering two families of coefficient sets which generalize Subset Sum and Equal Subset Sum, respectively:
\begin{itemize}
    \item $[-d : d] \coloneqq \{-d, -d+1, \dots, d-1, d\}$, integers with absolute value at most $d$ \emph{including} $0$; and
    \item $[\pm d] \coloneqq \{-d, \dots, -1, 1, \dots, d\}$, integers with absolute value at most $d$ \emph{excluding} $0$.
\end{itemize}

Subset Balancing can be solved in time $\Os(|C|^{0.5n})$\footnote{$\Os$ notation suppresses $\poly(n)$ factors.} on any coefficient set
by using the (suitably generalized) Meet-in-the-Middle algorithm: partition the
input integers into two sets of size $0.5n$, then enumerate lists of all $O(|C|^{0.5n})$ possible coefficient-weighted sums for each half. Then, sort the lists and search for a matching pair of sums. 

\cite{chen2022average} proposed the following question regarding exact algorithms for Subset Balancing:

\begin{question}[Meet-in-the-Middle Barrier for Subset Balancing] \label{oq:mim-barrier2}
    Can Subset Balancing be solved in time $\Oh(|C|^{(0.5-\epsilon)n})$ for a constant $\epsilon > 0$?
\end{question}
These authors generalized the representation technique to achieve $\Oh(|C|^{(0.5-\epsilon)n})$-time algorithms for many coefficient sets in the \emph{average-case} setting, in which inputs are sampled uniformly at random from an exponential range. The natural next question is whether similar improvements can be achieved in the more challenging \emph{worst case}.

\subsection{ Our Results } 

We establish faster algorithms for Subset Balancing on almost every coefficient
set of the form $[-d:d]$ and $[\pm d]$. All of our algorithms are Monte Carlo
algorithms that never return false positives and whose success probability can
be amplified to $1-2^{-\Omega(n)}$ by repetition. Specifically, we present the following results:

\begin{description}
    \item \textbf{Result 1:} Worst-case algorithms for Subset Balancing on all
        coefficient sets $C = [-d:d]$, $d > 1$ that run in time
        $\Oh(|C|^{(0.5 - \epsilon)n})$ for a constant $\epsilon > 0$ that
        depends only on $d$ (\Cref{thm:main-with0}). 
\end{description}

Result 1 directly addresses~\cref{oq:mim-barrier2}. These are the first algorithms to break the Meet-in-the-Middle barrier on these coefficient sets in the worst case, and represent a similar accomplishment to \cite{chen2022average} in a strictly harder setting. This requires overcoming a technical hurdle: in the average case, solution representations ``mix well'' with high probability over the input, which allows for the representation method to be applied (\Cref{subsec:rep-method}). However, this condition does not hold in the worst case. We formalize the idea of a ``mixing dichotomy'' to address this problem (\Cref{subsec:mixing-dichotomies}).

\begin{description}
    \item\textbf{Result 2:} Worst-case algorithms for Subset Balancing on
        all coefficient sets $C = [\pm d]$, $d > 2$ that run in time
        $\Oh(|C|^{(0.5 - \epsilon)n})$ for a constant $\epsilon > 0$ that depends
        only on $d$ (\Cref{thm:main-without0}). 
\end{description}

These are likewise the first algorithms that break the Meet-in-the-Middle barrier on coefficient sets of the form $[\pm d]$ in the worst-case regime. This result demonstrates that we can achieve better algorithms for coefficient sets beyond contiguous intervals, including those on the pattern of Subset Sum ($C = [\pm 1]$). We separate this result from Result 1 because it requires additional insight: specifically, a more expansive notion of a solution representation based on sumsets (\Cref{subsec:coeff-shifting-techoverview}).

\begin{description}
    \item \textbf{Result 3:} A new worst-case algorithm for Equal Subset
        Sum (Subset Balancing on $C = [-1 : 1]$) that runs in time
        $\Oh(1.7067^n)$-time (\Cref{thm:ess}).
\end{description}

This improves upon the result of Mucha, Nederlof, Pawlewicz and W\k{e}grzycki~\cite{mucha2019equal}, who gave an $\Oh(1.7088^n)$-time algorithm. In addition to the above techniques, this result uses a generalization of the recently invented method of ``compatibility certificates'' (\cite{NederlofW21}) and represents the first time this technique has been used to improve the \emph{running time} of an algorithm as opposed to the space complexity.

Improving the exact algorithm for Equal Subset Sum has recently gained attention in the context of work on closely related problems such as Pigeonhole Equal Subset Sum \cite{DBLP:conf/esa/AllcockHJKS22,DBLP:conf/icalp/0001W24,jin2025new}. Result 3 demonstrates that additional improvements beyond \cite{mucha2019equal} are possible, positively answering an open question of \cite{jin2025new} (Section 5, Question 1). 

In summary, we improve upon the Meet-in-the-Middle algorithm for a wide variety of Subset Balancing problems, with two important exceptions: (1) the Subset Sum/Partition problem
itself and (2) Subset Balancing on $C = [\pm 2]$.\footnote{ Note that (1) would imply (2). Namely, to solve an instance $(a_1,\ldots,a_n)$ of Subset Balancing on $[\pm 2]$, we could use a faster than Meet-in-the-Middle algorithm for Subset Balancing on $[\pm 1]$ on an instance $(a_1/2,\ldots,a_n/2, 3a_1/2,...,3a_n/2)$.}
A pessimistic interpretation of our results might be that if (2) is difficult, additional insights are needed to beat Meet-in-the-Middle for Subset Sum. On the other hand, our results imply that the barrier-breaking result of \cite{mucha2019equal} for Equal Subset Sum is not specific to $C = [-1 : 1]$. An optimistic interpretation of our results is that they strengthen the circumstantial case that the Meet-in-the-Middle barrier can be broken for Subset Sum.

Our proofs adapt, generalize and draw inspiration from much of the scholarship on similar problems sketched in this introduction. They also require several new conceptual ingredients, overviewed in \Cref{sec:technical-overview}, that may find further application on similar problems.

\subsection{Related Work}
\label{subsec:related-work}

Subset Sum has seen numerous applications in
cryptography~\cite{crypto-merkle78}. In the low-density, average-case regime,
the problem can be solved in polynomial time~\cite{LO,Frieze,DBLP:journals/corr/abs-2408-16108}.

In the pseudopolynomial regime, potential improvements over the classic $O(nW)$-time dynamic programming algorithm are limited: we know that no $W^{1-\epsilon} \cdot \poly(n)$ algorithm is possible for any constant $\epsilon > 0$ assuming SETH~\cite{abboud2022seth}, where $W$ is the maximum absolute value of an input integer. Hence, researchers have focused on optimizing polynomial factors in $n$ (see
e.g.,~\cite{abboud2022seth,bringmann-soda,subset-sum-lower,sosa-wu,
koiliaris-soda,pseudopolynomial-polyspace}).

Subset Sum also appears more tractable if the input has significant additive structure. In \cite{austrin2015subset} and \cite{austrin2016dense}, Austrin, Kaski, Koivisto and Nederlof showed that Subset Sum instances with few subsets adding to any given integer (small ``bucket size'') or very many subsets adding to a certain integer (large ``bucket size'') could be solved in time $\Oh(2^{(0.5 - \eps)n})$.  Randolph and Węgrzycki showed that Subset Sum can be solved efficiently if the input set has a small \emph{doubling constant}, a condition that implies significant additive structure \cite{randolph2024parameterized}.

Still other recent work has made advancements related to space complexity, including better time-space trade-offs~\cite{dinur2012efficient,austrin2013space}, polynomial-space algorithms exponentially faster than $\Os(2^n)$ \cite{bansal,DBLP:conf/soda/ChenJWW22}, and algorithms that run in time $\Os(2^{0.5n})$ and use space $O(2^{(0.25 - \epsilon)n})$ \cite{NederlofW21,DBLP:conf/esa/BelovaCKM24}).

\subsection{Organization}

\begin{itemize}
    \item \Cref{sec:technical-overview} overviews the technical ingredients used in our results and the logical flow of our algorithms.
    \item \Cref{sec:prelims} summarizes notation and states a useful technical lemma related to the representation technique.
    \item \Cref{sec:SB-unbalanced} formally proves the folklore result that instances of Subset Balancing that are ``$\epsilon$-unbalanced'' can be solved in time $\Os(|C|^{(0.5 - \delta)n})$ for a constant $\delta = \delta(|C|, \epsilon) > 0$ that depends only on $|C|$ and $\epsilon$. It introduces several definitions (solution profiles, unbalanced instances) and assumptions that hold without loss of generality (solutions always exist, solution profiles can be guessed) that are used to simplify arguments in later sections.
    \item \Cref{sec:SB-with0} proves Result 1, our main result for coefficient sets of the form $C = [-d : d]$, $d > 1$.
    \item \Cref{sec:SB-without0} proves Result 2, our main result for coefficient sets of the form $C = [\pm d]$, $d > 2$.
    \item \Cref{sec:ess} proves Result 3, our main result for Equal Subset Sum ($C = [-1 : 1]$).
    \item \Cref{sec:compatibility-test} extends the ``compatibility certificate'' method, first used in \cite{NederlofW21}. We use this result as an ingredient to achieve the desired runtime for the algorithm presented in \Cref{sec:ess}.
\end{itemize}

\section{Technical Overview}
\label{sec:technical-overview}

In this section we overview the conceptual ingredients used in our results. One (the representation technique) is familiar but needs an introduction to provide context for the others. Two (mixing dichotomies and coefficient shifting) are original, and two (representations admitting pseudosolutions and compatibility certificates) generalize ideas implicit in existing work. All require careful adaptation to work in our setting.

We conclude with a subsection that overviews the logical structure of our results.

\subsection{The Representation Technique}
\label{subsec:rep-method}

The representation technique was introduced by Howgrave-Graham and
Joux in~\cite{howgrave2010new} and used by Mucha, Nederlof, Pawlewicz and Węgrzycki in their algorithm for Equal Subset Sum~\cite{mucha2019equal} and by Chen, Jin, Randolph and Servedio in their algorithms for average-case Subset Balancing~\cite{chen2022average}, among others. In this work, we extend the technique further to worst-case Subset Balancing.

The key idea behind the representation technique is to implicitly represent a
single solution vector $\vec{c} \in C^n$ as a set of \emph{solution pairs} $(\vec{a}, \vec{b})$ satisfying $\vec{a} - \vec{b} = \vec{c}$.\footnote{It is sometimes more convenient to define solution pairs so that they satisfy $\vec{a} + \vec{b} = \vec{c}$; this is equivalent.} Roughly speaking, the representation technique has three steps:

\begin{enumerate}
    \item \textbf{Redefine the Search Space. } Define a large ambient set $Y$ with $|Y| > |C|^{n/2}$ containing many vectors, including the individual members $\vec{a}$ and $\vec{b}$ of each solution pair $(\vec{a}, \vec{b})$.
    \item \textbf{Filter the Search Space.} Filter the ambient set to create a smaller set $Z \subset Y$ with $|Z| < |C|^{n/2}$ that is likely to contain at least one solution pair. This is typically done by preserving the subset of vectors $\vec{y} \in Y$ that satisfy
    \[
        \vec{y} \cdot \vec{x} \equiv \bm{r} \pmod{\bp}
    \]
    for a fixed Subset Balancing instance $\vec{x}$, a randomly chosen prime $\bp$ and a randomly chosen residue class $\bm{r} \in [\bp]$. This produces a set $Z$ of size approximately $|Y| / \bp$ with the helpful property that for every solution pair $(\vec{a}, \vec{b})$, 
    \[
        \vec{a} \cdot \vec{x} \equiv \bm{r}\pmod{\bp} \implies \vec{b} \cdot
        \vec{x} \equiv \bm{r}\pmod{\bp}
    \]
    because
    \[
        (\vec{a} - \vec{b}) \cdot \vec{x} = \vec{c} \cdot \vec{x} = 0
    \]
    by the definition of a solution pair.
    \item \textbf{Recover Solutions.} Search within $Z$ to recover a solution pair. 
\end{enumerate}

As an example, consider \cite{mucha2019equal}'s application of the representation technique to the case of $C = \{-1, 0, 1\}$ (Equal Subset Sum). These authors begin with the observation that the worst-case runtime of existing Meet-in-the-Middle-style approaches occurs when an Equal Subset Sum instance $\vec{x}$ admits only ``balanced'' solutions: that is, solution vectors $\vec{c}$ with approximately equal numbers of $0$'s, $1$'s, and $-1$'s ($|\vec{c}^{\,-1}(0)| \approx |\vec{c}^{\,-1}(1)| \approx |\vec{c}^{\,-1}(-1)| \approx n/3$).

In this case, the set of candidate solution pairs $\{0, 1\}^n \times \{0, 1\}^n$ admits many pairs $(\vec{a}, \vec{b})$ satisfying $\vec{a} - \vec{b} = \vec{c}$ for any fixed solution vector $\vec{c}$. In particular, if an Equal Subset Sum instance $\vec{x}$ admits a solution $\vec{c}$ in which $n/3$ indices are 0, the number of solution pairs is $2^{n/3}$, as for every index $i$ with $\vec{c}_i = 0$ we can choose $(\vec{a}_i,\vec{b}_i)$ to be either $(0,0)$ or $(1,1)$. Accordingly \cite{mucha2019equal} filter the set of candidate solution pairs by choosing a random prime $\bp \approx 2^{n/3}$ and a random residue class $\bm{r} \in [\bp]$ and enumerating the set
\begin{align*}
    \calS \coloneqq & \{\vec{v} \in \{0, 1\}^n \mid \vec{v} \cdot \vec{x} \equiv \bm{r} \pmod{\bm{p}}\}.
\end{align*}
This choice of $\bp$ ensures both that $\calS$ contains a solution pair $(\vec{a}, \vec{b})$ satisfying $\vec{a} - \vec{b} = \vec{c}$ with sufficiently high probability and that
\[
    |\calS| \approx \frac{|\{0, 1\}^n|}{\bp} \approx 2^{2n/3},
\]
so this pair can be recovered in time $\Os(2^{2n/3})$ by enumerating and
searching $\calS$ in output-linear time. We omit some details, such as handling instances with ``unbalanced'' solution vectors (see \Cref{sec:SB-unbalanced}) and the reason why $\{0, 1\}^n \cdot \vec{x}$ can be assumed to produce $2^n$ distinct dot products (implicitly, a simple mixing dichotomy).

\subsection{ Mixing Dichotomies }
\label{subsec:mixing-dichotomies}

Crucially, the search space filtering step of the representation technique requires that
solution pairs fall into many different residue classes modulo $\bp$ when the
dot product is taken with an input vector $\vec{x}$. Otherwise, a residue class chosen at random
is unlikely to contain at least one solution pair. This occurs with high probability as long as solution pairs $(\vec{a}, \vec{b})$ create many different dot products $\vec{a} \cdot \vec{x}$, an event referred to as ``mixing well'' with respect to $\vec{x}$. When input numbers are drawn uniformly at random from a large enough range, solution vectors mix well with high probability over the random draw of the input, a fact
that enabled the first successful average-case algorithms based on the representation technique \cite{howgrave2010new,becker2011improved}.

However, solution pairs do not always mix well with respect to worst-case instances of Subset Balancing. To make the representation technique work in these cases, we will prove \emph{mixing dichotomies}: statements of the form ``if solution pairs do \emph{not} mix well with respect to a Subset Balancing instance $\vec{x}$, $\vec{x}$ has some other useful property (that allows us to break the Meet-in-the-Middle barrier using a different approach)''.

We prove the following mixing dichotomies:
\begin{itemize}
    \item \Cref{lem:perf-mixing}: if solution pairs do not mix well with respect to a Subset Balancing instance on a coefficient set $C = [-d : d]$, the instance admits an ``unbalanced'' solution (solution in which the frequency of coefficients deviates from uniform by $\Omega(n)$).
    \item \Cref{lem:perf-mixing2}: if solution pairs do not mix well with respect to a Subset Balancing instance on a coefficient set $C$ that can be ``coefficient-shifted'' (see the next subsection), the instance admits exponentially many distinct solutions.
    \item \Cref{lem:ess-perfect-mixing}: if our set of ``good solution pairs'' (a definition specific to our algorithm for Equal Subset Sum) do not mix well with respect to an ESS instance, the instance admits an ``unbalanced'' solution.
\end{itemize}

\subsection{ Coefficient Shifting: More Flexible Representations }
\label{subsec:coeff-shifting-techoverview}

When our coefficient set $C$ is a contiguous interval, it is relatively straightforward to represent a solution vector in $C^n$ as a solution pair. For example, a solution vector $\vec{c} \in [-1 : 1]^n$ can be represented in many ways as the difference of two vectors $\vec{a}, \vec{b} \in \{0, 1\}^n$ (as in \cite{mucha2019equal}). Likewise a solution vector $\vec{c} \in \{0, 1\}^n$ can be represented in many ways as a \emph{sum} of two vectors in $\{0, 1\}^n$, each with approximately half of the ``$1$'' indices in $\vec{c}$ (as in \cite{howgrave2010new}). $\vec{c} \in \{0, 1\}^n$ can even be represented as a sum of two vectors in $[-1 : 1]^n$, as long as the ``$-1$'' indices are carefully managed (as in \cite{becker2011improved}). 

However, when we generalize Subset Sum we face coefficient sets of the form $C = [\pm
d]$ (without the $0$ coefficient). Designing algorithms appears more challenging for these coefficient sets. For instance, breaking the Meet-in-the-Middle
barrier for $C = [-1 : 1]$ is comparatively straightforward, while breaking the
Meet-in-the-Middle barrier for $C = [\pm 1]$ remains a notorious open question. Chen, Jin, Randolph and Servedio also noted several difficulties that arise when handling coefficient sets of the form $[\pm d]$ that result from the lack of an $0$ coefficient \cite{chen2022average}. 

To address these issues we introduce the technique of \emph{coefficient shifting}, which effectively represents a coefficient set $C$ as the \emph{sumset} of two distinct and often differently-sized ``coefficient set factors'' $C_1$ and $C_2$. 
For example, the equation
\[
    \{0, 1\} + \{-3, -2, 1, 2\} = [\pm 3]
\]
allows us to design algorithms that represent solution vectors in $C^n = [\pm 3]^n$ as solution pairs chosen from $\{0, 1\}^n$ and $\{-3, -2, 1, 2\}^n$. Additionally, and crucially, it also creates multiple representations for $+2$ and $-2$ (as $0 + (-2)$ and $1+(-3)$, $0+2$ and $1 + 1$ respectively). \Cref{subsec:coeff-shifting} introduces the approach in more detail and subsequently applies it to break the Meet-in-the-Middle barrier for coefficient sets of the form $C = [\pm d], d > 2$.

\subsection{ Representations Admitting Pseudosolutions }
\label{subsec:to-reps-admitting-pseudosolutions}

One way to speed up the representation technique is to increase the number
of solution pairs by making the representation more flexible: specifically,
increasing the number of coefficients used to represent solution pairs. The idea
is to increase the \emph{ratio} of solution pairs to search space, ultimately
resulting in a smaller filtered search space. This idea originates from
\cite{becker2011improved}, who applied representations admitting pseudosolutions
to improve on \cite{howgrave2010new}'s algorithm for average-case Subset Sum.

In our approach to Equal Subset Sum, instead of considering solution pairs chosen from $\{0, 1\}^n$, we will enumerate the set
\begin{align*}
    \calS \coloneqq & \{\vec{v} \in [0:2]^n \mid |\vec{v}^{\,-1}(1)| = n/2, |\vec{v}^{\,-1}(2)| = \epsilon n, \vec{v} \cdot \vec{x} \equiv \bm{r} \pmod{\bm{p}}\}
\end{align*}
for a constant $\eps > 0$, a large random prime $\bp$ and a random residue class $\bm{r} \in [\bp]$. 

When $\eps$ is chosen carefully, $|\calS|$ promises to be small enough to
achieve a runtime improvement over \cite{mucha2019equal}, who used
solution pairs chosen from $\{0, 1\}^n$. However, there is a problem. Because
solution pairs contain ``2'' indices, it is not true that $\vec{a} - \vec{b} \in
[-1:1]^n$ for every $(\vec{a}, \vec{b}) \in \calS \times \calS$, even if $(\vec{a} -
\vec{b}) \cdot \vec{x} = 0$. Thus, in order to solve the problem, we need to
search $\calS$ for a ``compatible'' solution pair $(\vec{a}, \vec{b})$ with $\vec{a} - \vec{b} \in [-1:1]^n$. We consider a strategy for solving this problem quickly in the next subsection.

\subsection{ Compatibility Certificates for Coefficient Vectors }

A persistent challenge when designing an algorithm using the representation technique is
the specter of ``pseudo-solutions''. Often, choosing a set of solution pairs
that allow the search space to be efficiently filtered results in pairs of vectors that appear to solve the problem but create invalid coefficients. 
For example, if a solution vector $\vec{c} \in \{-1, 0, 1\}^n$ is represented by
pairs of vectors in $\{0, 1, 2\}^n$, the solution recovery step might erroneously return a pair $(\vec{a}, \vec{b})$ satisfying 
\[
    (\vec{a} - \vec{b}) \cdot \vec{x} = \vec{c} \cdot \vec{x} = 0,
\]
but with $\vec{a} - \vec{b}$ containing ``2'' indices and thus an invalid solution.

Such \emph{pseudosolution pairs} can be easily ruled out but naively the
process takes quadratic time in the length of the list(s) containing pseudosolutions. The problem boils down to answering the following question: ``Given two vector lists $\inputA$ and $\inputB$ and a coefficient set $C$, does there exist a vector pair $(\vec{a}, \vec{b}) \in \inputA \times \inputB$ such that $\vec{a} - \vec{b} \in C^n$?'' We prove the following result, which shows that this problem can be solved in near-linear time in our setting.

{
\renewcommand{\thetheorem}{\ref*{thm:compatibility}}
\begin{theorem}\label{thm:compatibility}
    Fix a constant $0 \le \eps \le 1/4$ and let $\inputA,\inputB \subseteq [0:2]^d$ be two sets of vectors such that for every $\vec{a} \in \inputA$ and $\vec{b} \in \inputB$ it holds that
    \begin{align*}
        |\vec{a}^{\,-1}(2)| = |\vec{b}^{\,-1}(2)| = \eps d \text{ and } |\vec{a}^{\,-1}(1)| =
        |\vec{b}^{\,-1}(1)| = d/2.
    \end{align*}
    There exists an algorithm that recovers $(\vec{a}, \vec{b}) \in \inputA \times \inputB$ such that $\vec{a}-\vec{b} \in [-1 : 1]^d$ with high probability,
    if such a pair exists, and runs in time $\Oh\left( 2^{c(\eps) d} \cdot \left(|\inputA| + |\inputB| \right)\right)$, where
    \begin{equation}
        \label{eq:c-certificate}
        c(\eps) \coloneqq
        (1-\eps) \cdot 
        H\left( \frac{\eps}{1-\eps} \right) +
        \left(1/2+\eps\right) \cdot H\left(
            \frac{4\eps}{1+2\eps}
        \right) - H(2\eps) +
     o(1).
    \end{equation}
\end{theorem}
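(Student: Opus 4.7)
The plan is to reduce the bipartite compatibility search to a near-linear scan over the two input lists, paying only a factor $2^{c(\eps)d}$ overhead per element. The central structural observation is that compatibility $\vec{a}-\vec{b}\in[-1:1]^d$ fails only at coordinates $i$ where $\{a_i,b_i\}=\{0,2\}$; equivalently, the 2-support $T_{\vec{a}}:=\vec{a}^{\,-1}(2)$ must avoid the 0-support $Z_{\vec{b}}:=\vec{b}^{\,-1}(0)$, and symmetrically. Because $|T_{\vec{a}}|=|T_{\vec{b}}|=\eps d$, the union $T_{\vec{a}}\cup T_{\vec{b}}$ has size at most $2\eps d$ and, for a compatible pair, must lie inside the non-zero region $\vec{a}^{\,-1}(\{1,2\})$ (and likewise for $\vec{b}$), which has size $(1/2+\eps)d$. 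These three cardinalities $2\eps d$, $\eps d$, and $(1/2+\eps)d$ are exactly the parameters controlling the binomial coefficients $\binom{d}{2\eps d}$, $\binom{(1-\eps)d}{\eps d}$, and $\binom{(1/2+\eps)d}{2\eps d}$ whose normalized logarithms appear in $c(\eps)$.

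First I would sample a uniformly random $U\subseteq[d]$ of size $2\eps d$---intended to cover $T_{\vec{a}}\cup T_{\vec{b}}$---together with a uniformly random partition $U=V\sqcup(U\setminus V)$ with $|V|=\eps d$, intended to identify $V=T_{\vec{a}}$ and $U\setminus V=T_{\vec{b}}$. Retain $\inputA':=\{\vec{a}\in\inputA:T_{\vec{a}}=V\}$ and $\inputB':=\{\vec{b}\in\inputB:T_{\vec{b}}=U\setminus V\}$. For any fixed compatible pair whose 2-supports are disjoint, this event occurs with probability $1/\bigl(\binom{d}{2\eps d}\binom{2\eps d}{\eps d}\bigr)$; pairs with overlapping 2-supports would be handled by a separate conditioning on $|T_{\vec{a}}\cap T_{\vec{b}}|$. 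Once $T_{\vec{a}}$ and $T_{\vec{b}}$ are pinned by the color code, compatibility on coordinates outside $U$ reduces to a pointwise constraint that can be resolved by bucketing both lists by a short residual signature (on the coordinates outside $U$) and merging in output-linear time.

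The runtime analysis then charges each $\vec{a}\in\inputA$ a cost equal to the number of retained color codes $(U,V)$ for which $U\setminus V$ fits inside $\vec{a}$'s non-zero region; using Stirling's approximation on the relevant binomial quotient $\binom{(1-\eps)d}{\eps d}\binom{(1/2+\eps)d}{2\eps d}/\binom{d}{2\eps d}$, this collapses to $2^{c(\eps)d+o(d)}$ per element, yielding total work $\Oh(2^{c(\eps)d}(|\inputA|+|\inputB|))$. Standard amplification by independent repetition boosts the success probability to the claimed high probability bound. The main obstacle I expect is controlling the number of pseudo-matches---pairs that agree on the residual signature but fail global compatibility---so that the total bookkeeping stays output-linear rather than quadratic in the filtered list sizes; this likely requires a carefully chosen signature whose inverse-image size is dominated by the $2^{c(\eps)d}$ budget. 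The hypothesis $\eps\le 1/4$ is essential: it guarantees $2\eps d\le (1/2+\eps)d$, so that the union of 2-supports can physically fit inside the non-zero region of either vector, and it keeps all binomial coefficients in $c(\eps)$ well-defined.
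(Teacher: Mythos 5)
Your structural observation—compatibility fails only where $\{a_i,b_i\}=\{0,2\}$, so each vector's 2-support must avoid the other's 0-support—is correct, and you have correctly identified the quotient of binomial coefficients whose normalized logarithm equals $c(\eps)$. But the algorithm you describe does not achieve this bound. Pinning the 2-supports \emph{exactly} ($T_{\vec{a}}=V$, $T_{\vec{b}}=U\setminus V$) is too rigid. Work it out either way: as a repeated-trial scheme, the per-trial capture probability for a fixed compatible pair with disjoint 2-supports is $1/\bigl(\binom{d}{2\eps d}\binom{2\eps d}{\eps d}\bigr) = 2^{-(H(2\eps)+2\eps)d+o(d)}$, so the number of trials is far larger than $2^{c(\eps)d}$; and even if you bucket each vector into \emph{all} of its valid color codes $(U,V)$ (pure color coding with a precomputed hash table), each $\vec{a}$ lands in exactly $\binom{d/2}{\eps d}=2^{(1/2)H(2\eps)d+o(d)}$ buckets. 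One can check (e.g.\ by the small-$\eps$ expansion, or numerically near the paper's optimal $\eps\approx 0.045$) that $(1/2)H(2\eps) > c(\eps)$, with a gap of roughly $\eps$ in the exponent; so the per-element charge $\binom{(1-\eps)d}{\eps d}\binom{(1/2+\eps)d}{2\eps d}/\binom{d}{2\eps d}$ that you claim does not actually emerge from the scheme you describe. (Incidentally, after exact pinning there is no constraint on coordinates outside $U$; the residual constraint is that $\vec{a}$ is identically $1$ on $U\setminus V$ and $\vec{b}$ identically $1$ on $V$, which are constraints \emph{inside} $U$.)

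The $2^{\eps d}$-ish factor you're missing is exactly what the paper's relaxed certificates buy. The paper works with pairs $(L,R)$ of size $\lambda d$ with $\lambda = 2\eps$, required only to \emph{contain} the 2-support and \emph{avoid} the 0-support ($\vec{a}^{\,-1}(2)\subseteq L$, $\vec{a}^{\,-1}(0)\cap R=\emptyset$)—not to equal anything. Because of this slack, a single $(L,R)$ certifies many pairs simultaneously, and the per-vector cost becomes $\binom{(1-\eps)d}{\eps d}\binom{(1/2+\eps)d}{2\eps d}\cdot 2^{-H(2\eps)d}$ rather than $\binom{d/2}{\eps d}$. Two further ingredients you don't have and would need: (i) the block decomposition $[d]=U_1\sqcup\cdots\sqcup U_\ell$ for a large constant $\ell$, so that a single certificate family $\pCert_i$ of size $2^{K|U_i|}$ per block already contains a matching partial certificate with constant probability per block, rather than requiring exponentially many independent trials over the whole coordinate set; and (ii) a Markov-inequality argument (the paper's Lemma~\ref{lem:ca}) to show that, with constant probability, no vector sees more than $2^{c_0|U_i|}$ matching partial certificates per block, so the total per-element work is bounded. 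Without these, your construction is either too slow (repeated trials) or too coarse (exact pinning, losing a $\approx 2^{\eps d}$ factor). As a small additional note, your stated reason for $\eps\le 1/4$ ($2\eps d \le (1/2+\eps)d$) in fact only requires $\eps\le 1/2$.
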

\addtocounter{theorem}{-1}
    
}


We combine~\cref{thm:compatibility} with an approach for Equal Subset Sum that depends on representations admitting pseudosolutions (\Cref{subsec:to-reps-admitting-pseudosolutions}). Importantly, the $2^{c(\epsilon) n}$ factor is offset by the increase in speed gained from the more flexible representations, and the result is a faster algorithm.

To prove~\cref{thm:compatibility}, we build off a technique for solving sparse instances of Orthogonal Vectors due to~\cite{NederlofW21}. The idea is to sample \emph{compatibility certificates}, sets of indices that certify that a pair of vectors is compatible. In our case, a pair $(\vec{a}, \vec{b})$ could be certified by two sets of indices $L,R \subset [n]$ satisfying 
\[
    \vec{a}^{\,-1}(2) \subseteq L \subseteq \vec{b}^{\,-1}(1) \text{ and } \vec{b}^{\,-1}(2) \subseteq R \subseteq \vec{a}^{\,-1}(1)
\]
a property that guarantees $\vec{a}$ and $\vec{b}$ do not share a ``0'' and ``2'' at the same index and hence that $\vec{a} - \vec{b} \in [-1:1]^n$. (Although it is possible to achieve a result similar to \Cref{thm:compatibility} with a simpler bucketing approach, we are not aware of a simpler method that achieves a strong enough dependence on $\epsilon$ to improve on \cite{mucha2019equal}.)


\subsection{Combining The Techniques}



\begin{figure}[h!]
    \centering
    \includegraphics{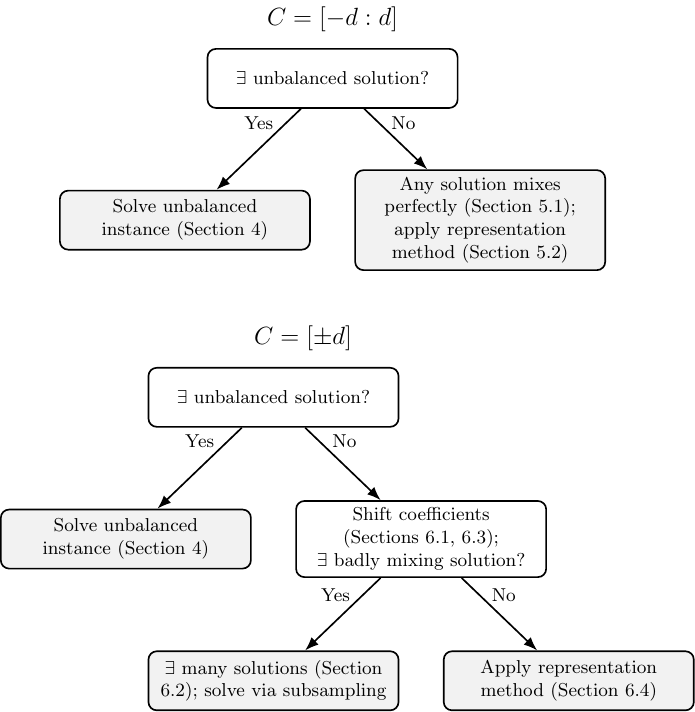}
    
    \caption{High-level overview of the logical flow of our algorithms.}
    \label{fig:logical-flow}
\end{figure}

We conclude the technical overview with a high-level summary of the logical flow of our algorithms (see \Cref{fig:logical-flow} for reference).

For any coefficient set $C$, we can efficiently guess the frequency of each coefficient $c \in C$ and choose an algorithmic strategy that searches for solutions matching this ``solution profile''. If the solution profile is unbalanced, we can apply a folklore variant of Meet-in-the-Middle to search for a matching solution (\Cref{sec:SB-unbalanced}).

If the solution profile is approximately balanced, we take a different approach according to $C$. For $C = [-d : d]$, $d > 1$, we apply a mixing dichotomy to conclude that if no unbalanced solution exists, any approximately balanced solution must mix perfectly (\Cref{subsec:perf-mixing-dichotomy}). We then apply a generalization of the representation method (\Cref{subsec:sb-repmethod-with0}). 

For $C = [\pm d]$, $d > 2$, we choose a pair of coefficient set factors (\Cref{subsec:coeff-shifting,subsec:num-soln-pairs-without0}) and apply a different mixing dichotomy to conclude that if any solution mixes poorly, there must exist exponentially many solutions; we can solve this case via a subsampling approach (\Cref{subsec:sb-without0-mixing}). Otherwise, any solution must mix well, in which case we use a generalized representation method approach (\Cref{subsec:sb-without0-repmethod}).

Finally, if $C = [-1 : 1]$, the approach is similar to other coefficient sets of the form $C = [-d : d]$. However, in this case we choose a solution representation that creates pseudosolutions in order to improve runtime (\Cref{subsec:ess-goodpairs}); this requires us to use the compatibility certificates method (\Cref{sec:compatibility-test}) and a special mixing dichotomy.

\paragraph{Further Runtime Improvements. }

In this work we focus on demonstrating that the Meet-in-the-Middle barrier can be broken for Subset Balancing on many coefficient sets, not on optimizing constant factors in the exponent of the running time. We have attempted to present our framework in a modular way so that the various parts can be adapted and extended, as integrating the various techniques more tightly might lead to additional improvements in running time. For example, we believe it is possible to extend the compatibility certificate methods we apply to the $C = [-1:1]$ case to further improve runtimes on $C = [-d:d]$ and $C = [\pm d]$. Likewise, it may be possible to apply the idea from \Cref{lem:perf-mixing2} to sharpen the runtime of the algorithm for $C = [\pm d], d > 1$: if there are exponentially many collisions between solution pairs corresponding to a nearly balanced solution, this should imply the existence of exponentially many \emph{unbalanced} solutions, a case which could be solved faster than instances with one unbalanced solution or with many balanced solutions. As the calculations are somewhat involved, we leave these optimizations for future work.

\section{Preliminaries}
\label{sec:prelims}

\textbf{Big-$O$ Notation.} We use standard big-$O$ notation with the addition of $O^*$, $\Omega^*$, and $\Theta^*$, which suppress polynomial functions of $n$. For example, $O^*(2^n) \leq 2^n \cdot n^{O(1)}$ and $\Omega^*(2^n) \geq 2^n \cdot n^{-O(1)}$. We use $\Theta^*(f(n))$ to indicate $O^*(f(n)) \cap \Omega^*(f(n))$.

\vspace{0.3cm}
\noindent
\textbf{Probability.} Variables written in \textbf{bold face} are random.\footnote{We use this convention to emphasize variables that are sampled directly but do not apply it everywhere for the sake of readability.
} Given a set $S$, we write $\bm{x} \sim S$ to indicate that $\bm{x}$ is an element sampled uniformly at random from $S$.

Our algorithms run in exponential time and admit one-sided error (i.e., no false positives). If such an algorithm recovers a solution with inverse polynomial probability (probability $n^{-O(1)}$), we can amplify the probability of correctness to a constant, and then further to $1 - e^{-\poly(n)}$, by repeating the algorithm $\poly(n)$ times. Amplifying the success probability in this way does not affect the runtime of our algorithms as written because the runtime statements use $\Os$ notation. When an algorithm satisfies this guarantee on correctness, we say it is correct ``with high probability.''

\vspace{0.3cm}
\noindent
\textbf{Sets.} We write $[n]$ for the integer set $\{1, 2, \dots, n\}$ and
$[a:b]$ (with $a \leq b$) for the integer set $[a, a+1, a+2, \dots, b]$. Because we will frequently refer to sets of the form $[-d : d] \setminus \{0\}$, we abbreviate this to $[\pm d]$. 

\vspace{0.3cm}
\noindent
\textbf{Set Operations.} Given integer sets $X$ and $Y$, we write:
\begin{itemize}
    \item $X + Y$ for the sumset $\{ a + b \; | \; a \in X, b \in Y\}$,
    \item $X \pmod{p}$ for $\{x \pmod{p} \; | \; x \in X\}$,
    \item $X = A \sqcup B$ to indicate that $A$ and $B$ are disjoint sets that partition $X$, and
    \item $\diam(X)$ for the diameter $\max_{x_1, x_2 \in X} |x_1 - x_2|$ of $X$.
\end{itemize}
We follow the general convention that for a function $f$ defined on the integers, $f(X)$ indicates the set obtained by applying $f$ element-wise to $X$.

\vspace{0.3cm}
\noindent
\textbf{Vectors.} Given an integer vector $\vec{x} \in \Z^d$ and integer $\alpha$, we write $\vec{x}^{\,-1}(\alpha)$ to denote the set containing each index $i \in [d]$ at which $\vec{x}_i = \alpha$.

Given a set of indices $I \subseteq [d]$, $\vec{v}_I$ denotes the $|I|$-dimensional restriction of $\vec{v}$ to the indices of $I$.

\vspace{0.3cm}
\noindent
\textbf{Memory Model.}
We make the assumption that standard operations on input integers (addition, multiplication, comparison, etc.) can be performed in constant time. For inputs of size $2^{n^{O(1)}}$, this can easily be reconciled with a more realistic RAM model: the running time of any operation that takes time $\poly(n)$ to execute on $O(n)$-bit integers is then absorbed by the $O^*$ notation.

\vspace{0.3cm}
\noindent
\textbf{Input Size.} We make the assumption that the integer inputs to Subset Balancing problems are bounded by $2^{n^{O(1)}}$. This is justified on both practical and theoretical grounds: first, inputs of size $2^{n^{\omega(1)}}$ require superpolynomial time to read and manipulate in any realistic memory model. Second, standard self-reductions based on hashing allow instances of Subset Balancing with extremely large integers to be effectively reduced to instances with integers of size $2^{O(n)}$ (see, e.g., \cite{randolph2024exact}, Section 2.3.4). 

\vspace{0.3cm}
\noindent
\textbf{Stirling's Approximation.} 
We use the following well-known consequence of Stirling's approximation to simplify binomial coefficients. For any constant $\alpha \in (0, 1)$, we have
\begin{equation}
    \label{eq:stirling}
    \binom{n}{\alpha n} = \Theta\left(n^{-1/2} \cdot 2^{H(\alpha) n}\right),
\end{equation}
where $H$ denotes the (binary) entropy function. A similar bound holds for
multinomial coefficients: given constants $\alpha_1,\alpha_2, \dots, \alpha_m$ summing to 1, we have
\begin{equation}
    \label{eq:stirling-multinomial}
    \binom{n}{\alpha_1 n, \; \alpha_2 n, \; \dots, \; \alpha_m n} =
    \Theta^*\left(2^{H(\alpha_1, \alpha_2, \dots, \alpha_m)n}\right),
\end{equation}
where in this case $H$ denotes the $m$-argument entropy function.

\subsection{ A Prime Distribution Lemma}
\label{subsec:helpful-lemmas}

During the filtering step of the representation technique (see \Cref{subsec:rep-method}), we will seek to recover at least one of a ``good set'' of integers $G$ hidden within a larger ``ambient set'' of integers $Y$. To do so, we will hash the elements of $Y$ modulo a large, randomly chosen prime $\bp$. Our intention is that both $G$ and $Y$ behave much as if we had randomly subsampled the elements of $Y$ with probability $1/\bp$: we want many residue classes $r \in [\bp]$ to contain approximately $|G| / \bp$ elements of $G$ and $|Y| / \bp$ elements of $Y$.

The following technical lemma makes this precise.

\begin{restatable}{lemma}{primedist}
    Fix an integer set $G$ with diameter $\diam(G) < 2^{n^{c}}$ for some constant $c > 0$. Let $Y \supseteq G$ be a superset of $G$, let $p_{max} \leq O(|G|)$
    be an integer bound of size $2^{\Omega(n)}$, and let $\bp \sim [p_{max} : 2p_{max}]$ be a prime. With constant probability, the set of ``good residue classes''
    \[
        \bm{R} \coloneqq \left\{ r \in [\bp] \;  \middle| \; 
            \left|\{g \in G \; | \; g \equiv r \pmod{\bp}\}\right| \geq
            \frac{|G|}{4p_{max}}, \; |\{y \in Y \; | \; y \equiv r \pmod{\bp}\}| \leq \frac{|Y|}{p_{max}} n^{c+1}
        \right\}
    \]
    has cardinality $|\bm{R}| > p_{max} \cdot n^{-(c+1)}$.
    \label{lem:prime-dist-main}
\end{restatable}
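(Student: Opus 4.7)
The plan is to bound the residue classes violating each of the two conditions separately and show that the combined defect leaves at least $p_{max} \cdot n^{-(c+1)}$ good residues. Condition 2 (upper bound on the $Y$-bucket) can be handled deterministically for every prime $\bp \in [p_{max} : 2p_{max}]$: since the total mass $\sum_{r \in [\bp]} |\{y \in Y : y \equiv r \pmod{\bp}\}|$ equals $|Y|$, a straightforward Markov-style counting argument shows that at most $p_{max}/n^{c+1}$ residue classes can exceed the threshold $|Y| n^{c+1}/p_{max}$.

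For condition 1 (lower bound on the $G$-bucket) I would use a second-moment argument driven by the number of modular collisions in $G$ under a random prime. Writing $N_r$ for the number of elements of $G$ falling in residue class $r$ modulo $\bp$, the key ingredient is a collision probability bound: since $\diam(G) < 2^{n^c}$, any nonzero difference $g_1 - g_2$ has magnitude less than $2^{n^c}$ and hence at most $\log_{p_{max}}(2^{n^c}) = O(n^{c-1})$ distinct prime divisors of size at least $p_{max}$ (using $p_{max} = 2^{\Omega(n)}$). By the Prime Number Theorem, the number of primes in $[p_{max} : 2p_{max}]$ is $\Theta(p_{max}/n)$, so the random prime $\bp$ divides any fixed nonzero difference with probability $O(n^c/p_{max})$. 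Summing over ordered pairs in $G$, the expected number of collisions is $O(|G|^2 n^c / p_{max})$, and by Markov's inequality with probability at least $1/2$ we obtain $\sum_r N_r^2 = O(|G|^2 n^c / p_{max})$ (the diagonal term $|G|$ is absorbed using $p_{max} \leq O(|G|)$).

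Conditioned on this second-moment bound, I would extract the lower bound on $|\bm{R}|$ by a Cauchy-Schwarz argument. Let $T$ denote the set of heavy residues, i.e., those with $N_r \geq |G|/(4p_{max})$; the residues outside $T$ collectively contribute at most $\bp \cdot |G|/(4p_{max}) \leq |G|/2$ to the total mass $|G|$, so $\sum_{r \in T} N_r \geq |G|/2$. Cauchy-Schwarz then gives $(|G|/2)^2 \leq |T| \cdot \sum_r N_r^2$, yielding $|T| = \Omega(p_{max}/n^c)$. Subtracting the at most $p_{max}/n^{c+1}$ residues that violate condition 2 leaves $|\bm{R}| = \Omega(p_{max}/n^c) - p_{max}/n^{c+1} > p_{max} \cdot n^{-(c+1)}$ for sufficiently large $n$, as required.

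The main technical step is the collision probability bound: it requires carefully combining the elementary fact that an integer of magnitude less than $2^{n^c}$ has at most $\log_{p_{max}}(2^{n^c})$ prime divisors of size at least $p_{max}$ with a PNT-based count of primes in the range, and then checking that the Markov and Cauchy-Schwarz chain produces constants compatible with the $n^{-(c+1)}$ slack in the lemma statement. The rest of the argument is routine bookkeeping.
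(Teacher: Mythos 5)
Your argument is correct and takes essentially the same route as the paper: bound the expected number of modular collisions in $G$ via a random-prime divisibility estimate, apply Markov plus Cauchy--Schwarz to lower-bound the number of $G$-heavy residue classes, and discard the $Y$-heavy ones. Two small remarks: (a) your deterministic, global Markov count for the $Y$-condition (at most $p_{max}/n^{c+1}$ residues can exceed the threshold for any prime $\bp$) is a clean simplification of the paper's version, which first lower-bounds $|\bm{R}'|$ and then averages the $Y$-mass over a random class inside $\bm{R}'$; and (b) your intermediate claim of $\Theta(p_{max}/n)$ primes in $[p_{max}:2p_{max}]$ is exact only when $\log p_{max} = \Theta(n)$, but the $\log p_{max}$ factor cancels against the corresponding factor in your bound on the number of large prime divisors of a difference, so the final collision probability $O(n^c/p_{max})$ is correct regardless.
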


We prove \Cref{lem:prime-dist-main} in \Cref{apx:prime-dist-main}. In simple terms, a ``good residue class'' is one that contains $\Omega(|G| / \bp)$ elements of $G$ and $O^*(|Y| / \bp)$ elements of $Y$. \Cref{lem:prime-dist-main} says that with constant probability at least an $(n^{-O(1)})$-fraction of all residue classes are ``good''.

We will apply this lemma to coefficient-weighted inputs to the Subset Balancing problem. $G$ and $Y$ will be lists of exponential size containing sums representing coefficient-weighted partial solutions. We are only interested in cases in which $p_{max} = 2^{\Omega(n)}$; otherwise, this hashing procedure cannot give us exponential time savings. The requirement that $\diam(G) < 2^{n^c}$ for some constant $c$ follows from our input size assumption. The final requirement is that $p_{max} \leq O(|G|)$; often, we will set $p_{max} = \Theta(|G|)$ to optimize runtime.

\section{Folklore Results on Unbalanced Instances}
\label{sec:SB-unbalanced}

If a Subset Balancing instance admits a solution vector $\vec{c}$ that is ``unbalanced'' in the sense that some coefficients are more common than others, the instance can be solved faster than $|C|^{0.5n}$ using a fairly straightforward modification of Meet-in-the-Middle. This result is folklore: versions for Subset Sum and Equal Subset Sum are included in previous works (e.g., \cite{austrin2016dense,mucha2019equal,NederlofW21}) and the generalization to Subset Balancing is implicit in others (e.g., \cite{chen2022average}). We provide a formal proof because the result is a subroutine for our later algorithms and because the proof presents an opportunity to introduce several useful definitions.

\begin{definition}[Solution Profile]
    Fix a coefficient set $C$ and let $\vec{c} \in C^n$ be a coefficient vector. The \emph{solution profile} $\pi: C \rightarrow \N$ of $\vec{c}$ is the function
    \[
        \pi(z) = |\vec{c}^{\,-1}(z)|
    \]
    for each coefficient $z \in C$.
\end{definition}

In other words, $\pi$ counts the number of occurrences of each coefficient in the vector $\vec{c}$. There are only $O(n^{|C|})$ possible solution profiles for coefficient vectors of length $n$ supported on any fixed coefficient set $C$. Therefore, because our Subset Balancing algorithms will not return false positives we can effectively ``guess'' the correct solution profile by separately attempting to recover solutions matching each possible $\pi$. This will increase the overall runtime of our algorithms by a $\poly(n)$ factor that will ultimately be absorbed in the $\Os$ notation.

\begin{definition}[$\epsilon$-unbalanced solutions]
    We say that a solution vector $\vec{c} \in C^n$ with solution profile $\pi$ is $\epsilon$\emph{-unbalanced} if
    \[
        \left|\pi(c) - \frac{n}{|C|}\right| > \epsilon n
    \]
    for at least one coefficient $c \in C$.

    We refer to a solution vector that is \emph{not} $\epsilon$-unbalanced as $\epsilon$\emph{-balanced}, and to a solution vector that is $0$-balanced as \emph{perfectly balanced}.
\end{definition}






\begin{lemma}[Subset Balancing on $\epsilon$-Unbalanced Instances]\label{lem:unbalanced}
     For every constant $\eps > 0$ there exists $\delta > 0$ and an algorithm that solves Subset Balancing instances with $\eps$-unbalanced solutions, with high probability, in time
     \[
        \Os\left(|C|^{(1/2 - \delta)n}\right).
     \]
\end{lemma}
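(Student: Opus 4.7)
The plan is to guess the solution profile $\pi$ of the unbalanced solution and then run a Meet-in-the-Middle in which each half-list is restricted to vectors matching a prescribed half-profile. Since there are only $O(n^{|C|}) = \poly(n)$ solution profiles, we iterate over them and focus on a fixed $\epsilon$-unbalanced $\pi$; by hypothesis there exists $c^\star \in C$ with $|\pi(c^\star) - n/|C|| > \epsilon n$.

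To get meaningful shrinkage on each half, first apply a uniformly random permutation $\sigma$ to the coordinates and split into $I_L \coloneqq \sigma([n/2])$ and $I_R \coloneqq \sigma([n/2+1 : n])$. The planted solution $\vec{c}$ still has profile $\pi$, and under a random split the induced half-profile $\pi_L(c) \coloneqq |\vec{c}_{I_L}^{\,-1}(c)|$ is a multivariate hypergeometric random variable centered at $\pi(c)/2$. By standard concentration and a union bound over $|C|$ coefficients, $|\pi_L(c) - \pi(c)/2| \leq n^{2/3}$ for every $c \in C$ with high probability. Iterating over the $\poly(n)$ candidate half-profiles $(\pi_L, \pi_R \coloneqq \pi - \pi_L)$ in this window lets us treat both as known. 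Given $\pi_L$ and $\pi_R$, enumerate
\begin{align*}
    L &\coloneqq \{\vec{a} \in C^{n/2} : |\vec{a}^{\,-1}(c)| = \pi_L(c) \text{ for all } c \in C\}, \\
    R &\coloneqq \{\vec{b} \in C^{n/2} : |\vec{b}^{\,-1}(c)| = \pi_R(c) \text{ for all } c \in C\},
\end{align*}
compute the partial sums $\vec{a} \cdot \vec{x}_{I_L}$ for $\vec{a} \in L$ and $-\vec{b} \cdot \vec{x}_{I_R}$ for $\vec{b} \in R$, sort one list, and scan the other for a matching value.

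By \cref{eq:stirling-multinomial}, each list has size $\Theta^*\bigl(2^{H(\pi_L/(n/2)) \cdot n/2}\bigr)$. Since $\pi/n$ deviates from the uniform distribution on $C$ by at least $\epsilon$ in coordinate $c^\star$, and entropy is strictly concave with unique maximum at uniform, there exists $\delta' = \delta'(|C|, \epsilon) > 0$ with $H(\pi/n) \leq \log_2|C| - \delta'$; the closeness $\|\pi_L/(n/2) - \pi/n\|_\infty = O(n^{-1/3})$ transfers this bound with only an $o(1)$ loss. Each list thus has size at most $|C|^{n/2} \cdot 2^{-(\delta'/2) n + o(n)}$, giving total time $\Os\bigl(|C|^{(1/2 - \delta) n}\bigr)$ for any $\delta < \delta'/2$.

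The only nontrivial step is quantifying this entropy gap: that a single coordinate deviating by $\epsilon$ from $1/|C|$ forces $\log_2|C| - H(\pi/n) = \Omega(\epsilon^2)$. This follows from a direct Taylor expansion of $H$ about the uniform point (where the Hessian is negative definite on the open simplex), or equivalently from a Pinsker-type inequality applied to $\pi/n$ versus uniform. Everything else — profile guessing, random split, multinomial enumeration, and sorted Meet-in-the-Middle — is mechanical and follows standard Subset Sum arguments as in \cite{austrin2016dense, mucha2019equal}.
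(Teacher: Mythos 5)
Your proposal is correct and takes essentially the same approach as the paper: guess the solution profile $\pi$, randomly split the index set in half, enumerate half-vectors matching the induced half-profile, and run Meet-in-the-Middle, with the runtime bound coming from the entropy of $\pi$ being strictly below $\log_2|C|$ when $\pi$ is $\epsilon$-unbalanced. The only cosmetic difference is in handling the random split: the paper conditions on the event that the split realizes the exact half-profile $\pi/2$ (probability $n^{-O(|C|)}$, then amplifies by repetition), whereas you iterate over a $\poly(n)$ window of nearby half-profiles so that a single random split succeeds with high probability; you also make the strict-concavity/Pinsker argument for the entropy gap explicit where the paper leaves it implicit.
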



The algorithm behind~\Cref{lem:unbalanced} is a modified Meet-in-the-Middle approach. Our algorithm will not return false positives, so without loss of generality we assume the input instance has at least one solution and that we have correctly guessed the solution profile $\pi$ of a certain solution $\vec{c}$.

We will use the following set of coefficient vectors:
\begin{equation}
    \label{eq:C_a-def}
    \calS \coloneqq \calS(C, \pi) = \left\{ \vec{v} \in C^{n/2} \; \middle| \;
    |\vec{v}^{\,-1}(z)| = \pi(z)/2 \text{ for all } z \in C \right\}.
\end{equation}
$\calS$ contains the $\frac{n}{2}$-dimensional coefficient vectors supported
on $C$ that contain each coefficient in the same proportions as in
$\pi$.\footnote{ Here we make the assumption that both $n$ and $\pi(z)$, $z \in
C$ are even. If this is not the case, the algorithm is easily modified (one way
of doing so creates two sets $\calS_1$ and $\calS_2$ of slightly uneven size). Because these changes do not materially affect the algorithm, we leave them out to minimize clutter. }

Algorithm \ref{alg:UnbalancedSB} displays a Meet-in-the-Middle-style algorithm that takes as input a Subset Balancing instance $\vec{x}$, a coefficient set $C$ and a solution profile $\pi$ and attempts to recover a solution $\vec{c} \in C^n$ matching $\pi$. 

\begin{algorithm}[ht!]
    \DontPrintSemicolon
    \nlnonumber\textbf{function} $\textsf{UnbalancedSB}(\vec{x}, C, \pi)$:\\
    Select a partition $A \sqcup B$ of $[n]$ such that $|A| = |B| = n/2$ uniformly at random.\label{ln:unbalanced1}\\
    Enumerate $\calS \coloneqq \calS(C, \pi)$ as defined in \eqref{eq:C_a-def}.\label{ln:unbalanced2}\\
    Enumerate and sort the lists $L_A \coloneqq \calS \cdot \vec{x}_A$ and $L_B
    \coloneqq
    \calS \cdot \vec{x}_B$.\label{ln:unbalanced3}\\
    Use Meet-in-the-Middle to search for a solution pair in $L_A \times
    L_B$.\label{ln:unbalanced4}
    \caption{An algorithm for unbalanced instances of Subset Balancing.}
    \label{alg:UnbalancedSB}
\end{algorithm}

\begin{proof}[Proof of \cref{lem:unbalanced}: Correctness] Consider an instance $\vec{x}$ of Subset Balancing on $C$ that admits an $\epsilon$-unbalanced solution $\vec{c}$ for some $\epsilon > 0$. In order to recover $\vec{c}$, we repeat Algorithm \ref{alg:UnbalancedSB} for each of the $\poly(n)$ possible $\epsilon$-unbalanced solution profiles. Because Algorithm \ref{alg:UnbalancedSB} does not return false positives, we proceed under the assumption that we have correctly guessed the solution profile $\pi$ matching $\vec{c}$.

Given $\pi$, if Algorithm \ref{alg:UnbalancedSB} guesses a partition $A \sqcup
B$ in Line~\ref{ln:unbalanced1} such that $\vec{c}_A, \vec{c}_B \in \calS$,
Lines \ref{ln:unbalanced2}-\ref{ln:unbalanced4} recover $\vec{c}$ deterministically. Standard concentration inequalities imply that this event occurs with probability $n^{-O(|C|)}$ over the choice of $A$.
This can be amplified to high probability as discussed in \Cref{sec:prelims}.
\end{proof}

\begin{proof}[Proof of \cref{lem:unbalanced}: Runtime]
Recalling that $\poly(n)$ factors incurred by repeating an algorithm are absorbed in the $\Os$ notation, we proceed to bound the runtime of Algorithm \ref{alg:UnbalancedSB} by $O(|C|^{(1/2 - \delta)n})$ for a constant $\delta$ that depends only on $\epsilon$.

Each of lines~\ref{ln:unbalanced2}-\ref{ln:unbalanced4} in Algorithm~\ref{alg:UnbalancedSB} takes time
\begin{align*}
    \label{eq:SB-wc-mim-runtime}
    \Os(|\calS|) &= \Os\left(\binom{n/2}{\left\{ \frac{\pi(c)}{2} \; \middle| \; c \in C\right\}}\right) \\
    &= \Os\left(2^{H\left( \left\{ \frac{\pi(c)}{n} \; \middle| \; c \in C\right\} \right) \frac{n}{2}}\right),
\end{align*}
where the second line follows from Stirling's Approximation for multinomial coefficients \eqref{eq:stirling-multinomial}. 

Thus the runtime of Meet-in-the-Middle on unbalanced instances of Subset Balancing depends directly on the entropy of the solution profile: that is,
\begin{equation*}
    H\left( \left\{ \frac{\pi(c)}{n} \; \middle| \; c \in C \right\} \right).
\end{equation*}
This expression evaluates to $\log_2(|C|)$ when $\pi(z) = \frac{n}{|C|}$ for all
$z \in C$, and thus we recover the Meet-in-the-Middle runtime of
$O^*(|C|^{0.5n})$ on perfectly balanced solutions. On the other hand, if $\pi$
is $\epsilon$-unbalanced for any $\epsilon > 0$, the runtime is then
$O(|C|^{(1/2 - \delta)n})$ for some $\delta \coloneqq \delta(\epsilon) > 0$
determined by the entropy function.\footnote{ $\delta \coloneqq \delta(\epsilon)$ is a cumbersome quantity but the closed form can be written down in terms of $\pi$. We omit the expression from this short proof, but incorporate it in the proofs of \cref{prop:SB-pm2-with0} and \cref{prop:SB-pm3-without0} for the specific cases in which $C = [-2: 2]$ and $C = [\pm 3]$. } This concludes the proof of runtime for \Cref{lem:unbalanced}.
\end{proof}

This concludes our discussion of $\epsilon$-unbalanced instances of Subset Balancing. In \Cref{sec:SB-with0} we tackle balanced instances on coefficient sets of the form $C = [-d : d]$.

\subsection{ Balancing Instances Using Re-Randomization }

Although balanced instances of Subset Balancing are the hardest cases for
Meet-in-the-Middle and most other Subset Balancing algorithms, this is not true
for our Equal Subset Sum algorithm in \Cref{sec:ess}. In that section it will be easier (and slightly faster) to deal with instances for which $\pi(1) = \pi(-1)$.
We can assume this holds without loss of generality using the following trick:

\begin{lemma}[Re-randomization trick]\label{lem:rerandomization}
    There exists a randomized, polynomial-time transformation $\tau: \Z^n \rightarrow \Z^n$ defined on instances of Equal Subset Sum with the following properties:
    \begin{itemize}
        \item There is a one-to-one mapping between solutions to $\vec{x}$ and solutions to $\tau(\vec{x})$.
        \item Given a fixed solution vector $\vec{c}$ for $\vec{x}$ with
            solution profile $\pi$, let $\vec{c}\,'$ denote the corresponding
            solution vector for $\tau(\vec{x})$ and let $\pi'$ denote its solution
            profile. Then, $\pi(0) = \pi'(0)$ deterministically and $\pi'(1) = \pi'(-1)$ with probability $n^{-O(1)}$.
    \end{itemize}
\end{lemma}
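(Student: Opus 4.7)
The plan is to take $\tau$ to be a random coordinate-wise sign flip: sample $\bm{s}_1,\ldots,\bm{s}_n \in \{-1,+1\}$ independently and uniformly at random, and set $\tau(\vec{x})_i := \bm{s}_i x_i$. The transformation is clearly polynomial-time computable. For the bijection between solutions, the key identity is $\bm{s}_i^2 = 1$: a solution $\vec{c}$ to $\vec{x}$ corresponds to the vector $\vec{c}\,'$ defined by $c'_i := \bm{s}_i c_i$, which lies in $\{-1,0,1\}^n \setminus \{\vec{0}\}$ because each $\bm{s}_i \in \{\pm 1\}$ preserves both the support and the magnitude of each coordinate, and a direct computation gives
$$ \vec{c}\,' \cdot \tau(\vec{x}) = \sum_{i=1}^n (\bm{s}_i c_i)(\bm{s}_i x_i) = \sum_{i=1}^n c_i x_i = 0. $$
The inverse map $\vec{c}\,' \mapsto (\bm{s}_1 c'_1, \ldots, \bm{s}_n c'_n)$ establishes the bijection. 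Since $\bm{s}_i \neq 0$, we have $c'_i = 0 \iff c_i = 0$, so $\pi'(0) = \pi(0)$ deterministically, as required.

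For the balance property, let $k := \pi(1) + \pi(-1) = n - \pi(0)$ be the number of nonzero coordinates of $\vec{c}$. At each such index $i$, $c'_i = \bm{s}_i c_i$ is uniform over $\{-1,+1\}$, and these values are mutually independent across $i$. Therefore $\pi'(1)$ is distributed as $\mathrm{Bin}(k, 1/2)$. When $k$ is even, Stirling's approximation \eqref{eq:stirling} yields
$$ \Pr\bigl[\pi'(1) = k/2\bigr] = \binom{k}{k/2} \cdot 2^{-k} = \Theta(k^{-1/2}) = \Omega(n^{-1/2}), $$
which is the required $n^{-O(1)}$ probability.

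The main technical wrinkle is the parity of $k$: when $k$ is odd, the exact equality $\pi'(1) = \pi'(-1)$ cannot hold. Since the caller in \Cref{sec:ess} enumerates the $\poly(n)$ candidate solution profiles $\pi$ as a preprocessing step, this case can be absorbed by handling odd $k$ separately via a standard parity fix — for example, by additionally guessing one pivot index $i^\star$ with $c_{i^\star} \neq 0$ (at a $\poly(n)$ overhead), fixing $\bm{s}_{i^\star}$, and then running the same argument on the remaining $k-1$ independent sign flips, which have even count. With this reduction in place, the entire lemma reduces to the Stirling calculation above; no further probabilistic input is required.
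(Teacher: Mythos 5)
Your construction is the paper's construction: the transformation $\tau$ is the coordinate-wise random sign flip, the bijection between solutions comes from $\bm{s}_i^2 = 1$, the invariance of $\pi(0)$ is immediate, and the $n^{-O(1)}$ probability of balancing the nonzero coordinates comes from Stirling on the central binomial coefficient. So the core of your argument matches the paper exactly.

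The one genuine addition is your observation about the parity of $k = \pi(1)+\pi(-1)$, which the paper's proof silently elides. This is a real wrinkle worth flagging. However, your proposed ``pivot fix'' does not actually close it: if you condition on a pivot index $i^\star$ and fix $\bm{s}_{i^\star}$, the pivot still contributes one unit to either $\pi'(1)$ or $\pi'(-1)$, and the most likely outcome of the remaining $k-1$ flips is $(k-1)/2$ each. The result is $|\pi'(1)-\pi'(-1)| = 1$, not equality. Removing a coordinate from the randomization does not change the parity of $\pi'(1)+\pi'(-1)$, so exact balance remains impossible when $k$ is odd. The honest resolutions are either (a) to weaken the conclusion to $|\pi'(1)-\pi'(-1)| \le 1$, which still holds with probability $\Theta(n^{-1/2})$ and which the downstream counting in Section 5 tolerates up to lower-order factors, or (b) a padding argument (append a zero input element and force coefficient $1$ on it) to shift $k$ to even parity. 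Either suffices; the pivot trick as you wrote it does not. Your final sentence, ``the entire lemma reduces to the Stirling calculation above,'' is therefore slightly overstated for odd $k$.
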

\begin{proof}
    We define $\tau$ as follows: given a Subset Balancing
    instance $\vec{x} = (x_1, \ldots, x_n)$, independently replace each element
    $x_i$ with $-x_i$ with probability $1/2$. 
    
    Observe that $\tau$ creates a one-to-one mapping between solutions: if $\vec{c}$ is a solution vector satisfying
    $\vec{c} \cdot \vec{x} = 0$, then the vector $\vec{c}\,'$ created by switching
    $c_i$ with $-c_i$ and the vector $\vec{x}\,'$ created by switching $x_i$ with
    $-x_i$ also satisfy $\vec{c}\,' \cdot \vec{x}\,' = 0$. Thus every solution vector
    $\vec{c}$ has a counterpart in the transformed problem. Moreover, any solution
    $\vec{c}\,'$ to the transformed problem can be converted back into a solution for
    the original problem by changing back the sign of the appropriate elements.

    For any fixed solution vector $\vec{c}$ with solution profile $\pi$, the corresponding solution
    vector $\vec{c}\,'$ with solution profile $\pi'$ satisfies $\pi'(0) = \pi(0)$ deterministically as the 0 indices do not change. Moreover, $\pi'(1) = \pi'(-1)$ with probability $n^{-O(1)}$ over
    the choice of indices (by Stirling's approximation for binomial coefficients
    \eqref{eq:stirling}).
\end{proof}

We remark that although we use re-randomization only for $C = [-1 : 1]$, similar lemmas can be proved for all $C = [-d : d]$ and $C = [\pm d]$, $d \geq 1$.

\section{Subset Balancing on \texorpdfstring{$C = [-d : d]$}{[-d:d]} }
\label{sec:SB-with0}

In the previous section, we observed that the Meet-in-the-Middle Barrier can be
broken for Subset Balancing when at least one solution is $\epsilon$-unbalanced
for a constant $\epsilon > 0$. In this section, we design a representation
technique approach for instances of Subset Balancing on coefficient sets of the form $C = [-d : d]$ that have \emph{no} $\epsilon$-unbalanced solutions.

\subsection{ Perfect Mixing and Unbalanced Solutions }
\label{subsec:perf-mixing-dichotomy}

To apply the representation technique to Subset Balancing on $C = [-d : d]$ we define our solution pairs as follows. For a fixed vector $\vec{c} \in C^n$, let
\begin{equation}
   P(\vec{c}) \coloneqq \left\{ (\vec{a}, \vec{b}) \in [0:d]^n \times [0:d]^n \; \middle| \; \vec{a} - \vec{b} = \vec{c} \right\}
   \label{eq:solution-pairs-def}
\end{equation}
denote the set of solution pairs corresponding to $\vec{c}$. We would like
$P(\vec{c})$ to satisfy a \emph{perfect mixing}\footnote{This terminology is
borrowed from Nederlof and Węgrzycki \cite{NederlofW21}, who use an analogous
concept. } property with respect to $\vec{x}$: that is, for $(\vec{a}, \vec{b}),
(\vec{a}\,', \vec{b}\,') \in P(c)$, we would like it to be true that 
\[
    \vec{a} \cdot \vec{x} = \vec{a}\,' \cdot \vec{x} \iff \vec{a} = \vec{a}\,'.
\]
We can guarantee this by proving a mixing dichotomy. We will show that if a Subset Balancing instance $\vec{x}$ admits a solution vector $\vec{c}$ whose solution pairs do \emph{not} mix perfectly with respect to $\vec{x}$, $\vec{x}$ has an unbalanced solution. In either case, we can break the Meet-in-the-Middle barrier.

\begin{lemma}[Perfect Mixing Dichotomy for \texorpdfstring{$C = [-d : d]$}{}]
    Fix an instance $\vec{x}$ of Subset Balancing on a coefficient set $C = [-d : d]$, let $\vec{c}$ be a solution vector, and let $\pi$ be the solution profile of $\vec{c}$.

    Suppose there exist two \emph{distinct} solution pairs $(\vec{a}, \vec{b}),
    (\vec{a}\,', \vec{b}\,') \in P(\vec{c})$ satisfying $\vec{a} \cdot \vec{x} =
    \vec{a}\,' \cdot \vec{x}$. Then $\vec{x}$ admits an $\frac{1}{3|C|}$-unbalanced solution vector.
    \label{lem:perf-mixing}
\end{lemma}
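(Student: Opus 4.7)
The plan is to build an explicit new solution vector from the two distinct solution pairs and show that either $\vec{c}$ or this new vector must violate the profile balance condition at a specific coefficient. The natural candidate is $\vec{d} \coloneqq \vec{a} - \vec{a}'$, which by $\vec{a} - \vec{b} = \vec{a}' - \vec{b}' = \vec{c}$ also equals $\vec{b} - \vec{b}'$.

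The first step is to verify that $\vec{d}$ is a valid nonzero solution vector. The inclusion $\vec{d} \in [-d : d]^n = C^n$ follows because $\vec{a}, \vec{a}' \in [0 : d]^n$. The identity $\vec{d} \cdot \vec{x} = (\vec{a} - \vec{a}') \cdot \vec{x} = 0$ follows from the hypothesis that the two pairs have the same dot product with $\vec{x}$. Finally, $\vec{d} \neq \vec{0}$ follows from $(\vec{a}, \vec{b}) \neq (\vec{a}', \vec{b}')$.

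The key structural step is to observe that the decomposition of the extreme coefficients is rigid: if $\vec{c}_i = d$, then $\vec{a}_i - \vec{b}_i = d$ with $\vec{a}_i, \vec{b}_i \in [0:d]$ uniquely forces $(\vec{a}_i, \vec{b}_i) = (d, 0)$, and the same applies to $(\vec{a}'_i, \vec{b}'_i)$, giving $\vec{a}_i = \vec{a}'_i$ and hence $\vec{d}_i = 0$. The symmetric statement holds for $\vec{c}_i = -d$. Writing $\pi_d$ for the profile of $\vec{d}$, this rigidity yields
\[
\pi_d(0) \;\geq\; \pi(d) + \pi(-d).
\]
To conclude, assume $\vec{c}$ is not $\frac{1}{3|C|}$-unbalanced. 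Then $\pi(z) \geq n/|C| - n/(3|C|) = 2n/(3|C|)$ for every $z \in C$; specializing to $z = \pm d$ gives $\pi_d(0) \geq 4n/(3|C|) = n/|C| + n/(3|C|)$, so $\vec{d}$ itself is $\frac{1}{3|C|}$-unbalanced at the coefficient $0$. In either case, $\vec{x}$ admits an $\frac{1}{3|C|}$-unbalanced solution.

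The main obstacle is identifying the right auxiliary solution vector and recognizing the rigidity phenomenon at the extreme coefficients; once these are in place, the counting is essentially immediate. A secondary and minor issue is a boundary case in which $\pi(d) = \pi(-d) = 2n/(3|C|)$ exactly and no index with $|\vec{c}_i| < d$ contributes an extra zero to $\vec{d}$, leaving the derived imbalance non-strict. This can be handled by loosening the constant slightly (e.g., $\frac{1}{3|C|+1}$), by reading the unbalance inequality as non-strict, or by combining $\vec{d} \neq \vec{0}$ with $\sum_z \pi_d(z) = n$ to force strict imbalance elsewhere in $\pi_d$.
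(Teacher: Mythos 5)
Your proof is correct and is essentially the same argument as the paper's: define the auxiliary solution $\vec{a} - \vec{a}'$, observe that the decomposition of $c_i = \pm d$ into $[0:d] \times [0:d]$ is rigid (so $\vec{a}-\vec{a}'$ inherits at least $\pi(d)+\pi(-d)$ zeroes), and then argue that either $\vec{c}$ or $\vec{a}-\vec{a}'$ is $\tfrac{1}{3|C|}$-unbalanced. Your remark on the non-strict boundary case is a genuine (if cosmetic) gap that the paper's own proof also has and silently ignores; it is harmless downstream because applications take $\epsilon$ strictly less than $\tfrac{1}{3|C|}$, but you were right to flag it.
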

\begin{proof}
    Fix $\vec{x}$, $\vec{c}$, $(\vec{a}, \vec{b})$ and $(\vec{a}\,',
    \vec{b}\,')$ as in the lemma statement. We begin from the fact that
    $P(\vec{c})$ is determined by $\vec{c}$. Specifically, we have that
    $\vec{a}, \vec{b}, \vec{a}\,', \vec{b}\,' \in [0: d]^n$ and that these pairs
    satisfy $\vec{a} - \vec{b} = \vec{a}\,' - \vec{b}\,' = \vec{c}$ by definition. Thus, for each index $i \in [n]$, we have:
    \begin{itemize}
        \item If $c_i = d$, then $a_i = a_i' = d$ and $b_i = b_i' = 0$.
        \item If $c_i = -d$, then $a_i = a_i' = 0$ and $b_i = b_i' = d$.
    \end{itemize}
    In either case, we have that $a_i - a_i' = 0$. Thus the vector $\vec{a} -
    \vec{a}\,'$ contains at least $\pi(d) + \pi(-d)$ zeroes, one for each occurrence of $d$ and $-d$ in $\vec{c}$.

    Next, we observe that $\vec{a} - \vec{a}\,'$ is a solution: because $\vec{a}
    \cdot \vec{x} = \vec{a}\,' \cdot \vec{x}$ by assumption, $(\vec{a} -
    \vec{a}\,') \cdot \vec{x} = 0$. Moreover, because $\vec{a}, \vec{a}\,' \in
    [0:d]^n$, $\vec{a} - \vec{a}\,' \in [-d : d]^n$. $\vec{a} - \vec{a}\,' \neq
    \vec{0}$ as $\vec{a}$ and $\vec{a}\,'$ are distinct by assumption.

    We conclude by showing that either $\vec{c}$ or $\vec{a} - \vec{a}\,'$ must be unbalanced. First, suppose 
    \begin{equation}
        \pi(d) + \pi(-d) < \frac{4}{3} \cdot \frac{n}{|C|}.
        \label{eq:quick-unbalanced}
    \end{equation}
    In this case, $\vec{c}$ is $\frac{1}{3|C|}$-unbalanced as one of $\pi(d)$ or
    $\pi(-d)$ must be less than $\frac{2}{3} \cdot \frac{n}{|C|}$. On the other
    hand, if \eqref{eq:quick-unbalanced} is false, then $\vec{a} - \vec{a}\,'$ is $\frac{1}{3|C|}$-unbalanced, as it has at least $\frac{4}{3} \cdot \frac{n}{|C|}$ zeroes.
\end{proof}

\subsection{ Algorithm for Balanced Subset Balancing on \texorpdfstring{$[-d : d]$}{}.} 
\label{subsec:sb-repmethod-with0}

If a Subset Balancing instance $\vec{x}$ admits a $\frac{1}{3|C|}$-unbalanced solution vector, we can solve the instance quickly using \Cref{lem:unbalanced}. On the other hand, if $\vec{x}$ does not admit a $\frac{1}{3|C|}$-unbalanced solution vector, \cref{lem:perf-mixing} implies that for any solution vector $\vec{c}$, $P(\vec{c})$ mixes perfectly with respect to $\vec{x}$. In this case, we adopt the following representation technique-style algorithm to solve the instance.

\begin{algorithm}[ht!]
    \DontPrintSemicolon
    \nlnonumber\textbf{function} $\textsf{BalancedSBWith0}(\vec{x}, C = [-d : d], \pi)$:\\
    Sample a prime $\bp \sim [p_{max} : 2p_{max}]$ and a residue class $\bm{r} \sim [\bp]$ for $p_{max}$ as defined in \eqref{eq:p-max-balancedSB}.\label{ln:balanced-2}\\
    Enumerate the set\label{ln:balanced-3}
    \begin{align*}
        \calS &\coloneqq \{ \vec{v} \in [0:d]^n \; | \; \vec{v} \cdot \vec{x}
        \equiv \bm{r} \pmod{\bm{p}} \}.
    \end{align*}
    If $|\calS| \geq \frac{(d+1)^n}{p_{max}} \cdot n^{\omega(1)}$, halt and
    return ``\emph{failure}''.\\
    Enumerate and sort the list $L \coloneqq \calS \cdot \vec{x}$.\label{ln:balanced-4}\\
    Use Meet-in-the-Middle to search for a solution pair in $L \times L$.\label{ln:balanced-5}
    \caption{Outline of a representation technique-style algorithm for balanced instances of Subset Balancing on coefficient sets of the form $[-d : d]$. Certain implementation details omitted from this figure are specified in the proof of \Cref{lem:balanced-SB-with0s}.}
    \label{alg:BalancedSBWith0}
\end{algorithm}

\begin{lemma}[Solving Subset Balancing on $\epsilon$-balanced instances with \texorpdfstring{$C = [-d : d]$}{}]
    Let $\epsilon < \frac{1}{3|C|}$ be a constant and let $C = [-d : d]$, $d > 0$ be a coefficient set. There exists an algorithm that solves any Subset Balancing instance $\vec{x}$ on $C$ that does not admit any $\epsilon$-unbalanced solution, with high probability, in time
    \[
        \Os \left( |C|^{(1/2 - \delta)n} \right)
    \] for a constant $\delta > 0$ that depends only on $d$ and $\epsilon$.
    \label{lem:balanced-SB-with0s}
\end{lemma}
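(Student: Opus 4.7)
The plan is to verify that Algorithm~\ref{alg:BalancedSBWith0}, together with a guess-and-verify loop over the $\poly(n)$ candidate solution profiles, meets the stated bounds. The approach is a standard application of the representation technique enabled by the mixing dichotomy of \Cref{lem:perf-mixing}. Since the instance admits no $\epsilon$-unbalanced solution and $\epsilon < \tfrac{1}{3|C|}$, the contrapositive of \Cref{lem:perf-mixing} implies that for every solution vector $\vec{c}$, the solution-pair set $P(\vec{c})$ mixes perfectly with respect to $\vec{x}$: the dot products $\{\vec{a}\cdot\vec{x} : (\vec{a},\vec{b}) \in P(\vec{c})\}$ are all distinct.

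For correctness, I would fix a true solution $\vec{c}$ with profile $\pi$; since Algorithm~\ref{alg:BalancedSBWith0} has no false positives, we may assume the outer loop correctly guesses $\pi$. For any $\vec{c}$ with profile $\pi$,
\[
    |P(\vec{c})| = \prod_{k \in C} (d+1-|k|)^{\pi(k)},
\]
because each coordinate $c_i = k$ admits exactly $d+1-|k|$ pairs $(a_i,b_i) \in [0:d]^2$ with $a_i - b_i = k$. Setting $p_{\max} = \Theta(|P(\vec{c})|)$ and applying \Cref{lem:prime-dist-main} with $G = \{\vec{a}\cdot\vec{x} : (\vec{a},\vec{b}) \in P(\vec{c})\}$ and $Y = [0:d]^n \cdot \vec{x}$, perfect mixing gives $|G| = |P(\vec{c})|$, so the lemma yields, with probability $n^{-O(1)}$, a residue class $\bm{r}$ for which $\calS$ contains at least one solution-pair first coordinate $\vec{a}$ while satisfying $|\calS| = \Os((d+1)^n / p_{\max})$. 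The mate $\vec{b}$ lies in $\calS$ as well, since $(\vec{a}-\vec{b}) \cdot \vec{x} = 0$ forces the same residue; hence any collision in $L = \calS \cdot \vec{x}$ yields a pair with $\vec{a} - \vec{b} \in [-d:d]^n = C^n$, $(\vec{a}-\vec{b}) \cdot \vec{x} = 0$, and $\vec{a} \neq \vec{b}$, i.e., a valid solution vector.

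For runtime, $\calS$ can be enumerated in time $\Os((d+1)^{n/2} + |\calS|)$ via a standard meet-in-the-middle decomposition: split the $n$ coordinates into two halves, enumerate all $(d+1)^{n/2}$ residues modulo $\bp$ on each side, and pair up matching residues. Sorting $L$ and scanning for a collision takes $\Os(|\calS|)$ additional time, giving total runtime $\Os\bigl((d+1)^{n/2} + (d+1)^n / |P(\vec{c})|\bigr)$.

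The main obstacle is the calculational step of showing that both terms are strictly below the meet-in-the-middle bound $(2d+1)^{n/2}$ by a constant factor in the exponent. The $(d+1)^{n/2}$ term is immediate from $d+1 < 2d+1$. For the term $(d+1)^n / |P(\vec{c})|$, plugging in the perfectly balanced profile $\pi(k) = n/|C|$ reduces the required inequality to
\[
    \frac{(d+1)^2}{2d+1} < \bigl((d+1)(d!)^2\bigr)^{2/(2d+1)},
\]
which holds strictly for every $d \geq 1$ by direct computation (for instance, the ratio evaluates to $4/3$ vs.\ $2^{2/3}$ at $d=1$ and $9/5$ vs.\ $12^{2/5}$ at $d=2$). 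Since $\log_2|P(\vec{c})|/n$ is continuous in $\pi$ and the set of $\epsilon$-balanced profiles is compact, the strict gap persists with some constant $\delta = \delta(d,\epsilon) > 0$ uniformly across all $\epsilon$-balanced $\pi$, which yields the claimed $\Os(|C|^{(1/2-\delta)n})$ runtime.
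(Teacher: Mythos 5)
Your overall architecture matches the paper's: you use the same representation-technique pipeline (guess $\pi$, invoke the perfect-mixing dichotomy of \Cref{lem:perf-mixing}, take $G = \{\vec{a}\cdot\vec{x} : (\vec{a},\vec{b})\in P(\vec{c})\}$ with $|G| = |P(\vec{c})|$, apply \Cref{lem:prime-dist-main}, then run Meet-in-the-Middle on the filtered list). Your enumeration of $\calS$ by splitting coordinates in half is a perfectly valid alternative to the paper's $n\times\bp$ dynamic-programming table, and it is arguably cleaner because the $(d+1)^{n/2}$ enumeration cost is capped automatically — you never need the paper's $p_{max} = \min\{|G|, (d+1)^{n/2}\}$ trick, since your enumeration time is independent of $\bp$. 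That substitution is sound.

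The gap is in the final calculation. You verify the inequality
\[
    \frac{(d+1)^2}{2d+1} < \bigl[(d+1)(d!)^2\bigr]^{2/(2d+1)}
\]
only at the \emph{perfectly balanced} profile $\pi(k) = n/|C|$, and then assert that ``by continuity and compactness the strict gap persists uniformly across all $\epsilon$-balanced $\pi$.'' That inference does not go through: continuity extends a strict inequality to a small neighborhood of $\pi_0$, and compactness upgrades a pointwise strict inequality to a uniform one, but neither lets you transport the inequality from the single point $\pi_0$ to the whole box of $\epsilon$-balanced profiles. Moreover, the worst $\epsilon$-balanced profile for the bound $(d+1)^n/|P(\vec{c})|$ is not the perfectly balanced one: $\log |P(\vec{c})| = \sum_k \pi(k)\log(d+1-|k|)$ is minimized over the box $\pi(k) \in [\tfrac{n}{|C|}-\epsilon n, \tfrac{n}{|C|}+\epsilon n]$ by pushing mass onto $k = \pm d$ (where the factor is $1$), so as $\epsilon \to \tfrac{1}{3|C|}$ the runtime genuinely grows. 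Concretely, at $d=1$ your perfectly-balanced check gives $2^{2/3} \approx 1.587$, whereas the worst $\epsilon$-balanced profile (which the lemma must cover) pushes the bound up to roughly $2^{7/9}\approx 1.714$ — still below $\sqrt{3} \approx 1.732$, but only barely, and this must be verified. The paper handles this by substituting the uniform lower bound $\pi(k) \geq \tfrac{2n}{3|C|}$ into $|P(\vec{c})|$, reducing the task to proving
\[
    \frac{(d+1)^n}{(d+1)^{\frac{2n}{3(2d+1)}}(d!)^{\frac{4n}{3(2d+1)}}} < (2d+1)^{n/2}
\]
by an exponential margin, which is \Cref{claim:C0-runtime-analysis} (proved in \Cref{apx:C0-runtime-lemma-analysis}). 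To close your argument you would need to verify this harder inequality, or explicitly restrict the lemma to sufficiently small $\epsilon$ (which happens to be enough for \Cref{thm:main-with0} but is weaker than the lemma as stated). Two further minor points: you should add a halt condition when $|\calS|$ exceeds its expected size by a superpolynomial factor (as the paper does), since the runtime bound is only established conditional on a good residue class; and you should ensure $p_{max} \leq |G|$ rather than merely $p_{max} = \Theta(|G|)$ to satisfy the hypothesis of \Cref{lem:prime-dist-main}.
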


Our first step to proving \Cref{lem:balanced-SB-with0s} is an observation that bounds the number of solution pairs for a solution vector $\vec{c}$:

\begin{claim}[Number of Solution Pairs]
    \label{claim:solnpairs}
    Fix a coefficient set $C = [-d : d]$ for an integer $d > 0$ and let $\pi$ be a solution profile. Fix any vector $\vec{c} \in C^n$ and let $P(\vec{c})$ be the set of solution pairs as defined in \eqref{eq:solution-pairs-def}. Then
    \begin{equation*}
       |P(\vec{c})| = \prod_{z \in C} (d + 1 - |z|)^{\pi(z)} = (d+1)^{\pi(0)} \prod_{i \in [d]} (d + 1 - i)^{\pi(i) + \pi(-i)}.
    \end{equation*}
\end{claim}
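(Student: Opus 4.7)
The plan is to count $|P(\vec{c})|$ by a coordinate-wise argument, exploiting the fact that the constraint $\vec{a} - \vec{b} = \vec{c}$ together with $\vec{a}, \vec{b} \in [0:d]^n$ decomposes index-by-index. For each coordinate $i \in [n]$, the value $c_i$ is fixed, so $(\vec{a}, \vec{b})$ lies in $P(\vec{c})$ if and only if $(a_i, b_i) \in [0:d] \times [0:d]$ and $a_i - b_i = c_i$ for every $i$. The number of valid pairs at each coordinate depends only on $c_i$, and choices across coordinates are independent, so $|P(\vec{c})| = \prod_{i=1}^{n} N(c_i)$, where $N(z)$ denotes the number of pairs $(a, b) \in [0:d]^2$ with $a - b = z$.

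The next step is to evaluate $N(z)$ for each $z \in C = [-d:d]$. When $z \geq 0$, any valid $b \in [0:d-z]$ determines $a = b + z \in [z:d] \subseteq [0:d]$, giving $N(z) = d+1-z$ choices. By a symmetric argument, for $z < 0$ we set $a \in [0:d-|z|]$ and $b = a + |z|$, giving $N(z) = d+1-|z|$. Hence $N(z) = d+1-|z|$ uniformly across $C$ (including $N(0) = d+1$).

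Combining these two observations,
\[
    |P(\vec{c})| \;=\; \prod_{i=1}^{n}\bigl(d+1-|c_i|\bigr) \;=\; \prod_{z \in C}\bigl(d+1-|z|\bigr)^{\pi(z)},
\]
where the second equality groups the product over coordinates by the value $c_i = z$, noting that $\pi(z) = |\vec{c}^{\,-1}(z)|$ is exactly the multiplicity of $z$ in $\vec{c}$. Finally, since $d+1-|z| = d+1-|-z|$, we can pair each positive $i$ with $-i$ and isolate the $z = 0$ term to obtain
\[
    |P(\vec{c})| \;=\; (d+1)^{\pi(0)} \prod_{i \in [d]} (d+1-i)^{\pi(i)+\pi(-i)},
\]
which is the claimed expression. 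The argument involves no real obstacle; the only thing to be careful about is the casework in computing $N(z)$ for negative versus nonnegative $z$, and verifying that the formula $d+1-|z|$ is indeed symmetric, which is what makes the rewrite into the second product valid.
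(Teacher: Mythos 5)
Your proof is correct and follows essentially the same coordinate-wise counting as the paper: the paper likewise notes that $c_i = 0$ gives $d+1$ choices and $c_i = 1$ gives $d$ choices, then says "extending this logic to each coefficient $c \in C$ completes the proof." You have simply written out the general-$z$ casework that the paper leaves implicit.
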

\begin{claimproof}
    To begin, suppose the $i$th coefficient $s_i$ of $\vec{c}$ is 0. In this case, there are exactly $d + 1$ pairs $(a_i, b_i) \in [0:d] \times [0:d]$ such that $a_i - b_i = 0$: these pairs are $(0, 0), (1, 1), \dots, (d, d)$. 
    There are thus
    \[
        |\{ (\vec{v}, \vec{u}) \in [0:d]^{\pi(0)} \times [0:d]^{\pi(0)} \; \mid \; \vec{v} - \vec{u} = \vec{0} \}| = (d + 1)^{\pi(0)}
    \]
    ways to assign the indices that are $0$ in $\vec{c}$ in a solution pair $(\vec{a}, \vec{b})$.

    Likewise, there are $d$ pairs $(a_i, b_i) \in [0:d] \times [0:d]$ such that $a_i - b_i = 1$: these pairs are $(1, 0), (2,1), \dots, (d, d-1)$. Extending this logic to each coefficient $c \in C$ completes the proof of the claim.
\end{claimproof}

We specify the implementation details omitted from Algorithm
\ref{alg:BalancedSBWith0} in the proof of correctness of
\Cref{lem:balanced-SB-with0s} below. In outline, we set $p_{max}$ to be slightly
less than the number of solution pairs in order that the expected number of
solution pairs in $\calS$ is small but nonzero when solution pairs mix perfectly. 

\begin{proof}[Proof of \Cref{lem:balanced-SB-with0s}: Correctness]
    Fix an instance $\vec{x}$ of Subset Balancing on $C = [-d : d]$, let $\epsilon < \frac{1}{3|C|}$ be a constant, and suppose $\vec{x}$ does not admit any $\epsilon$-unbalanced solution. We assume without loss of generality that $\vec{x}$ admits a solution vector $\vec{c}$ that is $\epsilon$-balanced. Because we can try all $\epsilon$-balanced solution profiles,
    suppose we have correctly guessed the solution profile $\pi$ corresponding to $\vec{c}$.

    Lines~\ref{ln:balanced-4} and \ref{ln:balanced-5} of Algorithm \ref{alg:BalancedSBWith0} will recover a solution deterministically as long as (i) $\calS$ contains a solution pair $(\vec{a}$, $\vec{b})$ satisfying
            $\vec{a} - \vec{b} = \vec{c}$, and (ii) $|\calS| \leq \frac{(d+1)^n}{p_{max}} \cdot n^{O(1)}$.
    Next, we consider the probability that these two events occur when we run Algorithm \ref{alg:BalancedSBWith0} on $\vec{x}$, $C$, and $\pi$. Define the set of \emph{partial solution sums}
    \[
        G \coloneqq \left\{ \vec{a} \cdot \vec{x} \; \middle| \; \exists \vec{b} \text{ s.t. } (\vec{a}, \vec{b}) \in P(\vec{c}) \right\}.
    \]
    with respect to the set of solution pairs $P(\vec{c})$ of $\vec{c}$. By assumption, $\vec{x}$ does not admit any $\epsilon$-unbalanced solution for some $\epsilon < \frac{1}{3|C|}$. Thus, applying the Perfect Mixing Dichotomy (\Cref{lem:perf-mixing}), we have that
    \[
        |G| = |P(\vec{c})|,
    \]
    as every left component of an element of $P(\vec{c})$ has a unique dot product with $\vec{x}$. We set 
    \begin{equation}
        \label{eq:p-max-balancedSB}
    p_{max} = \min \left\{ |G|, (d+1)^{n/2} \right\}.
    \end{equation}
    Here, the first argument to the $\min$ function reflects the case in which we intend that many residue classes modulo $\bp$ contain $\Theta^*(1)$ elements of $G$. The second argument caps $p_{max}$. This is done to minimize the overall runtime of the algorithm, which would otherwise scale with $|G|$ in cases where $|G|$ is very large.

    We proceed to show that the algorithm recovers a solution with constant
    probability, by analysing two cases that depend on the value of $p_{max}$ assigned
    in~\eqref{eq:p-max-balancedSB}.
    \subparagraph*{Case 1: $p_{max} = |G|$.} Apply \Cref{lem:prime-dist-main}
    using the sets $G$ and $([0:d]^n \cdot \vec{x})$ and the prime range bound
    $p_{max}$. Let $b$ be a constant satisfying $\diam(G) < 2^{n^{b}}$, and note
    that $b$ \emph{is} constant under our assumption that input integers have
    size $2^{n^{O(1)}}$. As a result, we have that with constant probability the
    set $\calS$ (i) contains at least $\frac{|G|}{4p_{max}}  = 1/4 > 0$ elements
    of $G$, and (ii) satisfies $|\calS| \leq \frac{(d+1)^n}{p_{max}} \cdot n^{b+1}$.
    If the set $\calS$ satisfies both conditions, Algorithm \ref{alg:BalancedSBWith0} recovers a solution deterministically.

    \subparagraph{Case 2: $p_{max} = (d+1)^{n/2}$.}
    Once again, apply \Cref{lem:prime-dist-main} using the sets $G$ and
    $([0:d]^n \cdot \vec{x})$ and the prime range bound $p_{max}$; define
    $b$ as before. We have that with constant probability the set $\calS$
    (i) contains $\frac{|G|}{4p_{max}} > 1/2 > 0$ elements of $G$, and (ii)
    satisfies $|\calS| \leq \frac{(d+1)^n}{p_{max}} \cdot n^{b+1} =
    (d+1)^{n/2} \cdot n^{b+1}$.

    Once again, if the set $\calS$ satisfies both conditions, Algorithm \ref{alg:BalancedSBWith0} recovers a solution deterministically.\footnote{Note that in this case, the number of good elements $\frac{|G|}{4p_{max}}$ may be very large, in which case we could improve our runtime significantly using a subsampling approach. However, this is not necessary to beat Meet-in-the-Middle (and for $C = [-d:d]$, Case 2 does not apply for small values of $d$), so we leave this as an exercise.}
    Thus Algorithm \ref{alg:BalancedSBWith0} recovers a solution with constant probability if we correctly guess the solution profile $\pi$ corresponding to $\vec{c}$. Amplifying the success probability completes the proof of correctness for \Cref{lem:balanced-SB-with0s}.
\end{proof}

\begin{proof}[Proof of \Cref{lem:balanced-SB-with0s}: Runtime]
    It remains to show that, given a Subset Balancing instance $\vec{x}$, coefficient set $C$, and $\epsilon$-balanced solution profile $\pi$, Algorithm \ref{alg:BalancedSBWith0} runs in time $O(|C|^{(1/2 - \delta)n})$ for some constant $\delta > 0$ that depends only on $d$ and $\epsilon$.

    Line~\ref{ln:balanced-2} can be performed in polynomial time with high probability by random sampling and polynomial-time primality-checking, as $p_{max} = 2^{n^{O(1)}}$ under our assumption that input integers have size $2^{n^{O(1)}}$.

    Lines~\ref{ln:balanced-4} and~\ref{ln:balanced-5} can be easily implemented in time
    \[
        \Os(|\calS|) = \Os\left( \frac{(d+1)^n}{p_{max}} \right).
    \]

    It remains to describe the implementation of Line~\ref{ln:balanced-3} and bound the runtime. We implement Line~\ref{ln:balanced-3} as follows.
    First, we allocate a dynamic programming table with dimensions $n \times
    \bp$. Then, for $i \in [n]$, $j \in [\bp]$, store the number of vectors $\vec{a} \in [0:d]^i$ satisfying 
        \[
			\vec{a}[0:i-1] \cdot \vec{x}[0:i-1] \equiv j \pmod{\bm{p}}
        \]
		in cell $(i, j)$; that is, the number of ``partial solution vectors'' that allow us to achieve the sum $j \pmod{\bm{p}}$.
        Each cell $(i, j)$ can be filled in time $O(|C|)$ by computing the sum of the entries of $|C|$ cells in the previous row. Thus filling in the entire table takes time $O^*(p_{max})$.
        Finally, we enumerate $\calS$ in time $\Os(|\calS|)$ by backtracking from
        cell $[n-1, \mathbf{r}]$. (If $|\calS| = \omega\left(\frac{(d+1)^n}{p_{max}}\right)$, the algorithm halts after taking $O^*\left(\frac{(d+1)^n}{p_{max}}\right)$ steps.)

    Taking the maximum over Lines~\ref{ln:balanced-3}-\ref{ln:balanced-5}, Algorithm \ref{alg:BalancedSBWith0} takes time
    \[
    \Os \left( \max \left\{ p_{max}, \frac{(d+1)^n}{p_{max}} \right\} \right) = \Os \left( \frac{(d+1)^n}{p_{max}}\right),
    \]
    as $p_{max} \leq (d+1)^{n/2}$ by definition.
    It remains to show that this is quantity is less than the Meet-in-the-Middle runtime of $O^*(|C|^{n/2}) = O^*((2d+1)^{n/2})$ for $C = [-d : d]$, $d \geq 1$.

    The running time depends on $p_{max}$ which based on \eqref{eq:p-max-balancedSB} can take two values.
    When $p_{max} = (d+1)^{n/2}$, the running time is 
    $\Os \left( (d+1)^n/p_{max}\right) = \Os((d+1)^{n/2})$,
    which is less than $(2d+1)^{n/2}$ by an exponential margin for any positive integer $d$.
    Therefore, we need to analyse the case when $p_{max} = |G|$.
    Substituting for $|G|$ using \Cref{claim:solnpairs}, in this case we have a runtime of
            \[
            \Os \left( \frac{(d+1)^n}{|G|}\right) = \Os \left( \frac{(d+1)^n}{(d+1)^{\pi(0)} \prod_{i \in [d]} (d+1-i)^{\pi(i) + \pi(-i)}}\right).
            \]
        By assumption, our solution vector $\vec{c}$ is not $\frac{1}{3|C|}$-unbalanced, so $\pi(i) \in \frac{n}{|C|} \pm \frac{n}{3|C|}$ for all $i \in C$. Plugging in the lower bound $\pi(i) \geq \frac{2}{3|C|}n$ for each $i \in C$,\footnote{Note that this effectively assumes the worst case among all $(\frac{1}{3|C|})$-balanced solution profiles. This is sufficient to break Meet-in-the-Middle, but we can optimize the runtime by calculating it as a function of $\pi$ for both Algorithm~\ref{alg:BalancedSBWith0} and Algorithm~\ref{alg:UnbalancedSB} and choosing whichever is faster for each possible $\pi$. We show an example of this in \Cref{subsec:runtime-opt-pm2}.} we have a runtime upper bound of 
            \[
            \Os \left( \frac{(d+1)^n}{(d+1)^{\frac{2n}{3(2d+1)}} (d!)^{\frac{4n}{3(2d+1)}}} \right).
            \]

            In~\cref{claim:C0-runtime-analysis} we show that 
        this quantity is exponentially smaller than the runtime of
        Meet-in-the-Middle for every positive integer $d$. The
proof is straightforward but somewhat technical hence we defer it to \Cref{apx:C0-runtime-lemma-analysis}.
        
\end{proof}

\subsection{ Breaking Meet-in-the-Middle for Subset Balancing on \texorpdfstring{$C = [-d : d]$}{} }

\begin{theorem}
    \label{thm:main-with0}
    For any $d > 0$, there exists a constant $\delta > 0$ such that Subset Balancing on $C = [-d : d]$ can be solved with high probability in time 
    \[
        \Os \left( |C|^{(1/2 - \delta)n}\right).
    \]
\end{theorem}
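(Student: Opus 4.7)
The plan is to combine \Cref{lem:unbalanced} and \Cref{lem:balanced-SB-with0s} via a simple dichotomy on the structure of the solution set. Fix any constant $\epsilon < \frac{1}{3|C|}$ (say $\epsilon = \frac{1}{4|C|}$). For a Subset Balancing instance $\vec{x}$ on $C = [-d : d]$ that admits at least one solution, exactly one of the following holds: either $\vec{x}$ admits some $\epsilon$-unbalanced solution, or every solution of $\vec{x}$ is $\epsilon$-balanced.

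In the first case, \Cref{lem:unbalanced} provides an algorithm that, with high probability, recovers a solution in time $\Os(|C|^{(1/2 - \delta_1) n})$ for a constant $\delta_1 = \delta_1(\epsilon) > 0$. In the second case, since our chosen $\epsilon$ satisfies the hypothesis $\epsilon < \frac{1}{3|C|}$ of \Cref{lem:balanced-SB-with0s}, that lemma provides an algorithm that recovers a solution with high probability in time $\Os(|C|^{(1/2 - \delta_2) n})$ for a constant $\delta_2 = \delta_2(d, \epsilon) > 0$.

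Since we do not know \emph{a priori} which case we are in, we run both algorithms and return any solution produced. Because both algorithms are Monte Carlo with one-sided error (no false positives), this combined procedure correctly returns a solution whenever one exists and correctly returns ``no solution'' otherwise, with high probability. The total running time is $\Os(|C|^{(1/2 - \delta) n})$ for $\delta \coloneqq \min(\delta_1, \delta_2) > 0$, which depends only on $d$. The success probability is amplified to $1 - 2^{-\Omega(n)}$ by independent repetition, as discussed in \Cref{sec:prelims}.

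There is essentially no technical obstacle at this step: the heavy lifting is done in \Cref{lem:unbalanced} (the modified Meet-in-the-Middle for unbalanced profiles) and \Cref{lem:balanced-SB-with0s} (the representation-technique-based algorithm enabled by the perfect mixing dichotomy of \Cref{lem:perf-mixing}). The theorem is a packaging statement that glues the two regimes together through the observation that the threshold $\frac{1}{3|C|}$ from the mixing dichotomy is exactly what the balanced-case algorithm requires.
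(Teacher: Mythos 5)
Your proposal is correct and matches the paper's proof exactly: both choose a constant $\epsilon < \frac{1}{3|C|}$ and invoke \Cref{lem:unbalanced} and \Cref{lem:balanced-SB-with0s}, running both algorithms and relying on their one-sided error to handle the unknown case. Your write-up simply spells out the dichotomy and bookkeeping that the paper's one-line proof leaves implicit.
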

\begin{proof}
    Choose a constant $\epsilon < \frac{1}{3|C|}$ and apply \Cref{lem:unbalanced,lem:balanced-SB-with0s}.
\end{proof}

The optimal constant $\delta$ in the exponent results from trading off the approaches for balanced and unbalanced solutions to find the worst-case solution profile $\pi$. This becomes a messy exercise for large $d$. As a proof of concept, we solve for $\delta$ for $C = \{-2, -1, 0, 1, 2\}$ in the next subsection. 

\subsection{ Optimizing Runtime for \texorpdfstring{$C = [-2 : 2]$}{} }
\label{subsec:runtime-opt-pm2}

In this section, we optimize $\delta$ in the statement of \Cref{thm:main-with0} for $C = [-2 : 2]$ as a proof of concept. Specifically, we prove the following proposition.

\begin{proposition}
    \label{prop:SB-pm2-with0}
    Subset Balancing on $C = [-2 : 2]$ can be solved with high probability in time
    \[
        \Os(|C|^{0.478n}) = \Os(2^{1.108n}).
    \]
\end{proposition}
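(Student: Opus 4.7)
The plan is to combine Algorithm~\ref{alg:UnbalancedSB} (Alg~1) and Algorithm~\ref{alg:BalancedSBWith0} (Alg~2): for each solution profile $\pi = (\pi_{-2}, \pi_{-1}, \pi_0, \pi_1, \pi_2)$ I would run whichever subroutine has the smaller predicted runtime on $\pi$, as suggested by the footnote in~\Cref{sec:SB-with0}. Correctness follows from the Perfect Mixing Dichotomy (\Cref{lem:perf-mixing}): if Alg~2 is chosen for the profile $\pi_{\vec{c}}$ of some solution $\vec{c}$ and fails because of imperfect mixing, the dichotomy produces a $\tfrac{1}{15}$-unbalanced solution with profile $\pi'$ satisfying $\pi'(0) \ge \pi_{\vec{c}}(2) + \pi_{\vec{c}}(-2)$, which Alg~1 catches on that alternative profile.

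First I would write out the runtime exponents. Letting $\alpha_i = \pi_i/n$, Stirling's formula~\eqref{eq:stirling-multinomial} gives
\[
    E_1(\pi) = \tfrac{1}{2}\,H(\alpha_{-2}, \alpha_{-1}, \alpha_0, \alpha_1, \alpha_2),
\]
and \Cref{claim:solnpairs} with $d = 2$ together with the cap $p_{\max} \le 3^{n/2}$ from~\eqref{eq:p-max-balancedSB} gives
\[
    E_2(\pi) = \max\!\left\{\tfrac{1}{2}\log_2 3,\;(1 - \alpha_0)\log_2 3 - \alpha_{-1} - \alpha_1\right\}.
\]
Since both $E_1$ and $E_2$ are invariant under $\alpha_i \leftrightarrow \alpha_{-i}$, I can restrict to symmetric profiles $\alpha_{-2} = \alpha_2 = a$, $\alpha_{-1} = \alpha_1 = b$, $\alpha_0 = 1 - 2a - 2b$, reducing the problem to a two-variable optimization over the simplex.

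Next I would locate the saddle where $E_1 = E_2$ with aligned gradients. Since $E_1$ is strictly concave and $E_2$ is affine in the regime where its non-trivial branch dominates, $\min(E_1, E_2)$ has a unique interior maximizer at such a saddle. A numerical root-find places it near $(a, b) \approx (0.29,\,0.16)$, giving $\alpha_0 \approx 0.10$ and common value $E_1 = E_2 \approx 1.108$, which matches $5^{0.478} \approx 2^{1.108}$.

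The main obstacle is a careful correctness check on profiles that are simultaneously $\tfrac{1}{15}$-unbalanced and have $E_2(\pi) < E_1(\pi)$, where naively running Alg~2 could fail without an obvious backup. Verifying that the dichotomy-produced alternative profile $\pi'$ always has $E_1(\pi') \le 1.108$ requires a tedious case split on the values of $\pi(2) + \pi(-2)$ and the other coordinates, together with a numerical check that the worst over the relevant regime does not exceed the saddle value. Secondary verifications---that the non-trivial branch of $E_2$ dominates the cap $\tfrac{1}{2}\log_2 3 \approx 0.79$ at the saddle (it does, since $1.108 > 0.79$) and that no boundary face of the simplex ($a = 0$, $b = 0$, or $\alpha_0 = 0$) improves the bound---reduce to routine arithmetic.
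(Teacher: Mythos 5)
Your proposal reproduces the paper's proof in its essentials: the same two runtime exponents, the same symmetric reduction, and the same numerical saddle --- your $(a, b) \approx (0.29, 0.16)$ matches the paper's $(\alpha_0, \alpha_1) \approx (0.105, 0.156)$. The correctness concern you flag is also real, and the paper's own proof of this proposition is silent on it: on a solution profile that is $\tfrac{1}{15}$-unbalanced but has $E_2 < E_1$, Algorithm~\ref{alg:BalancedSBWith0} is only guaranteed to succeed when the instance has \emph{no} $\tfrac{1}{15}$-unbalanced solution at all, and that premise need not hold. Your sketched patch is slightly misstated, however. The derived solution $\vec{a} - \vec{a}'$ in \Cref{lem:perf-mixing} always has at least $\pi(2) + \pi(-2)$ zeroes, but it is only guaranteed to be $\tfrac{1}{15}$-unbalanced in the branch $\pi(2) + \pi(-2) \ge \tfrac{4n}{15}$; when $\pi(2) + \pi(-2) < \tfrac{4n}{15}$, the lemma merely reports that $\vec{c}$ itself is $\tfrac{1}{15}$-unbalanced, which does not help if $E_1(\pi_{\vec{c}}) > 1.108$. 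This case arises: the profile $(\alpha_{-2},\ldots,\alpha_2) = (0.125, 0.25, 0.25, 0.25, 0.125)$ has $\pi(2)+\pi(-2) = 0.25n < \tfrac{4n}{15}$, $E_1 \approx 1.125 > 1.108$, and $E_2 \approx 0.69$. The cleaner route --- the minimal-support trick the paper uses for Equal Subset Sum in \Cref{sec:ess} but does not restate here --- is to fix a solution $\vec{c}$ maximizing $\pi(0)$; an imperfect-mixing collision then yields a solution with strictly more zeroes, a contradiction precisely when $\pi(2)+\pi(-2) > \pi(0)$. That inequality holds at the saddle but fails at the profile above, so the ``tedious case split'' you anticipate is genuinely necessary and is carried out neither in your proposal nor in the paper's own proof.
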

\begin{proof}
    Optimizing runtime comes down to trading off the runtimes given by Algorithm \ref{alg:UnbalancedSB} and Algorithm \ref{alg:BalancedSBWith0}. Specifically, we will calculate a closed-form expression for the time required to find a solution in terms of the solution profile $\pi$ of a fixed solution vector $\vec{c}$, then take the maximum over all possible $\pi$.

    Consider an instance $\vec{x}$ of Subset Balancing on $C = [-2 : 2]$, and suppose $\vec{x}$ admits a solution vector $\vec{c}$ with solution profile $\pi$.

    \begin{itemize}
        \item \textbf{Case 1: Unbalanced Solutions.} Following the proof of runtime for \Cref{lem:unbalanced}, we have that the runtime of Algorithm \ref{alg:UnbalancedSB} given $\pi$ is
        \[
            \Os\left(2^{H\left(\frac{\pi(-2)}{n}, \frac{\pi(-1)}{n}, \frac{\pi(0)}{n}, \frac{\pi(1)}{n}, \frac{\pi(2)}{n}\right) \frac{n}{2}} \right)
        \]
        \item \textbf{Case 2: Balanced Solutions.} Following the proof of runtime for \Cref{lem:balanced-SB-with0s}, we have that the runtime of Algorithm \ref{alg:BalancedSBWith0} given $\pi$ is
        \[
            \Os \left( \frac{(d+1)^n}{p_{max}} \right) 
            = \Os \left( \max\left[\frac{3^n}{3^{\pi(0)} \cdot 2^{\pi(-1) + \pi(1)}}, 3^{n/2} \right]\right).
        \]
        The $3^{n/2}$ term is not binding for this coefficient set, so we ignore it in the following optimization.
    \end{itemize}

    At this point, we can simplify the task of finding the worst-case solution profile $\pi$ by making several observations:
    \begin{enumerate}
        \item For any fixed values of the quantities $(\pi(-2) + \pi(2))$ and $(\pi(-1) + \pi(1))$, the Case 1 runtime is maximized by choosing to set $\pi(-2) = \pi(2)$ and $\pi(-1) = \pi(1)$, while the Case 2 runtime is unchanged. Thus we can proceed under this assumption without loss of generality and write the expressions above in terms of $\pi(0)$, $\pi(1)$ and $\pi(2)$.
        \item The solution profile is constrained by the identity $\sum_{c \in [-2 : 2]} \pi(c) = n$. Thus we can reduce the number of variables further by writing $\pi(2)$ in terms of $\pi(0)$ and $\pi(1)$.
        \item Both runtimes can easily be written as functions of the form $2^{O(n)}$, so any variable assignment that balance the exponents of these functions also balances the original functions.
    \end{enumerate}

    Thus, writing both functions in terms of $\alpha_0 = \pi(0)/n$ and $\alpha_1 = \pi(1)/n$ and taking the base-2 logarithm of both sides, we can balance our runtime for worst-case $\pi$ by solving the following optimization problem:

    \begin{equation}
        \max_{\alpha_0, \alpha_1 \in [0:1]} \min \left\{ \frac{1}{2} \cdot
            H\left(\alpha_0, \alpha_1, \alpha_1, \frac{1 - \alpha_0 - 2\alpha_1}{2}, \frac{1 - \alpha_0 - 2\alpha_1}{2}\right), \;
            \log_2(3) (1 - \alpha_0) - 2\alpha_1
    \right\}
        \label{eq:2-2-to-opt}
    \end{equation}

    Computer evaluation reveals that \eqref{eq:2-2-to-opt} is upper-bounded by $1.108$ and that the maximum occurs at approximately $(\alpha_0, \alpha_1) = (0.105, 0.156)$. 
\end{proof}

\section{ Subset Balancing on \texorpdfstring{$[\pm d]$}{} }
\label{sec:SB-without0}

In \Cref{sec:SB-with0}, we designed faster worst-case algorithms for Subset Balancing on coefficient sets of the form $C = [-d : d]$, which generalize Equal Subset Sum. In this section, we consider worst-case algorithms for Subset Balancing  on coefficient sets of the form $C = [\pm d]$, which generalize Subset Sum.

Unlike Equal Subset Sum, no $\Oh(|C|^{(0.5 - \epsilon)n})$-time algorithm for
Subset Sum is known for any constant $\epsilon > 0$. We proceed to show that,
perhaps surprisingly, $\Oh(|C|^{(0.5 - \epsilon)n})$-time algorithms \emph{are} possible on $[\pm d]$ for any $d > 2$.

\subsection{ The Coefficient Shifting Approach }
\label{subsec:coeff-shifting}

Our algorithm for balanced instances of Subset Balancing on coefficient sets of the form $C = [-d : d]$ crucially relies on the fact that for \emph{any} two vectors $\vec{a}, \vec{b} \in [0:d]^n$, $\vec{a} - \vec{b} \in C^n$. This property makes it relatively easy to represent any solution vector by a set of solution pairs. 

However, this property does not hold for $C = [\pm d]$ because the $0$
coefficient is missing. For the sake of intuition, consider what would happen if
we attempted to apply Algorithm \ref{alg:BalancedSBWith0} to an instance of
Subset Balancing on a coefficient set of the form $[\pm d]$. Our perfect mixing
dichotomy (\Cref{lem:perf-mixing}) does not apply: when two partial solutions
$(\vec{a}, \vec{b})$ and $(\vec{a}\,', \vec{b}\,')$ collide in the sense that
$\vec{a} \cdot \vec{x} = \vec{a}\,' \cdot \vec{x}$, we cannot conclude that
$(\vec{a} - \vec{a}\,')$ is a solution vector because $(\vec{a} - \vec{a}\,')$ might be $0$ at some indices.

However, it is sometimes possible to avoid this issue by using coefficient sets other than $[d]$ or $[0:d]$ to represent partial candidate solutions, a new approach we call \emph{coefficient shifting}. For a concrete example, consider the coefficient set 

\[
    C \coloneqq [\pm 3] = \{-3, -2, -1, 1, 2, 3\}.
\]

The key insight is that $C$ is the sumset of many pairs of integer sets. For
example, for $C_1 \coloneqq \{0, 1\}$ and $C_2 \coloneqq \{-3, -2, 1, 2\}$, the equation
\begin{equation}
    \label{eq:coeff-decomposition}
    C_1 + C_2 = C
\end{equation}
holds. We will refer to sets $C_1$, $C_2$ with this property as a pair of \emph{coefficient set factors} of $C$.

Suppose we pursue an unusual variant of the Meet-in-the-Middle strategy by defining the following two lists: 
\begin{align}
    \label{eq:L-R-veclists-def}
    \mathcal{L} \coloneqq \mathcal{L}(C_1, C_2) &= C_1^{n/2} \times C_2^{n/2} \text{ and } \\
    \mathcal{R} \coloneqq \mathcal{R}(C_1, C_2) &= C_2^{n/2} \times
    C_1^{n/2}\text{.}\nonumber
\end{align}
This construction ensures that both lists have the same length: 
\[
    |\mathcal{L}| = |\mathcal{R}| = 2^{n/2} \cdot 4^{n/2} = 2^{3n/2}.
\]
Moreover, because $C_1$ and $C_2$ are coefficient set factors of $C$, every vector $\vec{c} \in C^n$ is represented by a non-empty set of \emph{solution pairs}: 
\begin{equation}
    \label{eq:solution-pairs-shifted-def}
    P(\vec{c}, C_1, C_2) = \left\{ (\vec{a}, \vec{b}) \in \mathcal{L} \times \mathcal{R} \; \middle| \; \vec{a} + \vec{b} = \vec{c} \right\}.\footnote{In this section, we add the two solution pair elements instead of subtracting as in \Cref{sec:SB-with0} for notational convenience. The two conventions are mathematically equivalent.}
\end{equation}
Given an input vector $\vec{x} \in \Z^n$, we can solve the instance by computing
two lists $L \coloneqq \vec{x} \cdot \mathcal{L}$ and $R \coloneqq \vec{x} \cdot \mathcal{R}$, sorting, and using the Meet-in-the-Middle algorithm to recover a solution pair in time $O^*(2^{3n/2})$ if one exists.

For our chosen coefficient set $C$ this runtime is slower than Meet-in-the-Middle, which runs
in time $O^*(|C|^{n/2}) = O^*(6^{n/2})$ for $C = [\pm 3]$. However, we can
improve the runtime further by applying the representation technique to take advantage of the many solution pairs corresponding to each solution vector. In order to do so, we would like to show something akin to the following two conditions:
\begin{enumerate}
    \item Our instance $\vec{x}$ of Subset Balancing admits an $\epsilon$-balanced solution $\vec{c}$ for some small constant $\epsilon > 0$. This is easy to ensure without loss of generality, as if the instance admits an $\epsilon$-unbalanced solution, we can solve it faster than Meet-in-the-Middle by \Cref{lem:unbalanced}.
    \item The set of solution pairs of $\vec{c}$
    satisfies a perfect mixing property with respect to $\vec{x}$: for solution
    pairs $(\vec{a}, \vec{b})$, $(\vec{a}\,', \vec{b}\,')$, we have $\vec{a}
    \cdot \vec{x} = \vec{a}\,' \cdot \vec{x}$ if and only if $\vec{a} =
    \vec{a}\,'$.

    This is harder to achieve because, as explained above, our original perfect mixing dichotomy does not apply. However, we can replace it with a different mixing dichotomy: we will show that either the solution pairs of $\vec{c}$ mix (almost) perfectly or $\vec{x}$ admits \emph{exponentially many} solution vectors, in which case we can break the Meet-in-the-Middle barrier using a simple random sampling approach.
\end{enumerate}

Finally, we will apply the representation technique to filter the lists $\mathcal{L}$ and $\mathcal{R}$ and achieve worst-case runtime improvements for all coefficient sets $C = [\pm d]$, $d \geq 2$. As before, we content ourselves with showing that we can achieve $O(|C|^{(1/2 - \delta)n})$ for some constant $\delta > 0$ that depends only on $|C|$ in the general case. In \Cref{subsec:runtime-opt-pm3} we calculate $\delta$ exactly for the example coefficient set $C = [\pm 3]$.

\subsection{ A Mixing Dichotomy for Subset Balancing on \texorpdfstring{$[\pm d]$}{} }
\label{subsec:sb-without0-mixing}

Given a coefficient set $C$, coefficient set factors $C_1$ and $C_2$, and a coefficient vector $\vec{c} \in C^n$, define the two lists $\mathcal{L}$ and $\mathcal{R}$ as well as the set of solution pairs $P(\vec{c}, C_1, C_2)$ as in \eqref{eq:L-R-veclists-def} and \eqref{eq:solution-pairs-shifted-def} above. In addition, we define two sets containing the left and right components of the elements of $P(\vec{c}, C_1, C_2)$, respectively:
\begin{align*}
    P_{\mathcal{L}} \coloneqq P_{\mathcal{L}}(\vec{c}, C_1, C_2) &= \{ \vec{a} \in (C_1^{n/2} \times C_2^{n/2})   \; | \; \exists \vec{b} \in (C_2^{n/2} \times C_1^{n/2}), \vec{a} + \vec{b} = \vec{c}\} \\
    P_{\mathcal{R}} \coloneqq P_{\mathcal{R}}(\vec{c}, C_1, C_2) &= \{ \vec{b} \in (C_2^{n/2} \times C_1^{n/2})   \; | \; \exists \vec{a} \in (C_1^{n/2} \times C_2^{n/2}), \vec{a} + \vec{b} = \vec{c}\}.
\end{align*}
We are now prepared to prove our new mixing dichotomy.

\begin{lemma}[Mixing Dichotomy for \texorpdfstring{$[\pm d]$}{}]
    Fix a coefficient set $C$, coefficient set factors $C_1$ and $C_2$, and an instance $\vec{x}$ of Subset Balancing. Let $\vec{c} \in C^n$ be a solution vector for $\vec{x}$.

    For any constant $\epsilon > 0$, if
    \[
        |\{P_{\mathcal{R}} \cdot \vec{x}\}| \leq |P(\vec{c}, C_1, C_2)| \cdot 2^{-\epsilon n}
    \]
    then $\vec{x}$ has at least $2^{\epsilon n}$ distinct solution vectors.
    \label{lem:perf-mixing2}
\end{lemma}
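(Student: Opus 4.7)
The plan is to leverage the pigeonhole principle on the hypothesis to produce many right-side components of solution pairs sharing a common dot product with $\vec{x}$, then recombine them with a single fixed left-side component to obtain exponentially many distinct solution vectors. Two structural features of the coefficient shifting setup carry the argument: first, the identity $C_1 + C_2 = C$ ensures recombined sums land back in $C^n$; second, since $0 \notin C = [\pm d]$, any vector in $C^n$ is automatically non-zero, so the constructed candidates are valid Subset Balancing solutions ``for free''.

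Concretely, I would first observe that $|P_{\mathcal{R}}| = |P(\vec{c}, C_1, C_2)|$, because the map $(\vec{a}, \vec{b}) \mapsto \vec{b}$ is a bijection from $P(\vec{c}, C_1, C_2)$ onto $P_{\mathcal{R}}$: given $\vec{b} \in P_{\mathcal{R}}$, its partner is uniquely determined as $\vec{c} - \vec{b} \in P_{\mathcal{L}}$. Combined with the hypothesis that $|\{P_{\mathcal{R}} \cdot \vec{x}\}| \le |P(\vec{c}, C_1, C_2)| \cdot 2^{-\epsilon n}$, the pigeonhole principle yields some value $s \in \Z$ and at least
\[
K \;\ge\; \frac{|P_{\mathcal{R}}|}{|\{P_{\mathcal{R}} \cdot \vec{x}\}|} \;\ge\; 2^{\epsilon n}
\]
distinct vectors $\vec{b}_1, \ldots, \vec{b}_K \in P_{\mathcal{R}}$ with $\vec{b}_k \cdot \vec{x} = s$ for every $k$.

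Next I would fix one representative, say $\vec{b}_1$, set $\vec{a}_1 := \vec{c} - \vec{b}_1 \in P_{\mathcal{L}}$, and define the candidate family $\vec{c}_k := \vec{a}_1 + \vec{b}_k$ for $k = 1, \ldots, K$. Three checks complete the argument: (i) $\vec{a}_1 \cdot \vec{x} = \vec{c} \cdot \vec{x} - \vec{b}_1 \cdot \vec{x} = -s$, so $\vec{c}_k \cdot \vec{x} = -s + s = 0$; (ii) at each index $i$ the entries of $\vec{a}_1$ and $\vec{b}_k$ lie in complementary factors ($C_1$ and $C_2$, or $C_2$ and $C_1$, depending on whether $i \le n/2$), hence every entry of $\vec{c}_k$ lies in $C_1 + C_2 = C$, and in particular is non-zero because $0 \notin C$; (iii) distinctness of the $\vec{c}_k$'s is immediate from distinctness of the $\vec{b}_k$'s since $\vec{a}_1$ is fixed. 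Thus the $\vec{c}_k$'s are $2^{\epsilon n}$ distinct valid solution vectors for $\vec{x}$.

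I do not expect a significant obstacle: the argument is essentially a clean pigeonhole coupled with the designed algebra of coefficient shifting. The one subtle point worth flagging is that step (ii) crucially uses the exclusion of $0$ from $[\pm d]$ — for contiguous intervals $[-d:d]$ the analogous recombination could yield $\vec{c}_k = \vec{0}$, which is precisely why the companion dichotomy \Cref{lem:perf-mixing} for $[-d:d]$ has to route through unbalanced solutions rather than exponentially many solutions. A minor bookkeeping remark: the proof uses only that $\vec{b}_1$ and $\vec{b}_k$ share a dot product (not that $\vec{a}_1$ itself varies), so no further structural assumption on $C_1, C_2$ is needed beyond $C_1 + C_2 = C$.
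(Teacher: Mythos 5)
Your proposal is correct and takes essentially the same approach as the paper: pigeonhole on the hypothesis to find many components of solution pairs sharing a common dot product, then shift them by a fixed complementary component to obtain exponentially many distinct solutions in $C^n$ summing to zero with $\vec{x}$. The only cosmetic difference is that you pigeonhole over $P_{\mathcal{R}}$ directly (matching the hypothesis), whereas the paper first notes $|\{P_{\mathcal{L}}\cdot\vec{x}\}| = |\{P_{\mathcal{R}}\cdot\vec{x}\}|$ and works with $P_{\mathcal{L},v}$; these are symmetric.
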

\begin{proof}
    Fix $C$, $C_1$, $C_2$, $\vec{c}$, and $\vec{x}$ as above, and suppose that 
    \begin{equation}
        |\{P_{\mathcal{R}} \cdot \vec{x}\}| \leq |P(\vec{c}, C_1, C_2)| \cdot 2^{-\epsilon n}
        \label{eq:SB-shifting-collisions}
    \end{equation}
    for some constant $\epsilon > 0$. First, we observe that 
    \[
        |P_{\mathcal{L}}| = |P_{\mathcal{R}}| = |P(\vec{c}, C_1, C_2)|.
    \]
    This is because there exist one-to-one mappings between each element $\vec{\ell} \in P_{\mathcal{L}}$, its partner $\vec{c} - \vec{\ell} \in P_{\mathcal{R}}$, and the solution pair $(\vec{\ell}, \vec{c} - \vec{\ell}) \in P(\vec{c}, C_1, C_2)$.

    Thus \eqref{eq:SB-shifting-collisions} implies the existence of exponentially many collisions when we take the dot product of $P_{\mathcal{L}}$ and $\vec{x}$. In particular, there exists some integer $v$ such that the set of left solution pairs colliding at $v$
    \[
        P_{\mathcal{L}, v} \coloneqq \left|\left\{ \vec{\ell} \in P_{\mathcal{L}} \; | \; \vec{\ell} \cdot \vec{x} = v \right\}\right| \geq 2^{\epsilon n}.
    \]
    Fix an arbitrary element $\vec{\ell} \in P_{\mathcal{L}, v}$ and consider its solution pair $(\vec{\ell}, \vec{c} - \vec{\ell})$.\

    We claim that the set 
    \[
        Q \coloneqq (\vec{c} - \vec{\ell}) + P_{\mathcal{L}, v}
    \]
    is a set of $2^{\epsilon n}$ distinct solution vectors for $\vec{x}$. To see this, first observe that $|Q| = |P_{\mathcal{L}, v}| \geq 2^{\epsilon n}$, as $Q$ is a translation of $P_{\mathcal{L}, v}$. Second, we observe that every vector $(\vec{c} - \vec{\ell} + \vec{q}) \in Q$ is a solution vector:
    \begin{align*}
        ((\vec{c} - \vec{\ell}) + \vec{q}) \cdot \vec{x} &= (\vec{c} - \vec{\ell}) \cdot \vec{x} + \vec{q} \cdot \vec{x} \\
        &= (\vec{c} - \vec{\ell}) \cdot \vec{x} + v \\
        &= (\vec{c} - \vec{\ell}) \cdot \vec{x} + \vec{\ell} \cdot \vec{x} \\
        &=  \vec{c} \cdot \vec{x} = 0,
    \end{align*}
    where the second and third steps follow from the definition of $P_{\mathcal{L}, v}$. Finally, as $\vec{c} - \vec{\ell} \in P_{\mathcal{R}} \subseteq C_2^{n/2} \times C_1^{n/2}$ and $\vec{q} \in P_{\mathcal{L}} \subseteq C_1^{n/2} \times C_2^{n/2}$, we know that $(\vec{c} - \vec{\ell} + \vec{q}) \in C^n$ because $C_1$ and $C_2$ are coefficient set factors of $C$. This completes the proof of \Cref{lem:perf-mixing2}.
\end{proof}

The property that $\vec{x}$ has many distinct solution vectors is useful because such instances can be solved by randomly sampling partial solutions before running Meet-in-the-Middle:
\begin{lemma}
    \label{lem:SB-many-solutions}
    Let $\vec{x}$ be an instance of Subset Balancing on a coefficient set $C$. If there exists a constant $\epsilon > 0$ such that $\vec{x}$ admits at least $2^{\epsilon n}$ distinct solution vectors $\vec{c} \in C^n$, $\vec{x}$ can be solved with high probability in time
    \[
        \Os(|C|^{n/2} \cdot 2^{-\epsilon n / 2})
    \]
\end{lemma}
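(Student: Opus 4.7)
The plan is to combine Meet-in-the-Middle with randomized residue-class subsampling, using the abundance of solutions to guarantee that a randomly chosen residue class captures at least one of them with inverse polynomial probability.

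\textbf{The algorithm.} Partition $[n] = A \sqcup B$ with $|A| = |B| = n/2$, sample a random prime $\bp \sim [2^{\epsilon n /2} : 2 \cdot 2^{\epsilon n /2}]$ and a random residue $\bm r \sim [\bp]$. Using the dynamic-programming enumeration technique from the proof of \Cref{lem:balanced-SB-with0s}, compute
\begin{align*}
\calS_A = \{\vec a \in C^{n/2} : \vec a \cdot \vec x_A \equiv \bm r \pmod{\bp}\}, \quad
\calS_B = \{\vec b \in C^{n/2} : \vec b \cdot \vec x_B \equiv -\bm r \pmod{\bp}\},
\end{align*}
aborting whenever either set exceeds the size budget $\Os(|C|^{n/2}/\bp)$, and then apply Meet-in-the-Middle on $\calS_A \times \calS_B$ to find a pair with $\vec a \cdot \vec x_A + \vec b \cdot \vec x_B = 0$. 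The runtime analysis is standard: the DP and enumeration cost $\Os(\bp + |C|^{n/2}/\bp)$, the MiM step is linear in the enumerated sizes, and for $\bp \leq |C|^{n/4}$ (which holds for all sufficiently small $\epsilon$) the total cost is $\Os(|C|^{n/2}/\bp) = \Os(|C|^{n/2} \cdot 2^{-\epsilon n /2})$, as required.

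\textbf{Correctness.} A solution $\vec c$ is captured precisely when $\vec c_A \cdot \vec x_A \equiv \bm r \pmod{\bp}$ (which automatically forces $\vec c_B \cdot \vec x_B \equiv -\bm r \pmod{\bp}$ since $\vec c \cdot \vec x = 0$). Let $V = \{\vec c_A \cdot \vec x_A : \vec c \text{ a solution}\}$ be the set of distinct left-half dot products across the $N \geq 2^{\epsilon n}$ solutions. In the regular regime $|V| \geq 2^{\epsilon n /2}$, I invoke \Cref{lem:prime-dist-main} with $G = V$, $Y = C^{n/2} \cdot \vec x_A$, and $p_{max} = \bp$: with constant probability over the choice of $\bp$, an inverse-polynomial fraction of residues $\bm r$ hits some $v \in V$, so some solution is captured. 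Standard amplification then promotes inverse-polynomial success probability to high probability, as discussed in \Cref{sec:prelims}.

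\textbf{Main obstacle.} The principal difficulty is the degenerate regime $|V| < 2^{\epsilon n /2}$. Here pigeonhole forces some value $v^*$ to be shared by $> 2^{\epsilon n /2}$ solutions, so the preimages $A^* = \{\vec a \in C^{n/2} : \vec a \cdot \vec x_A = v^*\}$ and $B^* = \{\vec b \in C^{n/2} : \vec b \cdot \vec x_B = -v^*\}$ satisfy $|A^*| \cdot |B^*| > 2^{\epsilon n /2}$, and hence $\max(|A^*|, |B^*|) > 2^{\epsilon n /4}$. In this clustered regime the plain residue-hashing scheme captures the correct residue with probability only $1/\bp$, which is inadequate. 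I handle this subcase by running, in parallel, a recursive variant on whichever of the two preimages is larger: this is itself a Subset Balancing-style sub-instance on $n/2$ variables with at least $2^{\epsilon n /4}$ solutions (with an unknown target $v^*$, which I search for via a dyadic guessing over candidate prime sizes). Induction on $n$ solves this sub-instance within the claimed budget, and careful accounting over the two subcases, together with amplification, yields the final algorithm.
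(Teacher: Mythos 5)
The paper's own proof is considerably more elementary than your proposal and does not use residue hashing at all. It simply samples $m = |C|^{n/2}\cdot 2^{-\epsilon n/2}$ vectors uniformly at random (with replacement) into each of two lists $L,R \subseteq C^{n/2}$, then runs Meet-in-the-Middle on $L\times R$ to look for a pair $(\vec a,\vec b)$ with $\vec a\circ\vec b$ a solution. Since there are at least $2^{\epsilon n}$ solutions among $|C|^n$ candidates, the expected number of solution pairs that land in $L\times R$ is $\Omega(1)$, and the paper appeals to standard concentration for a constant success probability. You instead reconstructed the full representation-technique machinery (random prime, residue filtering, dynamic-programming enumeration), which is significantly heavier than what the paper uses for this lemma.

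That said, the ``main obstacle'' you flag --- that all (or most) of the $2^{\epsilon n}$ solutions could share a single left-half value $v^*$, or even a single left half $\vec c_A$ --- is a legitimate second-moment concern, and it applies to the paper's simpler argument as well: if the solution pairs form a ``star'' sharing one first half $\vec{c}_A$, a random $L$ of size $m$ contains $\vec{c}_A$ with probability roughly $m/|C|^{n/2} = 2^{-\epsilon n/2}$, so an expectation of $\Omega(1)$ captured pairs does not translate into constant success probability and the appeal to concentration is not immediate. Identifying this regime is a good catch.

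However, your proposed repair does not close the gap. The set $A^* = \{\vec a \in C^{n/2}: \vec a\cdot\vec x_A = v^*\}$ that you want to recurse on is a \emph{target-sum} preimage with an unknown, nonzero target $v^*$, not a Subset Balancing instance, so the inductive hypothesis cannot be invoked on it. ``Dyadic guessing over candidate prime sizes'' is not a procedure for locating $v^*$: there are exponentially many candidate targets, and by construction all of $A^*$ (resp.\ $B^*$) lies in a \emph{single} residue class modulo any prime, so residue hashing cannot separate or locate these clustered values with better than a $1/p$ chance. Even given one $\vec a\in A^*$, you still must produce a mate $\vec b$ from the possibly tiny set $B^*$ (defined by the complementary unknown target), which your recursion does not address; and there is no accounting of recursion depth, branching, or cost, nor a stated base case. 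As written, the degenerate branch of your argument is a sketch rather than a proof, so the proposal has a genuine gap.
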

\begin{proof}
    Create two lists $L$ and $R$ by sampling 
    $|C|^{n/2} \cdot 2^{-\epsilon n / 2}$
    elements of $C^{n/2}$ uniformly at random with replacement. Compute $L \cdot \vec{x}$ and $R \cdot \vec{x}$ and sort them by sum in time $\Os(|C|^{\frac{1-\epsilon}{2} n})$. Finally, run Meet-in-the-Middle on $L \times R$, which takes time $O(|L| + |R|)$.

    Meet-in-the-Middle recovers a solution deterministically if a solution is sampled; that is, if there exists $(\vec{a}, \vec{b}) \in L \times R$ such that $\vec{a} \circ \vec{b}$ is a solution vector for $\vec{x}$, where $\circ$ concatenates components. Standard concentration inequalities imply that this event occurs with constant probability; this can be amplified to high probability by repeating the algorithm $\poly(n)$ times.
\end{proof}

\subsection{ Choosing New Coefficient Sets }
\label{subsec:num-soln-pairs-without0}

In this section, we address the following question: which sets $C$ have coefficient set factors $C_1$ and $C_2$ that make the coefficient shifting approach possible? 

Roughly speaking, we want the length of the lists $\mathcal{L}$ and $\mathcal{R}$ to be less than $|C|^{n/2}$ when we hash them modulo our large random prime $\bp$, which will be approximately the size of the set $|P(\vec{c},C_1,C_2)|$ for a perfectly balanced solution vector $\vec{c}$. (Here we choose a balanced solution vector because we can easily solve unbalanced instances using \Cref{lem:unbalanced} and thus the ``hard'' instances of Subset Balancing will have solutions approximating this form.) This condition turns out to be sufficient.

\begin{definition}[Good coefficient set factors]
    \label{defn:good-factors}
    For a fixed coefficient set $C$, integer sets $C_1$ and $C_2$ are a pair of \emph{good coefficient set factors} 
    if they satisfy the following conditions for any constant $\epsilon > 0$:
    \begin{align*}
        C_1 + C_2 &= C \\
        \frac{|\mathcal{L}(C_1, C_2)|}{|P(\vec{c},C_1,C_2)|} &= o(|C|^{n/2} \cdot 2^{-\epsilon n}),
    \end{align*}
    where $\vec{c}$ is any \emph{perfectly balanced} coefficient vector.
\end{definition}
As it turns out, all coefficient sets of the form $C = [\pm d]$, $d > 2$ admit good coefficient set factors.

\begin{lemma}
    \label{lem:good-coeff-sets}
    Every set of the form $C = [\pm d]$, $d > 2$ has a pair of good coefficient set factors.
\end{lemma}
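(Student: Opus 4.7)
The plan is to exhibit an explicit pair of factors generalizing the $d = 3$ case in~\eqref{eq:coeff-decomposition}: take
\[
C_1 \coloneqq \{0,1\} \qquad \text{and} \qquad C_2 \coloneqq [-d:-2] \cup [1:d-1].
\]
First I would verify the sumset condition: $0 + C_2 = [-d:-2] \cup [1:d-1]$ and $1 + C_2 = [-d+1:-1] \cup [2:d]$, whose union is $[-d:-1] \cup [1:d] = [\pm d] = C$.

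Next I would count the representation multiplicities $r(c) \coloneqq |\{(c_1,c_2) \in C_1 \times C_2 : c_1 + c_2 = c\}|$ by a straightforward case analysis. The decomposition $c = 0 + c$ is valid iff $c \in C_2$, and $c = 1 + (c-1)$ is valid iff $c - 1 \in C_2$; both are simultaneously valid precisely for $c \in \{\pm 2, \pm 3, \ldots, \pm(d-1)\}$, giving $r(c) = 2$ for exactly $2(d-2)$ coefficients, while the four boundary values $c \in \{\pm 1, \pm d\}$ have $r(c) = 1$. As a sanity check, $\sum_{c \in C} r(c) = 4 + 4(d-2) = 4(d-1) = |C_1| \cdot |C_2|$, as required.

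For a perfectly balanced $\vec{c} \in C^n$ we have $\pi(c) = n/(2d)$ for every $c$, and because $r(c)$ governs both halves of $\mathcal{L}(C_1, C_2)$ symmetrically, the count of solution pairs factors across positions:
\[
|P(\vec{c}, C_1, C_2)| = \prod_{c \in C} r(c)^{\pi(c)} = 2^{2(d-2) \cdot n/(2d)} = 2^{(d-2)n/d}.
\]
Combined with $|\mathcal{L}(C_1,C_2)| = (|C_1| \cdot |C_2|)^{n/2} = (4(d-1))^{n/2}$ and $|C|^{n/2} = (2d)^{n/2}$, the inequality demanded by \Cref{defn:good-factors} reduces, after taking $n$-th roots and squaring, to the purely numerical bound $4 \cdot 2^{-4/d} > 2 - 2/d$.

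I expect this last step to be the only nontrivial part: one must verify the inequality with a gap bounded away from zero uniformly in $d \geq 3$. This follows by checking the tightest case $d = 3$ (where $4 \cdot 2^{-4/3} \approx 1.587$ strictly exceeds $4/3$) and observing that the left-hand side increases monotonically toward $4$ while the right-hand side increases only toward $2$, so the multiplicative gap never closes. The resulting constant $\epsilon = \epsilon(d) > 0$ certifies that $(C_1, C_2)$ is a pair of good coefficient set factors for every $d > 2$, completing the proof of \Cref{lem:good-coeff-sets}.
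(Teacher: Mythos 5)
Your proposal is correct and follows essentially the same route as the paper: the same factorization $C_1 = \{0,1\}$, $C_2 = [-d:-2]\cup[1:d-1]$, the same count $|P(\vec{c},C_1,C_2)| = 2^{(d-2)n/d}$ for a balanced profile, and the same final goal of beating $|C|^{n/2}$ by an exponential margin. The one genuine difference is the last step. The paper verifies $d = 3$ and $d = 4$ by hand (via its table of coefficient set factors) and then uses the crude bound $|C| - 2 < |C|$ to handle all $d \geq 5$ with $\epsilon \geq 1/10$. You instead reduce the whole thing to the single scalar inequality $4\cdot 2^{-4/d} > 2 - 2/d$, which is cleaner and avoids the case split. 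This is a nice simplification.

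One small remark on rigor: the sentence ``the left-hand side increases monotonically toward $4$ while the right-hand side increases only toward $2$, so the multiplicative gap never closes'' does not logically follow as stated --- two increasing functions with a common lower bound and $f(3) > g(3)$ could still cross in between. What actually makes it work is that the gap $g(d) = 4\cdot 2^{-4/d} - (2-2/d)$ is itself increasing on $d \geq 3$ (equivalently, in the variable $t = 1/d$ the expression $4\cdot 2^{-4t} + 2t - 2$ is decreasing on $(0,1/3]$, so its minimum over $d \geq 3$ is attained at $d = 3$). Since you already checked $d = 3$, this closes the argument, but you should state the monotonicity of the gap rather than just of the two sides.
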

\begin{proof}
    Fix a coefficient set $C = [\pm d]$, $d > 2$, and consider the coefficient sets
    \begin{align*}
        C_1 &\coloneqq \{0, 1\} \text{ and } \\
        C_2 &\coloneqq \{-d, -d+1, \dots, -2, 1, 2, \dots, d-1\}.
    \end{align*}
    Note that $C_1 + C_2 = C$ and $|\mathcal{L}(C_1, C_2)| = 2^{n/2} \cdot (|C|-2)^{n/2}$. We fix a perfectly balanced coefficient vector $\vec{c}$ and proceed to count $|P(\vec{c},C_1,C_2)|$.\footnote{Note that $P(\vec{c},C_1,C_2)$ depends only on $\vec{c}$'s solution profile, so the specific choice of $\vec{c}$ does not matter as long as $\vec{c}$ is balanced.}

	We first observe that every coefficient in $C = [\pm d]$, with the four exceptions of $-d$, $-1$, $1$ and $d$, can be represented in exactly two ways as a sum of elements in $C_1 \times C_2$.
	Namely, any $i \in \{-d+1,\ldots,-2\} \cup \{2,\ldots,d-1\}$ can be represented as $i = 0 + i$ or $k = 1 + (i-1)$. Thus
    \begin{equation}
        |P(\vec{c},C_1,C_2)| = 2^{\frac{n}{|C|} \cdot (|C| - 4)} = 2^{\frac{|C|-4}{|C|} n}.
        \label{eq:P-without0-goodfactors-count}
    \end{equation}
    It remains to show that
    \[
        \frac{|\mathcal{L}(C_1, C_2)|}{|P(\vec{c},C_1,C_2)|} = 2^{n/2} \cdot (|C|-2)^{n/2} \cdot 2^{-\frac{|C|-4}{|C|} n} \leq |C|^{n/2} \cdot 2^{-\epsilon n}
    \]
    holds for some constant $\epsilon > 0$. For $d = 3$ and $d = 4$ one can
    individually check that this holds for $\epsilon > 0$ (see~\cref{table:coeff-shifting}).
    Hence, we focus on $d \ge 5$.  Manipulating the left side, we have
    \begin{align*}
        2^{n/2} \cdot (|C|-2)^{n/2} \cdot 2^{-\frac{|C|-4}{|C|} n} &< 
        2^{n/2} \cdot |C|^{n/2} \cdot 2^{-\frac{|C|-4}{|C|} n} \\
        &= |C|^{n/2} \cdot 2^{\left(\frac{4}{|C|}-\frac{1}{2}\right)n} \\
        &= |C|^{n/2} \cdot 2^{-\epsilon n}
    \end{align*}
    for some $\epsilon \geq \frac{1}{10}$, as $|C| = |[\pm d]| \geq 10$ for $d
    \ge 5$.
\end{proof}

At this point, we pause to make one technical observation that is convenient (although not strictly necessary) for the proof of \Cref{lem:balancedSB-without0} below.
\begin{observation}
    \label{obs:p-small-without0}
    For every set of the form $C = [\pm d]$, $d > 2$, there exists a pair of good coefficient set factors $C_1$, $C_2$ satisfying
    \[
        |P(\vec{c},C_1,C_2)| < 2^n
    \]
    for some $\epsilon > 0$.
\end{observation}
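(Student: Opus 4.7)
The plan is to show that the construction already used in the proof of \Cref{lem:good-coeff-sets} witnesses the observation, so no new construction is needed. Recall that for $C = [\pm d]$, $d > 2$, that proof chose the good coefficient set factors
\[
    C_1 \coloneqq \{0, 1\}, \qquad C_2 \coloneqq \{-d, -d+1, \dots, -2, 1, 2, \dots, d-1\},
\]
and established the identity
\[
    |P(\vec{c}, C_1, C_2)| = 2^{\frac{|C|-4}{|C|} n}
\]
for any perfectly balanced $\vec{c} \in C^n$, via equation~\eqref{eq:P-without0-goodfactors-count}. Since $|C| = 2d$ is a finite positive constant, we have $\frac{|C|-4}{|C|} = 1 - \frac{4}{|C|} < 1$, and therefore
\[
    |P(\vec{c}, C_1, C_2)| = 2^{\left(1 - \frac{4}{|C|}\right)n} < 2^n,
\]
with slack $\epsilon \coloneqq \frac{4}{|C|} > 0$ that depends only on $d$. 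This is exactly the bound claimed, so the observation follows immediately from the construction of \Cref{lem:good-coeff-sets}.

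The only subtlety worth flagging is that $|P(\vec{c}, C_1, C_2)|$ a priori depends on $\vec{c}$, not only on its solution profile; but from the counting argument in the proof of \Cref{lem:good-coeff-sets}, the number of representations of any $i \in C$ as an element of $C_1 + C_2$ depends only on $i$ (it is $2$ for the ``middle'' coefficients $\{-d+1,\dots,-2\} \cup \{2,\dots,d-1\}$ and $1$ for the four ``edge'' coefficients $\{-d,-1,1,d\}$). Hence $|P(\vec{c}, C_1, C_2)|$ is a product over indices of these per-coefficient counts, and in particular is determined entirely by the solution profile $\pi$ of $\vec{c}$. For the perfectly balanced $\vec{c}$ used in \Cref{defn:good-factors}, this product evaluates to the formula above.

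There is no real obstacle here: the observation is essentially a quantitative restatement of a bound already derived inside the proof of \Cref{lem:good-coeff-sets}. The only thing to be careful about is that the bound $|P| < 2^n$ becomes tight as $d \to \infty$ (the gap $4/|C|$ shrinks), so the implicit constant $\epsilon$ cannot be taken uniform in $d$; but since $d$ is a fixed constant throughout the algorithmic results of \Cref{sec:SB-without0}, this poses no issue for the applications in \Cref{subsec:sb-without0-repmethod}.
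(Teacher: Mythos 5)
Your proof is correct and follows exactly the same approach as the paper's one-line argument: both invoke the specific coefficient set factors $C_1 = \{0,1\}$, $C_2 = \{-d,\dots,-2,1,\dots,d-1\}$ from the proof of \Cref{lem:good-coeff-sets}, together with the count in \eqref{eq:P-without0-goodfactors-count}, and observe that $\frac{|C|-4}{|C|} < 1$. Your added remarks (that $|P(\vec{c},C_1,C_2)|$ depends only on the solution profile, and that $\epsilon = 4/|C|$ is nonuniform in $d$) are accurate clarifications but not needed for the statement as used.
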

\begin{proof}
    This property holds for the choice of good coefficient set factors used in the proof of \Cref{lem:good-coeff-sets}, by \eqref{eq:P-without0-goodfactors-count}.
\end{proof}

To break Meet-in-the-Middle for a coefficient $C$, we will need only the fact
that $C$ has a pair of good coefficient set factors, but the magnitude of the
improvement depends on exactly ``how good'' the pair $C_1$, $C_2$ is. The
specific choice of $C_1$ and $C_2$ used in the proof of
\Cref{lem:good-coeff-sets} simplifies the argument, but in general, better
choices of coefficient set factors are possible. In fact, if we choose the best
coefficient set factors the runtime improvement relative to
Meet-in-the-Middle appears to \emph{increase} with $|C|$. \Cref{table:coeff-shifting}
summarizes the possible choices for good coefficient set factors for $C = [\pm d]$, $3 \leq d \leq 7$.

\newcommand{\TableCell}[1]{
\begin{tabular}{l}
    #1
  \end{tabular}
}
\renewcommand{\arraystretch}{1.3}
 \begin{table}[]
    \resizebox{\textwidth}{!}{%
        \begin{tabular}{|c|l|l|l|l|l|}
        \hline
        \textbf{$C$} & \textbf{$C_1$}, \textbf{$C_2$} & \textbf{$|\mathcal{L}(C_1, C_2)|$} & \textbf{$|P(\vec{c},C_1,C_2)|$} & $\dfrac{|\mathcal{L}|}{|P(\vec{c},C_1,C_2)|} $& $|C|^{n/2}$ \\ \hline
        $[\pm 3]$  &\TableCell{$\{0, 1\},\{-3, -2, 1, 2\}$} &  $8^{n/2}$ &
        $(2^{n/6})^2$ & $\le 2.245^n$ & $6^{n/2} \ge 2.449^n$  \\ \hline
          $[\pm 4]$  &
          \TableCell{
    $\{0, 1\},\{-4, -3, -2, 1, 2, 3\}$ \\
    $\{0,1,2\}, \{-4, -3, 1, 2\}$
} & $12^{n/2}$ & $(2^{n/8})^4$ & $\le 2.450^n$   &$8^{n/2} \ge 2.828^n$\\ \hline
$[\pm 5]$  &\TableCell{$\{0, 1\},\{-5,-4,-3,-2,1,2,3,4\}$\\$\{0, 1, 2,
          3\}, \{-5, -4, 1, 2\}$}&$16^{n/2}$&$(2^{n/10})^6$&$\le 2.640^n$&$10^{n/2}
          \ge 3.162^n$ \\\cline{2-5}
          &\TableCell{$\{0,1,2\},\{-5,-4,-3,1,2,3\}$}&$18^{n/2}$&$(2^{n/10})^4(3^{n/10})^2$&$\le 2.582^n$&\\ \hline
          $[\pm 6]$  &\TableCell{$\{0, 1\},\{-6, \dotsc, -2,1,\dotsc,5\}$\\$\{0, 1, 2,
          3,4\}, \{-6, -5, 1, 2\}$} &$20^{n/2}$&$(2^{n/12})^8$
          &$\le 2.818^n$&$12^{n/2} \geq 3.464^n$\\\cline{2-5}
          &\TableCell{$\{0, 1,2\},\{-6, \dotsc, -3,1,\dotsc,4\}$\\$\{0,
          1, 2,3\}, \{-6,-5,-4, 1,2,3\}$}&$24^{n/2}$&$(2^{n/12})^4(3^{n/12})^4$&$\le 2.697^n$&\\ \hline
          &\TableCell{$\{0, 1\},\{-7,
              \dotsc,-2,1,\dotsc,6\}$\\$\{0,1,2,3,4,5\},
      \{-7,-6,1,2\}$}&$24^{n/2}$&$(2^{n/14})^{10}$&$\le 2.987^n$&\\\cline{2-5}
      $[\pm 7]$  &\TableCell{$\{0, 1,2\},\{-7,
          \dotsc,-3,1,\dotsc,5\}$\\$\{0,1,2,3,4\},
      \{-7,-6,-5,1,2,3\}$}&$30^{n/2}$&$(2^{n/14})^4(3^{n/14})^6$&$\le 2.807^n$&$14^{n/2}
      \ge 3.741^n$\\\cline{2-5}
      &\TableCell{$\{0, 1,2,3\},\{-7,\dotsc,-4,1,2,3,4\}$}&$32^{n/2}$&$(2^{n/14})^4(3^{n/14})^4(4^{n/14})^2$&$\le 2.782^n$&\\ \hline
        \end{tabular}
    }
    \caption{Good coefficient set factors for $C = [\pm d]$, $3 \leq d \leq 7$. The runtimes in the fourth column approximate the runtime of our algorithm on perfectly balanced solution profiles. (The worst-case running time occurs on slightly unbalanced solution profiles and results from taking the better of this approach and the previous approach for unbalanced solution profiles.)}
    \label{table:coeff-shifting}
    \end{table}

\subsection{ Algorithm for Balanced Subset Balancing on \texorpdfstring{$[\pm d], d > 2$}{} }
\label{subsec:sb-without0-repmethod}

\begin{algorithm}[ht!]
    \DontPrintSemicolon
    \nlnonumber\textbf{function} $\textsf{BalancedSBWithout0}(\vec{x}, C = [\pm d]), \pi)$:\\
    Choose good coefficient set factors $C_1$ and $C_2$ for $C$.\label{ln:nozero-1} \tcp*{see~\Cref{defn:good-factors}}
    Set $\gamma \coloneqq \gamma(C_1, C_2)$.\label{ln:nozero-2}\tcp*{see \eqref{eq:gamma-without0}}
    Solve if $\vec{x}$ admits $2^{\frac{\gamma}{2}n}$ distinct solution vectors.\label{ln:nozero-3}\tcp*{see~\Cref{lem:SB-many-solutions}}
    Set $p_{max} \coloneqq p_{max}(C_1, C_2)$.\label{ln:nozero-4} \tcp*{see \eqref{eq:pmax-without-0}}
    Sample a prime $\bp \sim [p_{max} : 2p_{max}]$ and a residue class $\bm{r} \sim \bp$.\label{ln:nozero-5}\\
    Enumerate the sets\label{ln:nozero-6}
    \begin{align*}
        \mathcal{L}_{\bm{r}} \coloneqq &\left\{ \vec{a} \in C_1^{n/2} \times C_2^{n/2} \; \middle| \; \vec{a} \cdot \vec{x} \equiv \bm{r} \pmod{\bp} \right\} \\
        \mathcal{R}_{\bm{r}} \coloneqq &\left\{ \vec{b} \in C_2^{n/2} \times C_1^{n/2} \; \middle| \; \vec{b} \cdot \vec{x} \equiv -\bm{r} \pmod{\bp} \right\}
    \end{align*}
    Halt if $|\mathcal{L}_{\bm{r}}|$ or $|\mathcal{R}_{\bm{r}}| \geq \frac{|\mathcal{L}|}{p_{max}}\cdot n^{\omega(1)}$.
    \tcp*{see~\Cref{lem:balancedSB-without0}, proof of runtime.} 
	Enumerate and sort the lists $L \coloneqq \mathcal{L}_{\bm{r}} \cdot
    \vec{x}$ and $R \coloneqq \mathcal{R}_{\bm{r}} \cdot \vec{x}$.\label{ln:nozero-7}\\
    Use Meet-in-the-Middle to search for a solution pair in $L \times R$.\label{ln:nozero-8}
    \caption{Outline of a representation technique-style algorithm for instances
    of Subset Balancing on coefficient sets of the form $[\pm d]$ with balanced solutions. Certain details are deferred to the proof of \Cref{lem:balancedSB-without0} for readability.}
    \label{alg:BalancedSBWithout0}
\end{algorithm}

\begin{lemma}[Solving Subset Balancing with \texorpdfstring{$C = [\pm d]$, $d > 2$}{} on instances with $\eps$-balanced solutions]
    \label{lem:balancedSB-without0}
    Let $C = [\pm d]$, $d > 2$ be a coefficient set. Then there exists an algorithm that solves any Subset Balancing instance $\vec{x}$ on $C$ that does \emph{not} admit any $\epsilon$-unbalanced solution, with high probability, in time
    \[
        \Os\left(|C|^{(1/2 - \delta)n}\right),
    \]
    for some positive constants $\epsilon, \delta$ that depend only on $|C|$.
\end{lemma}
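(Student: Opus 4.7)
The plan is to mirror the structure of Algorithm~\ref{alg:BalancedSBWithout0} and argue correctness and runtime separately. Since the algorithm never returns false positives, I would assume without loss of generality that $\vec{x}$ admits an $\epsilon$-balanced solution $\vec{c}$ with solution profile $\pi$, and that $\pi$ has been correctly guessed at the cost of a $\poly(n)$ factor. Invoking \Cref{lem:good-coeff-sets} in Line~\ref{ln:nozero-1}, we fix a pair of good coefficient set factors $C_1, C_2$ of $C$, obtaining a constant $\epsilon_0 > 0$ such that $|\mathcal{L}|/|P(\vec{c},C_1,C_2)| = o(|C|^{n/2} \cdot 2^{-\epsilon_0 n})$. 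Note that $|P(\vec{c},C_1,C_2)|$ depends only on $\pi$ and on the combinatorics of $(C_1, C_2)$, so it is computable from the guessed $\pi$ alone.

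Next I would apply the mixing dichotomy (\Cref{lem:perf-mixing2}) with parameter $\gamma/2$, where $\gamma \coloneqq \gamma(C_1,C_2) > 0$ is the constant set in Line~\ref{ln:nozero-2}. This yields two cases: (i) if the condition of \Cref{lem:perf-mixing2} holds, then $\vec{x}$ admits at least $2^{\gamma n/2}$ distinct solutions and Line~\ref{ln:nozero-3} solves the instance via \Cref{lem:SB-many-solutions} in time $\Os(|C|^{n/2} \cdot 2^{-\gamma n/4})$; (ii) otherwise the set $G \coloneqq \{P_{\mathcal{L}} \cdot \vec{x}\}$, which has the same cardinality as $\{P_{\mathcal{R}} \cdot \vec{x}\}$ via the bijection $\vec{\ell} \mapsto \vec{c} - \vec{\ell}$, satisfies $|G| > |P(\vec{c},C_1,C_2)| \cdot 2^{-\gamma n/2}$. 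In the latter case, I set $p_{max} \approx |P(\vec{c},C_1,C_2)| \cdot 2^{-\gamma n/2}$ in Line~\ref{ln:nozero-4} (a lower bound on $|G|$, hence $p_{max} \leq O(|G|)$ as required by \Cref{lem:prime-dist-main}) and apply \Cref{lem:prime-dist-main} with this $G$ and $Y = \mathcal{L} \cdot \vec{x}$. With constant probability the sampled residue class $\bm{r}$ then simultaneously contains at least one partial sum in $G$---so that Meet-in-the-Middle on Line~\ref{ln:nozero-8} deterministically recovers the corresponding solution pair---and satisfies $|\mathcal{L}_{\bm{r}}|, |\mathcal{R}_{\bm{r}}| \leq \Os(|\mathcal{L}|/p_{max})$. \Cref{obs:p-small-without0} keeps $p_{max} < 2^n$, so subsequent operations remain efficient.

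For runtime, enumeration of $\mathcal{L}_{\bm{r}}$ and $\mathcal{R}_{\bm{r}}$ in Line~\ref{ln:nozero-6} can be implemented via the same $n \times \bp$ dynamic programming table used in the proof of \Cref{lem:balanced-SB-with0s}, running in time $\Os(p_{max} + |\mathcal{L}_{\bm{r}}| + |\mathcal{R}_{\bm{r}}|)$; Lines~\ref{ln:nozero-7}-\ref{ln:nozero-8} take time $\Os(|\mathcal{L}_{\bm{r}}| + |\mathcal{R}_{\bm{r}}|)$. Combining, the representation-technique branch runs in time
\[
\Os\!\left(\frac{|\mathcal{L}|}{|P(\vec{c},C_1,C_2)|} \cdot 2^{\gamma n/2}\right) = o\!\left(|C|^{n/2} \cdot 2^{-(\epsilon_0 - \gamma/2)n}\right)
\]
by the defining property of good coefficient set factors, while the many-solutions branch runs in time $\Os(|C|^{n/2} \cdot 2^{-\gamma n/4})$. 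Choosing $\gamma$ to simultaneously beat both bounds---for instance $\gamma = \epsilon_0$---yields the desired runtime $\Os(|C|^{(1/2-\delta)n})$ for a constant $\delta > 0$ depending only on $|C|$.

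The main obstacle will be the joint tuning of $\gamma$ and $p_{max}$ so that both branches of the dichotomy improve on Meet-in-the-Middle, combined with verifying that the $1/n^{O(1)}$ ``good residue class'' guarantee of \Cref{lem:prime-dist-main} suffices for a constant (hence amplifiable) success probability. A secondary point is handling the asymmetry of $\mathcal{L}$ and $\mathcal{R}$ (they are Cartesian products of different pieces of $C_1$ and $C_2$), ensuring that the filtered set sizes obey the same bound $\Os(|\mathcal{L}|/p_{max})$ on both sides; this should follow by applying \Cref{lem:prime-dist-main} to each list separately and using that $|\mathcal{L}| = |\mathcal{R}|$ by construction.
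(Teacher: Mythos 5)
Your proposal tracks the paper's proof very closely: the same case split via the mixing dichotomy, the same setting of $p_{max} \approx |P(\vec{c}, C_1, C_2)| \cdot 2^{-\gamma n / 2}$, the same appeal to \Cref{lem:prime-dist-main}, the same DP enumeration, and the same final trade-off between the two branches. However, there are two gaps you should close.

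First, your application of \Cref{lem:prime-dist-main} uses $Y = \mathcal{L} \cdot \vec{x}$, which only controls $|\mathcal{L}_{\bm{r}}|$, and you defer the control of $|\mathcal{R}_{\bm{r}}|$ to a secondary remark suggesting the lemma be applied ``to each list separately.'' This does not immediately work: two independent applications of \Cref{lem:prime-dist-main} yield two sets of good residue classes, and you need a \emph{single} residue $\bm{r}$ such that $\bm{r}$ is good for $\mathcal{L}$, $-\bm{r} \pmod{\bp}$ is good for $\mathcal{R}$, and $\mathcal{L}_{\bm{r}}$ contains an element of $G$, all simultaneously. Without an additional argument (say, a careful union or Markov bound to show that the ``bad'' residue set for $\mathcal{R}$ is too small to swallow the ``good'' residue set for $\mathcal{L}$), there is no guarantee of overlap. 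The paper sidesteps this by setting $Y = (\mathcal{L} \cdot \vec{x}) \cup (\mathcal{R} \cdot (-\vec{x}))$ in a \emph{single} application of the lemma; condition (ii) on good residue classes then bounds both $|\mathcal{L}_{\bm{r}}|$ and $|\mathcal{R}_{\bm{r}}|$ at once, because $\mathcal{R}_{\bm{r}}$ is exactly the preimage of $\bm{r}$ under the map $\vec{b} \mapsto \vec{b} \cdot (-\vec{x})$.

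Second, you extract $\epsilon_0$ directly from \Cref{lem:good-coeff-sets} and apply the bound $|\mathcal{L}| / |P(\vec{c}, C_1, C_2)| = o(|C|^{n/2} \cdot 2^{-\epsilon_0 n})$ to the actual solution $\vec{c}$. But \Cref{defn:good-factors} and \Cref{lem:good-coeff-sets} only guarantee this for \emph{perfectly} balanced $\vec{c}$, whereas your $\vec{c}$ is merely $\epsilon$-balanced. The bound degrades continuously as $\epsilon$ grows, and the paper explicitly addresses this: it observes that $\lim_{\epsilon \to 0} \min_{\epsilon\text{-balanced } \vec{c}} |P(\vec{c}, C_1, C_2)| = |P(\vec{z}, C_1, C_2)|$ for a perfectly balanced $\vec{z}$, and then \emph{chooses} $\epsilon$ small enough (as a function of $\gamma$) so that the degraded ratio is still $o(|C|^{n/2} \cdot 2^{-\frac{3}{4}\gamma n})$. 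Without this step, the constant $\delta > 0$ you claim is not guaranteed to survive. You should make the dependence of $\epsilon$ on $\gamma$ explicit rather than treating $\epsilon$ as fixed in advance.
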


\begin{proof}[Proof of correctness for \Cref{lem:balancedSB-without0}]
    Let $C$ and $\vec{x}$ be as in the lemma statement 
    and choose a pair of good coefficient set factors $C_1$ and $C_2$ for $C$.
    (Good coefficient set factors for $C$ are guaranteed to exist by
    \Cref{lem:good-coeff-sets}.) Let $\gamma \coloneqq \gamma(C_1, C_2)$ be the largest constant such that 
    \begin{equation}
        \label{eq:gamma-without0}
        \frac{|\mathcal{L}|}{|P(\vec{z}, C_1, C_2)|} = o(|C|^{n/2} \cdot 2^{-\gamma n})
    \end{equation}
    holds for a perfectly balanced solution profile $\vec{z} \in C^n$. Note that $\gamma > 0$ by the definition of good coefficient set factors (\Cref{defn:good-factors}).

    As usual, because our algorithm will not return false positives, we can assume the existence of a solution without loss of generality. However, we cannot assume the existence of a \emph{perfectly balanced} solution vector. Fortunately this is unnecessary. Letting $\vec{z}$ denote any perfectly balanced solution vector, we observe that
    \[
        \lim_{\epsilon \rightarrow 0} \; \; \min_{\substack{\epsilon\text{-balanced}\\\vec{c} \in C^n}} \; |P(\vec{c}, C_1, C_2)| = |P(\vec{z}, C_1, C_2)|.
    \]
    That is, as $\epsilon$ approaches $0$, the minimum number of solution pairs for an $\epsilon$-balanced coefficient vector approaches that of a perfectly balanced coefficient vector. The exact size of $|P(\vec{c}, C_1, C_2)|$ is a function of the entropy the solution profile of $\vec{c}$. Thus we can choose our constant $\epsilon > 0$ small enough to satisfy
    \begin{equation}
        \label{eq:eps-size-condn-without0}
        \frac{|\mathcal{L}|}{|P(\vec{c}, C_1, C_2)|} = o(|C|^{n/2} \cdot 2^{-\frac{3}{4} \gamma n})
    \end{equation}
    for all $\epsilon$-balanced coefficient vectors $\vec{c}$.

    We can then run Algorithm \ref{alg:BalancedSBWithout0} on all $\epsilon$-balanced solution profiles $\pi$.
    Thus without loss of generality we proceed under the assumption that we have correctly guessed an $\epsilon$-balanced solution profile $\pi$ corresponding to some solution vector $\vec{c}$.

    By \Cref{lem:perf-mixing2}, if
    \[
        |\{P_{\mathcal{L}} \cdot \vec{x}\}| \leq |P(\vec{c},C_1,C_2)| \cdot 2^{-\frac{1}{2} \gamma n},
    \]
    then $\vec{x}$ has at least $2^{\frac{1}{2} \gamma n}$ solution vectors. By applying the sampling approach of \Cref{lem:SB-many-solutions} with $\epsilon = \frac{1}{2} \gamma$, Algorithm \ref{alg:BalancedSBWithout0} recovers a solution with high probability in time 
    \[
        \Os\left(|C|^{n/2} \cdot 2^{-\frac{1}{4}\gamma)n}\right)
    \]
    in this case.
    It remains to show that the remaining steps of Algorithm \ref{alg:BalancedSBWithout0} recover a solution in the case that
    \begin{equation}
        \label{eq:solution-dist-without0}
        |\{P_{\mathcal{L}} \cdot \vec{x}\}| > |P(\vec{c},C_1,C_2)| \cdot 2^{-\frac{1}{2} \gamma n}
    \end{equation}
    
    When we run Algorithm \ref{alg:BalancedSBWithout0}, we set
    \begin{equation}
        \label{eq:pmax-without-0}
        p_{max} \coloneqq |P(\vec{c}, C_1, C_2)| \cdot 2^{-\frac{1}{2} \gamma n}.
    \end{equation}

    Next we apply \Cref{lem:prime-dist-main} using the sets 
    \begin{align*}
        &\{P_{\mathcal{L}} \cdot \vec{x}\}\text{ for }G\text{ and }\\
        &(\mathcal{L} \cdot \vec{x}) \cup (\mathcal{R} \cdot -\vec{x})\text{ for }Y.
    \end{align*}
    (Note that this choice for $Y$ controls the size of $\mathcal{L}_{\bm{r}}$ and $\mathcal{R}_{\bm{r}}$ simultaneously.) Let $b$ be a constant satisfying $\diam(\{P_{\mathcal{L}} \cdot \vec{x}\}) < 2^{n^b}$, and note that $b$ is indeed constant under our assumption that input integers have size $2^{n^{O(1)}}$. Because
    \[
        p_{max} = |P(\vec{c}, C_1, C_2)| \cdot 2^{-\frac{1}{2}\gamma n} \leq |\{P_{\mathcal{L}} \cdot \vec{x}\}|
    \]
    by \eqref{eq:solution-dist-without0}, the lemma condition that $p_{max} \leq O(|G|)$ is satisfied. As a result, we have that with constant probability the set of good residue classes $\bm{R}$ has cardinality $|\bm{R}| > p_{max} \cdot n^{-(b + 1)}$. By the definition of $\bm{R}$ (\Cref{lem:prime-dist-main}), for any good residue class $\bm{r}$, we have
    \begin{itemize}
        \item Upper bounds on the size of $\mathcal{L}_{\bm{r}}$ and $\mathcal{R}_{\bm{r}}$:
        \begin{align*}
            |\mathcal{L}_{\bm{r}}| &\leq \frac{2(|\mathcal{L}|+|\mathcal{R}|)}{p_{max}} \cdot n^{b + 1} \leq \frac{4|\mathcal{L}|}{|P(\vec{c}, C_1, C_2)|} \cdot 2^{\frac{1}{2} \gamma n} \cdot n^{b + 1} \leq \Os(|C|^{n/2} \cdot 2^{-\frac{\gamma}{4} n}) \text{ and } \\
            |\mathcal{R}_{\bm{r}}| &\leq \frac{2(|\mathcal{L}|+|\mathcal{R}|)}{p_{max}} \cdot n^{b + 1} \leq \frac{4|\mathcal{L}|}{|P(\vec{c}, C_1, C_2)|} \cdot 2^{\frac{1}{2} \gamma n} \cdot n^{b + 1} \leq \Os(|C|^{n/2} \cdot 2^{-\frac{\gamma}{4} n}),
        \end{align*}
        by \eqref{eq:eps-size-condn-without0} and \eqref{eq:pmax-without-0}.
        \item Guaranteed existence of solution pair(s): $\mathcal{L}_{\bm{r}}$ contains at least 
        \[
            \frac{|\{P_{\mathcal{L}} \cdot \vec{x}\}|}{4 p_{max}} > 0
        \]
        elements of $\{P_{\mathcal{L}} \cdot \vec{x}\}$. Note that this implies that $\mathcal{R}_{\bm{r}}$ contains the corresponding elements of $\{P_{\mathcal{R}} \cdot \vec{x}\}$ by construction.
    \end{itemize}
    In this case, the final two steps of Algorithm \ref{alg:BalancedSBWithout0} recover a solution deterministically. This completes the proof of correctness for \Cref{lem:balancedSB-without0}.
\end{proof}

\begin{proof}[Proof of runtime for \Cref{lem:balancedSB-without0}]
    The guarantee of correctness for \Cref{lem:balancedSB-without0} is achieved by running Algorithm \ref{alg:BalancedSBWithout0} $\poly(n)$ times on all solution profiles that are $\epsilon$-balanced (for the constant $\epsilon > 0$ chosen in the proof of correctness above). Because the additional polynomial factors are absorbed by the $\Os$ notation, it remains to bound the runtime of Algorithm \ref{alg:BalancedSBWithout0} on an $\epsilon$-balanced solution profile.

    Lines \ref{ln:nozero-1}, \ref{ln:nozero-2}, \ref{ln:nozero-4} and
    \ref{ln:nozero-5} can be implemented in polynomial time. Line \ref{ln:nozero-3} implements the procedure described in
    \Cref{lem:SB-many-solutions}, which runs in time 
    $\Os\left(|C|^{n/2} \cdot 2^{-\frac{1}{4}\gamma)n}\right)$
    for the constant $\gamma > 0$ determined by the coefficient set factors $C_1$ and $C_2$, as described in the proof of correctness above.

    Lines~\ref{ln:nozero-7} and \ref{ln:nozero-8} can be easily implemented in time $\Os(|\mathcal{L}_{\bm{r}}| + |\mathcal{R}_{\bm{r}}|) = \Os(|\mathcal{L}|/p_{max})$, which is upper-bounded by the runtime of Line~\ref{ln:nozero-6}. Line~\ref{ln:nozero-6} itself can be implemented in time 
    \[
        \Os\left(\max\left\{p_{max}, \frac{|\mathcal{L}|}{p_{max}} \right\}\right)
    \]
    using the dynamic programming approach described in the proof of runtime for \Cref{lem:balanced-SB-with0s} to enumerate items that fall in the residue classes $\bm{r}$ and $-\bm{r} \pmod{\bp}$. We can bound the first term of this expression by noting that
    \begin{align*}
        p_{max} &< |P(\vec{c}, C_1, C_2)| \\
        &< 2^n \\
        &< |C|^{n/2} \cdot 2^{-n/4},
    \end{align*}
    where the first inequality follows from \eqref{eq:pmax-without-0}, the second, without loss of generality, from \Cref{obs:p-small-without0}, and the final inequality from the fact that $|C| \geq 6$ by assumption. Finally, we observe that $\frac{|\mathcal{L}|}{p_{max}} < |C|^{n/2} \cdot 2^{-\frac{\gamma}{4} n}$ by \eqref{eq:eps-size-condn-without0} and \eqref{eq:pmax-without-0}.

    Thus the runtime of Algorithm \ref{alg:BalancedSBWithout0} is at most $\Os(|C|^{(\frac{1}{2} - \delta)n})$ on an $\epsilon$-balanced solution profile, where $\epsilon, \delta > 0$ are constants that depend only on $|C|$. This completes the proof of runtime for \Cref{lem:balancedSB-without0}.
\end{proof} 

\subsection{ Breaking Meet-in-the-Middle for Subset Balancing on \texorpdfstring{$C = [\pm d], d > 2$}{} }
\label{subsec:runtime-opt-pm3}

\begin{theorem}
    \label{thm:main-without0}
    For any $d > 2$, there exists a constant $\delta > 0$ such that Subset Balancing on $C = [\pm d]$, can be solved with high probability in time
    \[
        \Os(|C|^{(1/2 - \delta)n})
    \]
\end{theorem}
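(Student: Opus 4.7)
The plan is to mirror the argument used for \Cref{thm:main-with0}: reduce to two complementary cases and apply the relevant lemmas. First I fix a pair of good coefficient set factors $C_1, C_2$ for $C = [\pm d]$; this is possible because $d > 2$, by \Cref{lem:good-coeff-sets}. The existence of these factors is precisely what triggers \Cref{lem:balancedSB-without0}, which supplies constants $\epsilon, \delta_1 > 0$ depending only on $|C|$ such that instances admitting no $\epsilon$-unbalanced solution can be solved in time $\Os(|C|^{(1/2-\delta_1)n})$.

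Next I would invoke \Cref{lem:unbalanced} with the \emph{same} constant $\epsilon$ produced above, obtaining a second constant $\delta_2 = \delta_2(\epsilon, |C|) > 0$ such that any instance that \emph{does} admit an $\epsilon$-unbalanced solution can be solved in time $\Os(|C|^{(1/2-\delta_2)n})$. Because neither algorithm returns false positives (both are one-sided Monte Carlo), I can run them in sequence on an arbitrary input: whichever case actually occurs, the appropriate algorithm succeeds with high probability, and we simply return any witness found.

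Finally, setting $\delta \coloneqq \min\{\delta_1, \delta_2\} > 0$ gives a running time of $\Os(|C|^{(1/2-\delta)n})$ in both cases, yielding the theorem. There is no real technical obstacle here since the work has already been done in the preceding lemmas; the only thing to double-check is that the two lemmas can be matched at a common value of $\epsilon$, which is immediate because \Cref{lem:unbalanced} provides a valid $\delta_2$ for \emph{every} constant $\epsilon > 0$ and \Cref{lem:balancedSB-without0} lets us choose $\epsilon$ as a function of $|C|$ alone.
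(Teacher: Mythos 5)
Your proposal is correct and matches the paper's proof essentially verbatim: both choose $\epsilon$ to satisfy the hypotheses of \Cref{lem:balancedSB-without0}, then combine it with \Cref{lem:unbalanced} at that same $\epsilon$, relying on the one-sided error guarantee to run both algorithms and take $\delta$ as the smaller of the two resulting constants.
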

\begin{proof}
    Choose $\epsilon > 0$ to satisfy the requirements for \Cref{lem:balancedSB-without0}. Then apply \Cref{lem:unbalanced,lem:balancedSB-without0}.
\end{proof}

Once again, the optimal value of the constant $\delta$ results from a messy trade-off between the two approaches. As a proof of concept, we maximize $\delta$ for $C = [\pm 3]$.

\begin{proposition}
    \label{prop:SB-pm3-without0}
    Subset Balancing on $C = [\pm 3]$ can be solved with high probability in time 
    \[
        \Os(|C|^{0.495n}).
    \]
\end{proposition}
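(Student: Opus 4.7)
The plan is to follow the pattern of \Cref{prop:SB-pm2-with0}: express the runtimes of Algorithm \ref{alg:UnbalancedSB} and Algorithm \ref{alg:BalancedSBWithout0} as closed-form functions of an arbitrary solution profile $\pi$ on $C = [\pm 3]$, then take the worst-case $\pi$ by maximizing the minimum of the two expressions.

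By the proof of \Cref{lem:unbalanced}, the unbalanced algorithm runs in time $\Os(2^{H(\pi/n)\cdot n/2})$, where $H$ denotes the $|C|$-ary entropy of the normalized profile. For the balanced algorithm, I would use the good coefficient-set factors $C_1 = \{0, 1\}$ and $C_2 = \{-3, -2, 1, 2\}$ from \Cref{table:coeff-shifting}. With these choices, $|\mathcal{L}(C_1, C_2)| = 8^{n/2}$, and by counting representations of each coefficient in $C$ as a sum from $C_1 + C_2$ (each coefficient has a unique representation except $\pm 2$, which have two each), one obtains $|P(\vec{c}, C_1, C_2)| = 2^{\pi(-2) + \pi(2)}$. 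From \Cref{lem:balancedSB-without0}, this gives a balanced runtime of roughly $\Os(|\mathcal{L}|/|P(\vec{c}, C_1, C_2)|)$ up to a $\gamma$-correction factor.

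Next, I would reduce the number of free variables. Because negating $\vec{x}$ preserves the set of solutions, it costs nothing to assume $\pi(-i) = \pi(i)$ for $i \in \{1, 2, 3\}$. Writing $p_i = 2\pi(i)/n$, the normalization $\sum_c \pi(c) = n$ becomes $p_1 + p_2 + p_3 = 1$, and the identity $H(\beta_1, \beta_1, \beta_2, \beta_2, \beta_3, \beta_3) = H(p_1, p_2, p_3) + 1$ (for $\beta_i = p_i/2$) simplifies the unbalanced runtime to $\Os(2^{(H(p_1, p_2, p_3) + 1)n/2})$; the balanced runtime, meanwhile, becomes a function of $p_2$ alone since both $|\mathcal{L}|$ and $|P(\vec{c}, C_1, C_2)|$ are. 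For fixed $p_2$, concavity of entropy maximizes the unbalanced runtime at $p_1 = p_3 = (1 - p_2)/2$, reducing the problem to a single-variable optimization in $p_2$ whose optimum lies at the crossover where the two runtime expressions coincide.

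The main obstacle will be treating the $\gamma$-correction factor in the balanced runtime carefully. Loosely bounding this factor by its perfect-balance value (as in the proof of \Cref{lem:balancedSB-without0}) yields a worse constant than $0.495$, so the analysis must allow $\gamma$ to depend on the specific profile $\pi$ rather than fixing $\gamma = \gamma_0$. With this refinement, numerical evaluation of the min-max at the crossover value (which lies near $p_2 \approx 0.25$, i.e.\ $\pi(-2) = \pi(2) \approx n/8$) should yield the claimed runtime bound of $\Os(|C|^{0.495n})$, obtained by computer algebra analogously to \Cref{prop:SB-pm2-with0}.
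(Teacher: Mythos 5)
Your overall plan matches the paper's: guess the solution profile, balance Algorithm~\ref{alg:UnbalancedSB} against Algorithm~\ref{alg:BalancedSBWithout0} using the factors $C_1 = \{0,1\}$, $C_2 = \{-3,-2,1,2\}$, reduce to a single-variable optimization, and evaluate numerically. Your entropy identity is correct, and your predicted crossover $\pi(\pm 2) \approx n/8$ agrees with the paper's $\beta \approx 0.1232$.

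Two issues. First, a minor one: negating $\vec{x}$ does not change the solution profile of any $\vec{c}$ (the profile is a function of $\vec{c}$ alone), so that argument does not justify assuming $\pi(-i) = \pi(i)$. The justification that works (and which the paper uses, and which you also invoke for the $p_1 = p_3$ reduction) is concavity: for fixed $\pi(2) + \pi(-2)$ the Case~2 runtime is unchanged, while the Case~1 entropy is maximized at the symmetric profile, so the worst case over $\pi$ may be taken symmetric WLOG. Just use that for both symmetrizations and drop the negation claim.

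Second, and more importantly: you state the balanced runtime as ``roughly $\Os(|\mathcal{L}|/|P(\vec{c},C_1,C_2)|)$ up to a $\gamma$-correction factor,'' and then say the fix is to let $\gamma$ depend on $\pi$. Letting the parameter depend on $\pi$ is indeed necessary (the paper does this via $\alpha$), but it is not the whole story. The achievable Case~2 runtime is \emph{not} $|\mathcal{L}|/|P|$; it is $\Os(|C|^{n/2} \cdot 2^{-\alpha n/3})$, where $\alpha$ is defined by $|\mathcal{L}|/|P| \approx |C|^{n/2} \cdot 2^{-\alpha n}$. Only a third of the log-gap $\alpha$ is realized, because inside Algorithm~\ref{alg:BalancedSBWithout0} one must trade off two sub-cases of the mixing dichotomy (Lemma~\ref{lem:perf-mixing2}): either there are $2^{2\alpha n/3}$ distinct solutions, handled by Lemma~\ref{lem:SB-many-solutions} in time $|C|^{n/2} \cdot 2^{-\alpha n/3}$, or the solution pairs spread over at least $|P| \cdot 2^{-2\alpha n/3}$ residues, forcing $p_{\max}$ down to $|P| \cdot 2^{-2\alpha n/3}$ and inflating the filtered list length back to $|C|^{n/2} \cdot 2^{-\alpha n/3}$. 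Setting the internal $\epsilon$ to balance these two branches gives exactly the factor $1/3$, and any choice gives at best that factor. If you plug the naive $|\mathcal{L}|/|P|$ bound into your optimization, you get a crossover near $\beta \approx 0.113$ and an exponent around $1.273 < 1.2795$, i.e., a bound better than $|C|^{0.495n}$ that the algorithm does not actually attain. You need to carry the $\alpha/3$ savings explicitly (yielding the Case~2 exponent $\tfrac{\log_2 6}{3} + \tfrac12 - \tfrac{2\beta}{3}$) to recover the correct answer.
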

\begin{proof}
    Similar to the approach taken in the proof of \Cref{prop:SB-pm2-with0}, we optimize runtime by running the faster of Algorithm \ref{alg:UnbalancedSB}, our algorithm for unbalanced Subset Balancing instances, and Algorithm \ref{alg:BalancedSBWithout0}, our algorithm for balanced Subset Balancing instances on $C = [\pm d], d > 2$, to search for solution vectors matching each possible solution profile $\pi$. The runtimes of both algorithms can be calculated explicitly as a function of the fixed solution profile $\pi$, so we can find the worst-case runtime by  taking the maximum over the $\poly(n)$ possible solution profiles.

	\subparagraph*{Case 1: Unbalanced Solutions. } If there exists a solution vector matching $\pi$, Algorithm~\ref{alg:UnbalancedSB} solves the instance in time
        \begin{equation}
            \label{eq:pm3-without0-runtime1}
            \Os\left(2^{H\left(\frac{\pi(-3)}{n}, \frac{\pi(-2)}{n}, \frac{\pi(-1)}{n}, \frac{\pi(1)}{n}, \frac{\pi(2)}{n}, \frac{\pi(3)}{n}\right) \frac{n}{2}} \right).
        \end{equation}
        \subparagraph*{Case 2: Balanced Solutions. } 
        For $C = [\pm 3]$, our choice for a pair of good coefficient set factors
    is $C_1 = \{0, 1\}$, $C_2 = \{-3, -2, 1, 2\}$.
    Now suppose that for our values of $\pi, C_1, C_2$, the inequality
        \begin{equation}
            \label{eq:pm3-without0-solnpairs-hypothesis}
            \frac{|\mathcal{L}(C_1, C_2)|}{|P(\vec{z}, C_1, C_2)|} < |C|^{n/2} \cdot 2^{-\alpha n}
        \end{equation}
        holds for some constant $\alpha > 0$ and an arbitrary coefficient vector $\vec{z}$ matching $\pi$. 
        In this case, we can solve the instance by applying \Cref{lem:SB-many-solutions} with $\epsilon = \frac{2\alpha}{3}$ to solve the instance in time
        \[
                \Os\left(|C|^{n/2} \cdot 2^{-\frac{\alpha}{3}n} \right)
            \text{ if }
            |\{P_{\mathcal{L}} \cdot \vec{x}\}| \leq |P(\vec{z}, C_1, C_2)| \cdot 2^{-\frac{2}{3}\alpha n},
        \]

        and we can likewise solve the instance in time
            $\Os\left(|C|^{n/2} \cdot 2^{-\frac{\alpha}{3}n} \right)$
        by executing Lines~\ref{ln:nozero-5}-\ref{ln:nozero-8} of Algorithm~\ref{alg:BalancedSBWithout0} if $|\{P_{\mathcal{L}} \cdot \vec{x}\}| > |P(\vec{z}, C_1, C_2)| \cdot 2^{-\frac{2}{3}\alpha}n$.

        We can solve \eqref{eq:pm3-without0-solnpairs-hypothesis} for $\alpha$ by plugging in the appropriate values: 
        \begin{align*}
            |\mathcal{L}(C_1, C_2)| &= |C_1^{n/2} \times C_2^{n/2}| = 2^{n/2} \cdot 4^{n/2} = 2^{\frac{3}{2}n},\\
            |P(\vec{z},C_1, C_2)| &= 2^{\pi(-2) + \pi(2)},\text{ and }\\
            |C|^{n/2} &= 6^{n/2},
        \end{align*}
        which yields
        \[
            \alpha < \frac{\pi(2) + \pi(-2)}{n} + \frac{\log_2(6)}{2} - \frac{3}{2},
        \]
        for a final runtime of 

        \begin{align}
            \Os\left(|C|^{n/2} \cdot 2^{-\frac{\alpha}{3}n} \right)
            &= \Os\left(6^{n/2} \cdot 2^{-\frac{1}{3}(\frac{\pi(2) + \pi(-2)}{n} + \frac{\log_2(6)}{2} - \frac{3}{2})n}\right)\\
            &= \Os\left(6^{n/2} \cdot 2^{-\frac{\pi(-2) + \pi(2)}{3}} \cdot 6^{- \frac{n}{6}} \cdot 2^{n/2}\right)\\
            &= \Os\left(6^{n/3} \cdot 2^{n/2} \cdot 2^{-\frac{\pi(-2) + \pi(2)}{3}}\right)
            \label{eq:pm3-without0-runtime2}
        \end{align}
        in this case.
    We observe that, for any fixed value of $(\pi(-2) + \pi(2))$, the Case 1
    runtime \eqref{eq:pm3-without0-runtime1} is maximized when
    \begin{align*}
        \pi(-2) = \pi(2)\text{ and }
        \pi(-3) = \pi(-1) = \pi(1) = \pi(3) = \frac{n - 2\pi(2)}{4},
    \end{align*}
    while the Case 2 runtime \eqref{eq:pm3-without0-runtime2} is unchanged, so
    we can make these assumptions on $\pi$ without loss of generality. By taking
    the base-2 logarithm of \eqref{eq:pm3-without0-runtime1} and
    \eqref{eq:pm3-without0-runtime2} and writing both equations in terms of
    $\beta \coloneqq \frac{\pi(2)}{n}$, we can balance our runtime for worst-case $\pi$ by solving the following optimization problem:

    \[
        \max_{\beta \in [0,0.5]} \min \left[
        \frac{1}{2} \cdot H\left(\frac{1 - 2\beta}{4}, \beta, \frac{1 - 2\beta}{4}, \frac{1 - 2\beta}{4}, \beta, \frac{1 - 2\beta}{4}\right), \;
            \frac{\log_2(6)}{3} + \frac{1}{2} - \frac{2\beta}{3}
        \right].
    \]
    Computer evaluation reveals that the expression is maximized at $\beta \approx 0.1232$ and the final runtime is upper-bounded by
    \[
            \Os(2^{1.27955n}) = \Os\left(|C|^{0.495n}\right).
        \qedhere
    \]
\end{proof}

\section{Improved Algorithm for Equal Subset Sum}\label{sec:ess}

The current fastest algorithm for Equal Subset Sum (Subset Balancing on $C = [-1 : 1]$) is due to Mucha, Nederlof, Pawlewicz, and Węgrzycki, who proved the following result:

\begin{lemma}[cf. \cite{mucha2019equal}, Theorems 3.3 and 3.4]\label{lem:mucha}
    Given an Equal Subset Sum instance that admits a solution with solution profile $\pi$, the solution can be recovered in time
    \begin{displaymath}
        \Os\left( \min 
            \left\{
                \binom{n/2}{\frac{\pi(0)}{2}} \binom{\frac{\pi(1) + \pi(-1)}{2}}{\frac{\pi(1)}{2}},
                \binom{\pi(1) + \pi(-1)}{\pi(1)}
            \right\}
        \right)
    \end{displaymath}
    with high probability.
\end{lemma}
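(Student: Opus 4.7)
The plan is to show that each of the two bounds in the minimum is attained by a distinct algorithm; running both in parallel and returning whichever solution is recovered first achieves the stated minimum.

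For the first bound I would invoke Algorithm~\ref{alg:UnbalancedSB} directly with $C = [-1 : 1]$ and the guessed solution profile $\pi$. The proof of \Cref{lem:unbalanced} already shows that this algorithm runs in time $\Os(|\mathcal{S}|)$, where $\mathcal{S}$ is the set of length-$n/2$ vectors over $C$ matching $\pi/2$ in each coefficient. Expanding the multinomial
\[
    \binom{n/2}{\pi(-1)/2,\;\pi(0)/2,\;\pi(1)/2} \;=\; \binom{n/2}{\pi(0)/2}\binom{(\pi(1)+\pi(-1))/2}{\pi(1)/2}
\]
immediately yields the first bound.

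For the second bound I would apply the representation technique in the spirit of \Cref{subsec:rep-method}, with a carefully tuned weight restriction. Represent any solution $\vec{c}$ as $\vec{c} = \vec{a} - \vec{b}$ with $\vec{a}, \vec{b} \in \{0,1\}^n$; the pair $(a_i, b_i)$ is forced to be $(1,0)$ or $(0,1)$ where $c_i = \pm 1$, and is freely chosen from $\{(0,0),(1,1)\}$ where $c_i = 0$. Parametrizing by the number $k$ of zero-positions set to $(1,1)$, I would set $k^\ast = \lfloor \pi(0)\pi(1)/(\pi(1)+\pi(-1))\rfloor$ (trying all polynomially many $k$ in an outer loop) and restrict attention to the ambient sets $Y_A = \{v \in \{0,1\}^n : \mathrm{wt}(v) = \pi(1)+k^\ast\}$ and $Y_B = \{v \in \{0,1\}^n : \mathrm{wt}(v) = \pi(-1)+k^\ast\}$. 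Next I would sample a prime $\bp \sim [p_{\max} : 2 p_{\max}]$ with $p_{\max} \approx \binom{\pi(0)}{k^\ast}$ and a random residue $\bm{r}$, form the filtered sets $Z_A, Z_B$ via hashing modulo $\bp$, and run Meet-in-the-Middle on $Z_A \times Z_B$. The difference $\vec{a}-\vec{b}$ automatically lies in $[-1:1]^n$ since both operands are zero-one, so no pseudosolutions arise. \Cref{lem:prime-dist-main} provides the standard guarantee that $|Z_A|,|Z_B| = \Os(\binom{n}{\pi(1)+k^\ast}/p_{\max})$ and that at least one valid representation pair survives the filter with constant probability.

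The main obstacle is the entropy calculation showing that this filtered-set size matches $\binom{\pi(1)+\pi(-1)}{\pi(1)}$. My choice of $k^\ast$ arises from differentiating $\log\bigl(\binom{n}{\pi(1)+k}\big/\binom{\pi(0)}{k}\bigr)$ in $k$ and setting the derivative to zero; at the optimum both $(\pi(1)+k^\ast)/n$ and $k^\ast/\pi(0)$ equal $\pi(1)/(\pi(1)+\pi(-1))$. Substituting into Stirling's approximation~\eqref{eq:stirling} collapses the ratio to $2^{(\pi(1)+\pi(-1))\cdot H(\pi(1)/(\pi(1)+\pi(-1)))}$, which up to polynomial factors equals $\binom{\pi(1)+\pi(-1)}{\pi(1)}$ as desired. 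A secondary technicality is ensuring that the representations mix sufficiently well with respect to $\vec{x}$ so that the set $G$ required by \Cref{lem:prime-dist-main} is as large as the number of representations; in the $C = [-1:1]$ setting this can be arranged by combining the re-randomization trick of \Cref{lem:rerandomization} with an analogue of the perfect mixing dichotomy \Cref{lem:perf-mixing}.
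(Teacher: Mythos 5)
The paper does not prove this lemma; it is a black-box citation of \cite{mucha2019equal}, Theorems~3.3 and~3.4, so there is no internal proof for your argument to match. Evaluating your re-derivation on its own terms: the first half is fine. Algorithm~\ref{alg:UnbalancedSB} with $C = [-1:1]$ does run in time $\Os\bigl(\binom{n/2}{\pi(-1)/2,\,\pi(0)/2,\,\pi(1)/2}\bigr)$, and the multinomial identity you state is exactly right, so the first term of the $\min$ is recovered.

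Your argument for the second term has a genuine gap when $\pi(1) \neq \pi(-1)$. You fix a single $k$ for \emph{both} ambient sets because the parameter $k$ (the number of zero-indices of $\vec{c}$ assigned $(1,1)$) simultaneously pins down $\mathrm{wt}(\vec{a}) = \pi(1)+k$ and $\mathrm{wt}(\vec{b}) = \pi(-1)+k$. Your choice $k^\ast = \alpha\pi(0)$ with $\alpha = \pi(1)/(\pi(1)+\pi(-1))$ makes $\pi(1)+k^\ast = \alpha n$, so the Stirling cancellation gives $|Z_A| \approx \binom{n}{\alpha n}/\binom{\pi(0)}{\alpha\pi(0)} \approx 2^{(\pi(1)+\pi(-1))H(\alpha)} \approx \binom{\pi(1)+\pi(-1)}{\pi(1)}$, as you claim. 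But the \emph{same} $k^\ast$ gives $\pi(-1)+k^\ast = \alpha n - (\pi(1)-\pi(-1))$, which is not a stationary point for the ratio $\binom{n}{\pi(-1)+k}/\binom{\pi(0)}{k}$ unless $\pi(1) = \pi(-1)$. Concretely, with $\pi(0)=0.2n$, $\pi(1)=0.5n$, $\pi(-1)=0.3n$ one gets $|Z_A| \approx 2^{0.7635n}$ but $|Z_B| \approx 2^{0.793n}$, exceeding $\binom{\pi(1)+\pi(-1)}{\pi(1)} \approx 2^{0.7635n}$. Since the Meet-in-the-Middle on $Z_A\times Z_B$ costs $\Os(|Z_A|+|Z_B|)$, your runtime would be strictly worse than claimed for such profiles. (In the regime the paper actually uses — $\pi(1)=\pi(-1)$ after the re-randomization of Lemma~\ref{lem:rerandomization}, so $\alpha = 1/2$, $k^\ast = \pi(0)/2$, and $|Z_A|=|Z_B|$ — your analysis is clean, but that assumption is not part of the lemma statement.) Separately, the mixing requirement is not a ``secondary technicality'': it is the load-bearing step that lets you apply Lemma~\ref{lem:prime-dist-main} with $p_{\max} \approx \binom{\pi(0)}{k^\ast}$, and invoking ``an analogue of \Cref{lem:perf-mixing}'' without stating the dichotomy (if representations of $\vec{c}$ collide, a solution with strictly larger $\pi(0)$ exists, so one either takes $\vec{c}$ to have minimal support or argues that the resulting alternative profile is caught by the first term of the $\min$) leaves the correctness argument unfinished. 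You should also verify that $p_{\max}$ itself, which bounds the dynamic-programming time for enumerating $Z_A$ and $Z_B$, does not dominate the claimed bound; this fails for large $\pi(0)$ but is masked by the $\min$ with the first term.
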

This expression is maximized when $\pi(-1) + \pi(1) \approx 0.773n$, in which case it evaluates to $\Oh(1.7088^n)$. Because the algorithm returns no false positives and can be run on all $\poly(n)$ solution profiles in parallel, this is also the worst-case runtime for Equal Subset Sum. In this section we apply several additional techniques, including a mixing dichotomy and the method of compatibility certificates, to improve this result.

\subsection{Preprocessing}
\label{subsec:ess-preprocessing}

Let $\vec{x} = (x_1,\ldots,x_n)$ be an instance of Equal Subset Sum. As in the previous section, we assume without loss of generality that $\vec{x}$ admits a solution vector $\vec{c} \in C^n$ and that $\pi$ is the solution profile corresponding to $\vec{c}$. We further assume that $\vec{c}$ is a solution with \emph{minimal support}; that is, $\pi$ maximizes $\pi(0)$ over all (nonzero) solution vectors for $\vec{x}$. By~\cref{lem:rerandomization}, we can assume that
\begin{equation*}
    \pi(1) = \pi(-1)
\end{equation*}
without loss of generality at the cost of an extra $n^{O(1)}$ factor in the runtime. Finally, we assume that 
\begin{equation*}
    \pi(0) \le \frac{n}{3}
\end{equation*}
as otherwise the problem can be solved in time $\Oh^*(1.7^n)$ by~\cref{lem:mucha}. 

\subsection{Good Solution Pairs} 
\label{subsec:ess-goodpairs}

The most straightforward set of solution pairs for Equal Subset Sum given a
fixed solution vector $\vec{c}$ is the set $\{(\vec{a}, \vec{b}) \in \{0, 1\}^n
\times  \{0, 1\}^n \; | \; \vec{a} - \vec{b} = \vec{c}\}$. Our set of solution
pairs is based on this but slightly more complicated, so we introduce a new definition.

\begin{definition}[Good Solution Pair]\label{def:good-pair}
    With respect to a solution vector $\vec{c}$ and a non-negative constant $\epsilon < 1/12$, the set of \emph{good solution pairs} $G(\vec{c}) \subset [0:2]^n \times [0:2]^n$ consists of all pairs $(\vec{a}, \vec{b})$ satisfying
    \begin{enumerate}[label=(\roman*)]
        \item\label{def:good-pair-1} $|\vec{a}^{\,-1}(1)| = |\vec{b}^{\,-1}(1)| = n/2$ and $|\vec{a}^{\,-1}(2)| = |\vec{b}^{\,-1}(2)| = \eps n$;
        \item\label{def:good-pair-2} $\vec{a}-\vec{b} = \vec{c}$; and
        \item\label{def:good-pair-3} for every $i \in [n]$, if $\vec{c}_i = 0$ then $(\vec{a_i},\vec{b_i}) \in \{(0,0),(1,1)\}$.
    \end{enumerate}
\end{definition}

Here Condition~\ref{def:good-pair-1} specifies the number of $0$'s, $1$'s, and
$2$'s in the vectors $\vec{a}$ and $\vec{b}$. $\eps$ in particular controls the
number of $2$'s: increasing $\epsilon$ trades off an improvement in runtime due
to increased $|G(\vec{c})|$ and a slowdown due to more complicated solution
recovery. Condition~\ref{def:good-pair-2} ensures that $(\vec{a}, \vec{b})$ corresponds to a solution vector. Condition~\ref{def:good-pair-3} is an additional technical requirement ruling out indices at which $(\vec{a}_i, \vec{b}_i) = (2, 2)$. It will allow us to prove a perfect mixing dichotomy: below, we will show that if two different solution pairs ``collide'' with respect to $\vec{x}$ the ESS instance admits a solution with small support that can be recovered relatively quickly. 

\begin{observation}
    For a fixed solution $\vec{c}$ with $\pi(1) = \pi(-1)$ and constant $\epsilon < 1/12$, the number of good solution pairs is
    \begin{align}\label{eq:psi}
        p_{max} \coloneqq |G(\vec{c})| 
        = \Theta^*\left(\binom{\pi(1)}{\eps n}^2 \cdot 2^{\pi(0)} \right).
    \end{align}
\end{observation}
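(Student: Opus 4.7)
The plan is to count $|G(\vec{c})|$ by case analysis on the coordinate value $c_i$ and then checking that the global counting constraints in condition~\ref{def:good-pair-1} of the definition uniquely pin down the multiplicities of each pair type.

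First, I would inspect the allowed pairs $(a_i, b_i) \in [0:2]^2$ at each index. When $c_i = 0$, condition~\ref{def:good-pair-3} restricts $(a_i, b_i)$ to $(0,0)$ or $(1,1)$; when $c_i = 1$, condition~\ref{def:good-pair-2} forces $(a_i, b_i) \in \{(1,0),(2,1)\}$; when $c_i = -1$, it forces $(a_i, b_i) \in \{(0,1),(1,2)\}$. Let $t_0$, $t_1$, and $t_{-1}$ denote the number of indices of each type that receive the ``second'' option (namely $(1,1)$, $(2,1)$, and $(1,2)$ respectively).

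Next, I would use condition~\ref{def:good-pair-1} to solve for $t_0, t_1, t_{-1}$. Counting $2$'s: only $c_i=1$ indices contribute to $\vec{a}^{\,-1}(2)$ and only $c_i=-1$ indices to $\vec{b}^{\,-1}(2)$, so $t_1 = t_{-1} = \epsilon n$. Counting $1$'s in $\vec{a}$: contributions are $t_0$ (from $c_i=0$), $\pi(1)-t_1$ (from $c_i=1$), and $t_{-1}$ (from $c_i=-1$), summing to $n/2$, which together with $\pi(1)=\pi(-1)$ and $\pi(0)+2\pi(1)=n$ (the preprocessing in \Cref{subsec:ess-preprocessing}) yields $t_0 = \pi(0)/2$. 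The same count for $\vec{b}$ is automatically satisfied. Thus each index type contributes independently a binomial factor, and
\[
|G(\vec{c})| = \binom{\pi(0)}{\pi(0)/2} \binom{\pi(1)}{\epsilon n} \binom{\pi(-1)}{\epsilon n} = \binom{\pi(0)}{\pi(0)/2}\binom{\pi(1)}{\epsilon n}^{2}.
\]
Applying Stirling's approximation \eqref{eq:stirling} to $\binom{\pi(0)}{\pi(0)/2} = \Theta^*(2^{\pi(0)})$ gives the claimed $\Theta^*\!\left(\binom{\pi(1)}{\epsilon n}^{2}\cdot 2^{\pi(0)}\right)$.

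The only real obstacle is checking well-definedness and uniqueness: one must verify that the linear constraints from condition~\ref{def:good-pair-1} admit a unique, non-negative integer solution for $(t_0, t_1, t_{-1})$ rather than a family. Feasibility of the binomial coefficients follows because the preprocessing ensures $\pi(0) \leq n/3$, hence $\pi(1) = (n-\pi(0))/2 \geq n/3$, and the bound $\epsilon < 1/12$ guarantees $\epsilon n \leq \pi(1)$; a brief remark on parity of $\pi(0)$ (handled, as in \Cref{sec:SB-unbalanced}, by guessing solution profiles and splitting off a constant number of indices) closes the proof.
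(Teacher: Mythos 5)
Your proof is correct and follows essentially the same approach as the paper: the same case analysis on $c_i$, the same identification of the $2$-positions giving the factor $\binom{\pi(1)}{\epsilon n}^2$, and the same counting of the $(1,1)$ pairs among the $c_i=0$ indices yielding $\binom{\pi(0)}{\pi(0)/2} = \Theta^*(2^{\pi(0)})$. Your introduction of the variables $t_0, t_1, t_{-1}$ and the explicit feasibility check are a slightly more formal presentation of what the paper does in prose, but the content is identical.
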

\begin{proof}
    For every index $i \in [n]$, the value $\vec{c}_i$ determines $\vec{a}_i - \vec{b}_i$ by definition. We observe that
    \begin{align*}
        \text{If } \vec{c}_i &= 1 \text{, then } (\vec{a}_i, \vec{b}_i) \in \{(1, 0), (2, 1)\}\text{.}\\
        \text{If } \vec{c}_i &= -1\text{, then }(\vec{a}_i, \vec{b}_i) \in \{(0, 1), (1, 2)\}\text{.}\\
	\text{If } \vec{c}_i &= 0\text{,  then }(\vec{a}_i, \vec{b}_i) \in \{(0, 0), (1, 1)\}\text{ by Condition~\ref{def:good-pair-3}.}
    \end{align*}

    Note that the $\epsilon n$ appearances of $2$ in $\vec{a}$ must occur at indices $i$ with $c_i = 1$. Thus the factor $\binom{\pi(1)}{\epsilon n}$ represents the number of ways to choose $\epsilon n$ positions for $2$'s in $\vec{a}$ from the $\pi(1)$ indices where $\vec{c}_i = 1$. Similarly, a second factor $\binom{\pi(1)}{\epsilon n} = \binom{\pi(-1)}{\epsilon n}$ represents the number of ways to choose $\epsilon n$ positions for $2$'s in $\vec{b}$ among the $\pi(-1)$ indices where $\vec{c}_i = -1$.

	Once the positions for $2$'s in $\vec{a}$ and $\vec{b}$ are fixed, the value of $(\vec{a}_i, \vec{b}_i)$ is determined at every index $i$ for which $\vec{c}_i = 1$ or $-1$. Thus it remains to count the number of ways to assign $(\vec{a}_i, \vec{b}_i)$ at indices $i$ for which $\vec{c}_i = 0$. By Condition~\ref{def:good-pair-1}, all but $\pi(0)/2$ of the $1$'s in the vector $\vec{a}$ and all but $\pi(0)/2$ of the $1$'s in the vector $\vec{b}$ are accounted for, so half of the indices at which $c_i = 0$ must have $(\vec{a}_i, \vec{b}_i) = (0, 0)$ and half must have $(\vec{a}_i, \vec{b}_i) = (1, 1)$. Thus there are $\Theta^*(2^{\pi(0)})$ valid choices by Stirling's approximation \eqref{eq:stirling}. Multiplying this term with the factors for the choices of $2$'s in $\vec{a}$ and $\vec{b}$ completes the proof of the observation.
\end{proof}

Our definition of $G(\vec{c})$ allows us to prove a new perfect mixing dichotomy: either $G(\vec{c})$ mixes perfectly with respect to $\vec{x}$, or our assumption that the solution with minimal support has $\pi(0) \leq n/3$ is violated, in which case we can solve the instance using \Cref{lem:mucha}.

\begin{lemma}[Perfect Mixing Dichotomy for Equal Subset Sum]
    \label{lem:ess-perfect-mixing}
    For a minimal-support solution $\vec{c}$ with $\pi(0) \le n/3$ and $\pi(1) =
    \pi(-1)$ and non-negative constant $\eps < 1/12$, for every two distinct
    good pairs $(\vec{a}, \vec{b}), (\vec{a}\,', \vec{b}\,') \in G(\vec{c})$, we have
    \[
        \vec{a} \cdot \vec{x} \neq \vec{a}\,' \cdot \vec{x}.
    \]
\end{lemma}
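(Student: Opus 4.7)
The plan is a proof by contradiction, in the spirit of~\Cref{lem:perf-mixing}. Suppose, toward a contradiction, that two distinct good pairs $(\vec{a}, \vec{b}), (\vec{a}\,', \vec{b}\,') \in G(\vec{c})$ collide under $\vec{x}$, i.e., $\vec{a} \cdot \vec{x} = \vec{a}\,' \cdot \vec{x}$. Define $\vec{d} \coloneqq \vec{a} - \vec{a}\,'$. Then $\vec{d} \cdot \vec{x} = 0$ by construction, and $\vec{d} \ne \vec{0}$: Condition~\ref{def:good-pair-2} of~\Cref{def:good-pair} gives $\vec{b} = \vec{a} - \vec{c}$ and $\vec{b}\,' = \vec{a}\,' - \vec{c}$, so $\vec{a} = \vec{a}\,'$ would force $(\vec{a},\vec{b}) = (\vec{a}\,',\vec{b}\,')$ and contradict distinctness.

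The first step would be to check $\vec{d} \in [-1:1]^n$, so that $\vec{d}$ is a genuine nonzero ESS solution vector. This is a straightforward index-wise case analysis on $c_i$: Condition~\ref{def:good-pair-3} forces $(\vec{a}_i, \vec{b}_i) \in \{(0,0),(1,1)\}$ when $c_i = 0$, while Conditions~\ref{def:good-pair-1} and~\ref{def:good-pair-2} with the alphabet $[0:2]$ leave only $\{(1,0),(2,1)\}$ when $c_i = 1$ and symmetrically $\{(0,1),(1,2)\}$ when $c_i = -1$. In each block the two admissible values of $\vec{a}_i$ differ by $1$, so $\vec{d}_i \in \{-1,0,1\}$.

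The heart of the argument is to upper-bound the support of $\vec{d}$ by partitioning $[n]$ according to $c_i$. Since $\vec{a}^{\,-1}(2)$ and the analogous set for $\vec{a}\,'$ are $\eps n$-sized subsets of $\vec{c}^{\,-1}(1)$, indices in $\vec{c}^{\,-1}(1)$ contribute to the support only via the symmetric difference of these two subsets, i.e., at most $2\eps n$. A symmetric count involving the positions where $\vec{b}$ and $\vec{b}\,'$ equal $2$ bounds the contribution from $\vec{c}^{\,-1}(-1)$ by $2\eps n$, and the contribution from $\vec{c}^{\,-1}(0)$ is trivially at most $\pi(0)$. Summing, $|\vec{d}^{\,-1}(\{-1,1\})| \le 4\eps n + \pi(0)$. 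On the other hand, the minimal-support assumption on $\vec{c}$ forces $|\vec{d}^{\,-1}(\{-1,1\})| \ge n - \pi(0)$. Chaining these bounds yields $\pi(0) \ge \left(\tfrac{1}{2} - 2\eps\right) n$, which strictly exceeds $n/3$ when $\eps < 1/12$, contradicting the preprocessing hypothesis $\pi(0) \le n/3$.

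The main obstacle is not any single step but the tight calibration of the constants. The thresholds $\eps < 1/12$ in~\Cref{def:good-pair} and $\pi(0) \le n/3$ from the preprocessing (which reduces the complementary case to~\Cref{lem:mucha}) are matched exactly so that the support arithmetic closes with no slack; any weakening of either parameter would break the dichotomy. Sharpening the threshold would therefore require either a tighter support bound in the $c_i = \pm 1$ blocks or a stronger preprocessing reduction.
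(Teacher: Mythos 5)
Your proposal is correct and is essentially the same argument as the paper's: define $\vec{c}\,' \coloneqq \vec{a} - \vec{a}\,'$, observe it is a nonzero solution vector in $[-1:1]^n$ (where Condition~\ref{def:good-pair-3} of~\Cref{def:good-pair} is crucial), bound the disagreement on $\vec{c}^{\,-1}(\pm 1)$ by $4\eps n$ via the $2$-indices of $\vec{a},\vec{a}\,',\vec{b},\vec{b}\,'$, and then use minimality of $\vec{c}$'s support together with $\pi(0) \le n/3$ and $\eps < 1/12$ to derive a contradiction. The only differences are presentational — you explicitly verify nontriviality of $\vec{d}$ and $\vec{d}\in[-1:1]^n$ and phrase the contradiction via a lower bound on $\pi(0)$, whereas the paper directly lower-bounds the number of zeroes of $\vec{c}\,'$ and contradicts minimality; the arithmetic and the role of each hypothesis are identical.
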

\begin{proof}
    Assume for the sake of contradiction that $\vec{a} \cdot \vec{x} =
    \vec{a}\,' \cdot \vec{x}$ for $\vec{a}$ and $\vec{a}\,'$ as in the lemma statement. As a consequence of \Cref{def:good-pair} we have that
    \[
        \vec{c}\,' \coloneqq \vec{a}-\vec{a}\,' \in [-1 : 1]^n
    \]
is a solution for Equal Subset Sum. (Note that this crucially relies on Condition~\ref{def:good-pair-3}, as without it an index $i$ at which $c_i = 0$ and $a_i - a'_i = 2$ or $-2$ would be possible.) 
    
Because $(\vec{a},\vec{b})$ and $(\vec{a}\,',\vec{b}\,')$ are good solution pairs for
    $\vec{c}$ by assumption, for every index $i \in [n]$ with $c_i = 1$ we
    have that $a_i = a_i' = 1$ unless $a_i = 2$ or $a'_i
    = 2$, which is true for at most $2 \eps n$ indices. Likewise $a_i =
    a_i' = 0$ at all but $2\eps n$ indices $i \in [n]$ where $c_i =
    -1$. Thus the number of $i \in [n]$ for which $a_i - a_i' = 0$ is at
    least
    \[
        \pi(1) + \pi(-1) - 4\eps n > 2n/3 - n/3 = n/3
    \]
    by our assumptions on $\pi(0)$ and $\epsilon$. However, this implies that $\vec{a} - \vec{a}'$ is a solution with support less than $\vec{c}$, contradicting our assumption that $\vec{c}$ is a solution of minimal support.
\end{proof}

\subsection{ Algorithm for Equal Subset Sum }

As usual, the algorithm below is parameterized by the solution profile $\pi$ of the solutions it attempts to recover. To get our final result for ESS we will use it only when it is faster than \Cref{lem:mucha}.

The algorithm follows the general pattern of the representation technique.
However, because we have chosen to use partial solution vectors supported on
$[0:2]^n$, the list $\calS \times \calS$ may contain pseudosolutions (pairs
$(\vec{a}, \vec{b}) \in \calS \times \calS$ such that $(\vec{a} - \vec{b}) \cdot \vec{x} = 0$, but $\vec{a} - \vec{b} \notin [-1:1]^n$). To search for true solutions, for every possible value $i$ we generate a sublist $L_i$ containing all partial solution vectors satisfying $\vec{v} \cdot \vec{x} = i$. Then we run a subroutine that searches for true solutions given two lists of partial solution vectors, a problem that is reminiscent of solving sparse instances of Orthogonal Vectors. Our main result for this problem is given in \Cref{thm:compatibility}. The proof uses the method of compatibility certificates and is presented in \Cref{sec:compatibility-test}.

\begin{algorithm}[ht!]
    \DontPrintSemicolon
    \nlnonumber
    \textbf{function} $\textsf{EqualSubsetSum}(\vec{x}, \pi, \epsilon)$:\\
    Sample a prime $\bp \sim [p_{max} : 2 p_{max}]$ and residue class $\bm{r}
    \sim [\bp]$ for $p_{max}$ as defined in \eqref{eq:psi}.\label{ln:ess-1}\\
    Enumerate the set\label{ln:ess-2}
        \begin{align*}
            \calS &\coloneqq \{ \vec{v} \in [0:2]^n \; | \; |\vec{v}^{\,-1}(1)| = n/2,
            |\vec{v}^{\,-1}(2)| = \epsilon n, \vec{v} \cdot \vec{x} \equiv \bm{r} \pmod{\bp} \}.
        \end{align*}
        If $|\calS| \geq \Ex[|\calS|] \cdot n^{\omega(1)}$
        , halt and return ``\emph{failure}''.
        
    Enumerate the list $L_i \coloneqq \{ \vec{v} \in \calS \; | \; \vec{v} \cdot \vec{x} = i \}$ for every value $i$ that yields a nonempty list. \label{ln:ess-3}\\
    For each $L_i$, use \Cref{thm:compatibility} to decide if there exists a good solution pair $(\vec{a}, \vec{b}) \in L_i \times L_i$.\label{ln:ess-4}
    \caption{Outline of our algorithm for Equal Subset Sum. Implementation
    details omitted from this figure are specified in the proofs of correctness
and runtime (\Cref{lem:ess-runtime-part1,lem:ess-correctness}).}
    \label{alg:ESS}
\end{algorithm}

We start by estimating $\Ex[|\calS|]$.

\begin{observation}\label{lem:exp-list-length}
    Let $\calS$ be defined with respect to a solution profile $\pi$ with $\pi(0) \leq n/3$ and $\pi(1) = \pi(-1)$, a non-negative constant $\epsilon < 1/12$, a random prime $\bp \sim [p_{max} : 2p_{max}]$ and a random residue class $\bm{r} \sim [\bp]$ as in Algorithm \ref{alg:ESS}. Then
    \begin{equation}
         \label{eq:exp-list-length-ESS}
         \Ex[|\calS|] = \Os \left( \frac{2^n \cdot \binom{n/2}{\epsilon n}} {p_{max}} \right)
    \end{equation}
\end{observation}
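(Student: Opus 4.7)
The plan is to count the total number of vectors $\vec{v} \in [0:2]^n$ that satisfy the two structural constraints $|\vec{v}^{\,-1}(1)| = n/2$ and $|\vec{v}^{\,-1}(2)| = \epsilon n$, and then apply linearity of expectation to account for the modular filtering step.

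First, I would count the vectors satisfying the structural constraints. Since a vector $\vec{v}$ with $|\vec{v}^{\,-1}(1)| = n/2$ and $|\vec{v}^{\,-1}(2)| = \epsilon n$ automatically has $|\vec{v}^{\,-1}(0)| = n/2 - \epsilon n$, the count is the multinomial coefficient
\[
    \binom{n}{n/2,\; \epsilon n,\; n/2 - \epsilon n} = \binom{n}{n/2} \cdot \binom{n/2}{\epsilon n},
\]
where the second equality follows by first choosing the $n/2$ indices that will hold a $1$ and then choosing $\epsilon n$ of the remaining $n/2$ indices to hold a $2$. By Stirling's approximation \eqref{eq:stirling}, $\binom{n}{n/2} = \Theta^*(2^n)$, so the number of such vectors is $\Theta^*\left(2^n \cdot \binom{n/2}{\epsilon n}\right)$.

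Next, I would bound the probability that a fixed such vector lands in $\calS$. For any fixed $\vec{v}$ and any fixed value of $\bp$, the integer $\vec{v} \cdot \vec{x}$ is deterministic, and the probability that $\vec{v} \cdot \vec{x} \equiv \bm{r} \pmod{\bp}$ over the uniform choice $\bm{r} \sim [\bp]$ is exactly $1/\bp \leq 1/p_{max}$. Averaging over $\bp$ as well, the unconditional probability that $\vec{v} \in \calS$ is at most $1/p_{max}$.

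Applying linearity of expectation over all vectors satisfying the structural constraints gives
\[
    \Ex[|\calS|] \leq \Theta^*\left(2^n \cdot \binom{n/2}{\epsilon n}\right) \cdot \frac{1}{p_{max}} = \Os\left( \frac{2^n \cdot \binom{n/2}{\epsilon n}}{p_{max}} \right),
\]
as required. No step presents a real obstacle: the calculation is a direct combination of a multinomial identity, Stirling's approximation, and linearity of expectation.
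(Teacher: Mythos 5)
Your proof is correct and follows the same approach as the paper: count the ambient set $\calS_0$ of vectors satisfying the structural constraints (getting $\Theta^*(2^n \binom{n/2}{\epsilon n})$ by Stirling), observe each falls into $\calS$ with probability at most $1/p_{max}$ over the choice of $\bm{r}$, and conclude by linearity of expectation. The only difference is that you spell out the multinomial decomposition and the conditioning on $\bp$ slightly more explicitly, which is fine.
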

\begin{proof}
    The size of the set 
    \begin{equation*}
        \label{eq:ess-C0}
        \calS_0 \coloneqq \{ \vec{v} \in [0:2]^n \mid |\vec{v}^{\,-1}(1)| = n/2,
        |\vec{v}^{\,-1}(2)| = \eps n\}
    \end{equation*}
    is $O^*(2^n \cdot \binom{n/2}{\epsilon n})$ by Stirling's approximation.
    Because $\bp \geq p_{max}$ by definition and every element of $\calS_0$
    falls into $\calS$ with probability $1 / \bp$, the result follows by linearity of expectation.
\end{proof}

\begin{lemma}[Runtime of Algorithm \ref{alg:ESS}]
    \label{lem:ess-runtime-part1}
    Given an input instance $\vec{x}$, a solution profile $\pi$ with $\pi(0) \leq n/3$ and $\pi(1) = \pi(-1)$, and a non-negative constant $\epsilon < 1/12$, Algorithm \ref{alg:ESS} runs in time 
    \begin{equation}
        \label{eq:ess-runtime-variables}
        \Os\left( \frac{2^n \cdot \binom{n/2}{\eps n} \cdot 2^{c(\eps) n}}{p_{max}} \right),
    \end{equation}
    where $p_{max}$ is defined as in \eqref{eq:psi} and $c(\eps)$ is defined as in \eqref{eq:c-certificate}.
\end{lemma}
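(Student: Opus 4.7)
The plan is to charge the total cost of Algorithm~\ref{alg:ESS} to its four lines and verify that the sum matches the claimed bound.

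Line~\ref{ln:ess-1} takes $\poly(n)$ time: since $p_{max}$ has magnitude $2^{\poly(n)}$ under the input-size assumption from~\Cref{sec:prelims}, sampling a prime from $[p_{max}:2p_{max}]$ and a residue modulo $\bp$ can be done in polynomial time by standard randomized primality testing. For Line~\ref{ln:ess-2} I would reuse the dynamic programming template from the proofs of~\Cref{lem:balanced-SB-with0s} and~\Cref{lem:balancedSB-without0}, augmenting the DP state to additionally track the number of $1$'s and $2$'s placed so far. The resulting table has $\Os(p_{max})$ cells, each fillable in $O(1)$ time, and backtracking from the terminal cells encoding residue $\bm{r}$, exactly $n/2$ ones, and exactly $\eps n$ twos enumerates $\calS$ in time $\Os(|\calS|)$. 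The early-termination clause in Line~\ref{ln:ess-2} guarantees $|\calS| = \Os(\Ex[|\calS|])$, which by~\Cref{lem:exp-list-length} equals $\Os(2^n \binom{n/2}{\eps n}/p_{max})$.

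Line~\ref{ln:ess-3} sorts $\calS$ by $\vec{v}\cdot\vec{x}$ in time $\Os(|\calS|)$ and groups the entries into the sublists $L_i$. Line~\ref{ln:ess-4} then calls~\Cref{thm:compatibility} with $\inputA = \inputB = L_i$ once for each nonempty $L_i$. Because~\Cref{thm:compatibility} runs in time $\Os(2^{c(\eps) n} \cdot |L_i|)$ per call and the sublists $L_i$ partition $\calS$, summing over $i$ gives a total contribution of $\Os(2^{c(\eps) n} \cdot |\calS|) = \Os\!\left(2^n \binom{n/2}{\eps n} \cdot 2^{c(\eps) n}/p_{max}\right)$.

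Summing the four contributions, the overall runtime is $\Os\!\left(p_{max} + 2^n \binom{n/2}{\eps n} \cdot 2^{c(\eps) n}/p_{max}\right)$. The main obstacle is to verify that the $\Os(p_{max})$ DP-filling term is absorbed by the target bound; equivalently, that
\[
    p_{max}^2 \;=\; \Os\!\left( 2^n \cdot \binom{n/2}{\eps n} \cdot 2^{c(\eps) n} \right).
\]
Substituting $p_{max} = \Theta^*(\binom{\pi(1)}{\eps n}^2 \cdot 2^{\pi(0)})$ from~\eqref{eq:psi} and applying Stirling's approximation~\eqref{eq:stirling} to each binomial reduces this inequality to a closed-form comparison in the ratios $\pi(0)/n$ and $\eps$, which can be verified using the constraints $\pi(0) \leq n/3$, $\pi(1) = \pi(-1) = (n-\pi(0))/2$, and $\eps < 1/12$ to complete the proof.
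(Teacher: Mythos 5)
Your line-by-line decomposition matches the paper's proof exactly, and you are actually more careful than the paper in one respect: you explicitly track the $\Os(p_{max})$ cost of building the DP table, whereas the paper simply asserts that Lines~\ref{ln:ess-2}--\ref{ln:ess-3} take $\Os(|\calS|)$ time (even though the referenced DP from the proof of \Cref{lem:balanced-SB-with0s} fills a table of $\Theta^*(p_{max})$ cells before any backtracking happens). So your identification of the verification obligation $p_{max}^2 = \Os(2^n \binom{n/2}{\eps n} 2^{c(\eps)n})$ is a real and necessary step.

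However, that verification does not in fact hold across the entire hypothesis range of the lemma, so your final sentence (``...which can be verified...'') is a gap. Take $\pi(0) = n/3$ (so $\pi(1) = \pi(-1) = n/3$) and $\eps$ close to $1/12$: then $p_{max} = \Theta^*\bigl(\binom{n/3}{n/12}^2 \cdot 2^{n/3}\bigr) \approx 2^{(\frac{2}{3}H(1/4) + \frac{1}{3})n} \approx 2^{0.874n}$, while the right-hand side is $2^n \cdot \binom{n/2}{n/12} \cdot 2^{c(1/12)n} \approx 2^{(1 + \frac{1}{2}H(1/6) + 0.255)n} \approx 2^{1.58n}$, so $p_{max}^2 \approx 2^{1.75n}$ strictly exceeds it. Equivalently, the DP table-filling cost $\approx 2^{0.874n}$ exceeds the claimed runtime $\approx 2^{0.706n}$. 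The paper's own proof shares this gap (it never justifies why the $\Os(p_{max})$ term is dominated, and unlike \Cref{lem:balanced-SB-with0s} it does not cap $p_{max}$ at $\sqrt{|\calS_0|}$). The downstream \Cref{thm:ess} is still safe because its optimizer lands at $p = \pi(0)/n \approx 0.223$, $\eps \approx 0.045$, where $p_{max} \approx 2^{0.625n}$ is comfortably below the target $\approx 2^{0.775n}$; but to make \Cref{lem:ess-runtime-part1} correct as a standalone statement you would need either a tighter hypothesis on $(\pi(0), \eps)$, an explicit cap on $p_{max}$, or a different enumeration of $\calS$ whose cost stays below the target (the naive DP does not).
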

\begin{proof}
 Line~\ref{ln:ess-1} needs polynomial time. Note that we can compute $p_{max}$ for a given $\pi$ even if $\vec{c}$ is unknown or no solution vector corresponding to $\pi$ exists.

    For Lines~\ref{ln:ess-2} and \ref{ln:ess-3}, we can enumerate $\calS$ in
    time $\Os(|\calS|)$ using the dynamic programming approach described in the proof of runtime for \Cref{lem:balanced-SB-with0s}. The algorithm halts if 
    \[
        |\calS| \geq \Ex[|\calS|] \cdot n^{\omega(1)} = \frac{2^n \cdot \binom{n/2}{\eps n}}{p_{max}} \cdot n^{\omega(1)},
    \]
    where we use \eqref{eq:exp-list-length-ESS} to bound $\Ex[|\calS|]$, so this
    is an upper bound on the runtime of this step. Once we have enumerated
    $\calS$, every sublist $L_i$ can also be enumerated in time $\Os(|\calS|)$
    by sorting the vectors in $\calS$ by dot product with $\vec{x}$. 

    In Line~\ref{ln:ess-4}, we apply \Cref{thm:compatibility} to every pair
    $(L_i, L_i)$, which takes time $O(2^{c(\eps) n} \cdot |\calS|)$ in total for
    all list pairs. Plugging in the upper bound on $\calS$ enforced by the algorithm yields \eqref{eq:ess-runtime-variables}.
\end{proof}

\begin{lemma}[Correctness of Algorithm \ref{alg:ESS}]
    \label{lem:ess-correctness}
    Given an input instance $\vec{x}$, a solution profile $\pi$ with $\pi(0) \leq n/3$ and $\pi(1) = \pi(-1)$, and a non-negative constant $\epsilon < 1/12$, if there exists any solution vector $\vec{c}$ corresponding to $\pi$, Algorithm \ref{alg:ESS} recovers it with probability $n^{-O(1)}$.
\end{lemma}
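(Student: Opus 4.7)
The plan is to verify that with inverse polynomial probability Algorithm~\ref{alg:ESS}, applied to the correct solution profile $\pi$ for some minimum-support solution vector $\vec{c}$, actually produces a pair $(\vec{a},\vec{b}) \in G(\vec{c})$ (see~\Cref{def:good-pair}), which corresponds to a valid Equal Subset Sum solution. The argument reduces to three things: (i) the set of good solution pairs survives the hashing step, (ii) the filtered set $\calS$ is not too large, and (iii) the compatibility theorem recovers at least one genuine pair.

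First I would define the ``good set'' of integers for the Prime Distribution Lemma as
\[
    G \coloneqq \{\vec{a} \cdot \vec{x} \mid \exists \vec{b} \text{ with } (\vec{a},\vec{b}) \in G(\vec{c})\},
\]
and the ambient set $Y$ as the image under $\vec{x}$ of the relevant size-constrained subset of $[0:2]^n$, treated as a multiset so that the cardinality of the preimage $\calS$ itself is what is bounded. The Perfect Mixing Dichotomy for Equal Subset Sum (\Cref{lem:ess-perfect-mixing}) immediately gives $|G| = |G(\vec{c})| = p_{max}$: the hypotheses $\pi(0) \leq n/3$ and $\pi(1) = \pi(-1)$ and $\eps < 1/12$ (which are granted by the preprocessing in \Cref{subsec:ess-preprocessing}) are exactly the conditions under which distinct good pairs cannot collide. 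Since $\diam(G) \leq 2^{n^{O(1)}}$ under our input size assumption and $p_{max} = |G|$, the hypotheses of \Cref{lem:prime-dist-main} are met.

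Next I would invoke \Cref{lem:prime-dist-main}: with constant probability over $\bp$, there is a collection $\bm{R}$ of ``good'' residue classes of size at least $p_{max} \cdot n^{-O(1)}$, such that for every $r \in \bm{R}$, (a) at least $|G|/(4p_{max}) = 1/4$, hence at least one, element of $G$ lies in class $r$, and (b) at most $|Y|\cdot n^{O(1)}/p_{max} = \Ex[|\calS|] \cdot n^{O(1)}$ elements of $Y$ lie in class $r$. Since $\bm{r}$ is drawn uniformly from $[\bp]$, the probability that $\bm{r} \in \bm{R}$ is $n^{-O(1)}$. Conditioning on this event, property (b) ensures $|\calS| \leq \Ex[|\calS|]\cdot n^{O(1)}$, so the early-termination check in Line~\ref{ln:ess-2} does not fire, and property (a) guarantees the existence of at least one pair $(\vec{a},\vec{b}) \in G(\vec{c})$ with $\vec{a} \in \calS$. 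Because $\vec{a}-\vec{b} = \vec{c}$ and $\vec{c}\cdot\vec{x}=0$, we have $\vec{a}\cdot\vec{x} = \vec{b}\cdot\vec{x}$, so $\vec{b}$ also satisfies the congruence defining $\calS$ and hence $\vec{b} \in \calS$; moreover both vectors are placed in the same sublist $L_i$ with $i = \vec{a}\cdot\vec{x}$ in Line~\ref{ln:ess-3}.

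Finally, $L_i$ satisfies the structural hypotheses of \Cref{thm:compatibility} (every vector has exactly $n/2$ coordinates equal to $1$ and exactly $\eps n$ coordinates equal to $2$, with $\eps < 1/12 \leq 1/4$). Therefore the call to \Cref{thm:compatibility} on $(L_i, L_i)$ returns, with high probability, some pair $(\vec{a}\,',\vec{b}\,') \in L_i \times L_i$ with $\vec{a}\,' - \vec{b}\,' \in [-1:1]^n$; any such pair yields a valid Equal Subset Sum solution since $(\vec{a}\,'-\vec{b}\,') \cdot \vec{x} = 0$. Chaining the inverse-polynomial event that $\bm{r}$ is good with the high-probability event that \Cref{thm:compatibility} succeeds gives overall probability $n^{-O(1)}$. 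The only delicate point is the multiset interpretation of $Y$ when applying \Cref{lem:prime-dist-main}: different vectors in the ambient set may produce identical dot products, so one must either formally use the multiset version of the lemma or lift $Y$ to a set of (vector, sum) pairs whose diameter is still bounded; I expect this bookkeeping to be the main technical step of the proof but not a genuine obstacle.
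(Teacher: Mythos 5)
Your proposal is correct and follows essentially the same path as the paper's proof: the paper also instantiates \Cref{lem:prime-dist-main} with $G = \{\vec{a}\cdot\vec{x} \mid (\vec{a},\vec{b})\in G(\vec{c})\}$ and $Y = \calS_0 \cdot \vec{x}$, invokes \Cref{lem:ess-perfect-mixing} to get $|G|=|G(\vec{c})|=p_{max}$, conditions on landing in a good residue class to bound $|\calS|$ and guarantee the survival of a good pair in some $L_i$, and then hands off to \Cref{thm:compatibility}. The multiset subtlety you flag for $Y$ (distinct vectors in $\calS_0$ can share a dot product, yet the bound must control $|\calS|$ as a set of vectors) is a real loose end that the paper elides; your observation that the Prime Distribution Lemma's proof only uses the total cardinality and residue-class counts, and so applies verbatim to multisets (or to $Y$ lifted to vector--sum pairs), is the correct way to close it.
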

\begin{proof}
    Fix $\vec{x}$, $\pi$, $\epsilon$ and $\vec{c}$ as in the lemma statement.
    For any good solution pair $(\vec{a}, \vec{b})$ we have $(\vec{a} - \vec{b})
    \cdot \vec{x} = \vec{c} \cdot \vec{x} = 0$ by definition, so $\vec{a} \cdot
    \vec{x} = \vec{b} \cdot \vec{x}$ and thus $\vec{a} \in \calS$ if and only if
    $\vec{b} \in \calS$. If $\vec{a}, \vec{b} \in \calS$ and $|\calS| <
    \Ex[|\calS|] \cdot n^{\omega(1)}$, then the algorithm enumerates $\calS$ and $\vec{a}, \vec{b} \in L_i$ for $i = \vec{a} \cdot \vec{x}$. Finally, in this case \Cref{thm:compatibility} recovers $(\vec{a}, \vec{b})$ with probability $n^{-O(1)}$.

    It remains to show that, with probability $n^{-O(1)}$, $\vec{a} \in \calS$
    for some solution pair $(\vec{a}, \vec{b})$ and the list $\calS$ satisfies
    $|\calS| < \Ex[|\calS|] \cdot n^{\omega(1)}$. This follows from \Cref{lem:prime-dist-main} with
    \begin{align*}
        G &= \{ \vec{a} \cdot \vec{x} \; | \; (\vec{a}, \vec{b}) \in G(\vec{c}) \},\\
        Y &= \calS_0 \cdot \vec{x} = \{ \vec{v} \in [0:2]^n \; \; | \; |\vec{v}^{\,-1}(1)| =
        n/2, |\vec{v}^{\,-1}(2)| = \epsilon n\} \cdot \vec{x},
   \end{align*}
    using $p_{max}$ as the integer bound represented by the same symbol in the
    lemma statement. Importantly, we have $|G| = |G(\vec{c})|$ by our perfect
    mixing dichotomy (\Cref{lem:ess-perfect-mixing}), $p_{max} \leq O(|G|)$ by
    definition \eqref{eq:psi}, and $Y \supseteq G$ with $|Y| = \Os(2^n \cdot \binom{n/2}{\eps
    n})$. Thus the set of good residue classes $\bm{R}$, which contains choices
    for $\bm{r}$ such that $\calS$ contains a solution pair and 
    \[
        |\calS| \leq \frac{|Y|}{p_{max}} n^{O(1)} = \Ex[|\calS|] \cdot n^{O(1)},
    \]
    is an $n^{-O(1)}$-fraction of all choices for $\bm{r}$ with constant probability over the choice of $\bp$ and $\bm{r}$.

    The lemma follows from multiplying the constant probability that \Cref{lem:prime-dist-main} obtains with the probability that we choose a good residue class in that event and the chance that \Cref{thm:compatibility} recovers a solution. 
\end{proof}

\subsection{ Optimizing Runtime }

\begin{theorem}\label{thm:ess}
    Equal Subset Sum can be solved in time $\Os(1.7067^n)$ with high probability.
\end{theorem}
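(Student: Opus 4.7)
The plan is to combine Algorithm~\ref{alg:ESS} with the algorithm behind \Cref{lem:mucha}, running whichever is faster on each candidate solution profile. Because both algorithms have one-sided error and there are only $\poly(n)$ candidate profiles, trying them all in parallel costs only a polynomial overhead. First I would apply the preprocessing of \Cref{subsec:ess-preprocessing}: by \Cref{lem:rerandomization} we may assume $\pi(1) = \pi(-1)$, and \Cref{lem:mucha} already yields $\Os(1.7^n)$ whenever $\pi(0) > n/3$, so I can additionally assume $\pi(0) \le n/3$ and that $\vec{c}$ has minimal support (which is needed to invoke the mixing dichotomy \Cref{lem:ess-perfect-mixing}).

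Next, setting $\alpha \coloneqq \pi(0)/n$ and $\beta \coloneqq \pi(1)/n = \pi(-1)/n$, so that $\alpha + 2\beta = 1$ and $\alpha \in [0,1/3]$, Stirling's approximation converts the two runtime bounds into closed-form exponents. \Cref{lem:mucha} gives the exponent
\[
    T_{\mathrm{M}}(\alpha) \coloneqq \min\bigl\{\, \tfrac{1}{2} H(\alpha) + \beta,\; 2\beta \,\bigr\},
\]
and, after substituting the expression for $p_{max}$ from \eqref{eq:psi} into \Cref{lem:ess-runtime-part1},
\[
    T_{\mathrm{A}}(\alpha, \epsilon) \coloneqq 2\beta\bigl(1 - H(\epsilon/\beta)\bigr) + \tfrac{1}{2} H(2\epsilon) + c(\epsilon),
\]
with $c(\epsilon)$ as in \eqref{eq:c-certificate}. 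The theorem then reduces to the minimax inequality
\[
    \max_{\alpha \in [0,1/3]} \; \min\Bigl\{\, T_{\mathrm{M}}(\alpha),\; \inf_{0 \le \epsilon < 1/12} T_{\mathrm{A}}(\alpha, \epsilon) \,\Bigr\} \;\le\; \log_2(1.7067) \approx 0.7717.
\]

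Outside a narrow window around $\alpha \approx 0.2271$, \Cref{lem:mucha} already suffices: $T_{\mathrm{M}}(\alpha)$ sits well below the target except near the crossover of its two branches, where $T_{\mathrm{M}} \approx 0.7729 = \log_2(1.7088)$. Inside that window I would rely on the new bound $T_{\mathrm{A}}$. A short calculation shows $\partial T_{\mathrm{A}} / \partial \epsilon \to -\infty$ as $\epsilon \to 0^+$ (the $-2 H'(\epsilon/\beta)$ contribution dominates both $H'(2\epsilon)$ and the $\Theta(\epsilon \log(1/\epsilon))$ growth of $c(\epsilon)$), so a strictly positive $\epsilon^{\star}(\alpha)$ exists at which $T_{\mathrm{A}}(\alpha, \epsilon^{\star}) < T_{\mathrm{M}}(\alpha)$ throughout the window. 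Note also that $T_{\mathrm{A}}(\alpha, 0) = 2\beta$ recovers the second branch of $T_{\mathrm{M}}$, so any positive $\epsilon$ that improves on this is pure gain.

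The main obstacle is the numerical verification that $1.7067$ is genuinely the right constant. The benefit of enlarging $\epsilon$ (more representations, smaller surviving list $\calS$) competes with the compatibility-certificate overhead $c(\epsilon)$, and the worst-case $\alpha$ of the combined algorithm need not coincide with the worst case of \Cref{lem:mucha} alone. Concretely, this step is a computer-assisted minimax search over $(\alpha, \epsilon) \in [0, 1/3] \times [0, 1/12)$ verifying that the pointwise minimum of the two exponents stays at most $\log_2(1.7067)$ throughout; a few sample points near the crossover $\alpha \approx 0.227$ with $\epsilon \approx 0.05$--$0.07$ already give exponents below $0.770$, suggesting meaningful slack. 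Once the numerics confirm the bound uniformly, the theorem follows by combining \Cref{lem:ess-correctness,lem:ess-runtime-part1,lem:mucha} with standard amplification of the success probability.
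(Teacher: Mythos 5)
Your proposal follows the same structure as the paper's proof: assume $\pi(1)=\pi(-1)$ and $\pi(0)\le n/3$ via the preprocessing, write the running times of \Cref{lem:mucha} and Algorithm~\ref{alg:ESS} as closed-form exponents in $\alpha=\pi(0)/n$ and $\epsilon$ using Stirling, set up the minimax over $(\alpha,\epsilon)$, and resolve it numerically. Your formulas $T_\mathrm{M}$ and $T_\mathrm{A}$ match the paper's \eqref{eq:tradeoff-1} and \eqref{eq:tradeoff-2} after the substitution $1-\alpha = 2\beta$, and the paper's own proof is exactly this computer-assisted evaluation.

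One piece of your informal motivation is wrong, though not in a way that breaks the proof. You claim $\partial T_\mathrm{A}/\partial\epsilon \to -\infty$ as $\epsilon\to 0^+$ because $-2H'(\epsilon/\beta)$ dominates. In fact all three contributions are $\Theta(\log(1/\epsilon))$: $-2H'(\epsilon/\beta)\approx 2\log_2\epsilon$, $H'(2\epsilon)\approx -\log_2\epsilon$, and (differentiating \eqref{eq:c-certificate} term by term and letting $\epsilon\to 0$) $c'(\epsilon)\approx -\log_2\epsilon - 2$. The divergent parts cancel exactly, leaving the finite limit
\[
\lim_{\epsilon\to 0^+}\frac{\partial T_\mathrm{A}}{\partial\epsilon} = -2\log_2\beta - 3,
\]
which is negative iff $\beta > 2^{-3/2}$, i.e.\ $\alpha < 1 - 2^{-1/2}\approx 0.293$. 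This still covers the regime near the crossover $\alpha\approx 0.227$, so your qualitative conclusion that a small positive $\epsilon$ improves on $T_\mathrm{A}(\alpha,0)=2\beta$ is correct there, but the improvement is linear rather than super-linear in $\epsilon$, which is why the final gain over $1.7088$ is modest. Your sample-point figure of ``exponents below $0.770$'' is also slightly optimistic: the paper's verified maximum is $\approx 0.7712$ at $p\approx 0.2227$, $\epsilon\approx 0.0449$ (outside your $[0.05,0.07]$ range). Since the proof ultimately rests on the numerical minimax rather than the derivative heuristic, the theorem still follows, but the sanity check as stated would not survive scrutiny.
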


\begin{proof}
    Recall from \Cref{subsec:ess-preprocessing} that we can assume without loss of generality that $\pi(1) = \pi(1)$ and $\pi(0) \leq n/3$. Using these assumptions to rewrite the first term of \Cref{lem:mucha}, we have that we can solve Equal Subset Sum in time
    \begin{equation}\label{eq:tradeoff-1}
        \Os\left( 
                \binom{n/2}{\frac{\pi(0)}{2}} \binom{\frac{\pi(1) + \pi(-1)}{2}}{\frac{\pi(1)}{2}}
        \right)
        = \Os\left(2^{\frac{1}{2}H(\frac{\pi(0)}{n}) + \frac{\pi(1)+\pi(-1)}{2}}\right)
    \end{equation}
    using existing algorithms, where the second term simplifies the binomials using Stirling's Approximation \eqref{eq:stirling}. 

    By~\cref{lem:ess-runtime-part1} we know that Equal Subset Sum can be solved in
    \begin{equation}\label{eq:tradeoff-2}
        \Os\left( \frac{2^n \cdot \binom{n/2}{\eps n} \cdot 2^{c(\eps) n}}{p_{max}} \right)
        = \Os\left(
            2^{n + H(2\eps)\frac{n}{2} + c(\eps)n - 2 H\left(\frac{\eps n}{\pi(1)}\right)\pi(1) - \pi(0)}
        \right)
    \end{equation}
    where $p_{max}$ and $c$ are defined as in \eqref{eq:psi} and \eqref{eq:c-certificate}, respectively.\footnote{Note that at $\eps = 0$ this expression is equivalent to $\Os(\binom{\pi(1) + \pi(-1)}{\pi(1)})$, the other term in \Cref{lem:mucha}. To directly verify that~\eqref{eq:tradeoff-2} improves upon~\cref{lem:mucha}, one can check that for sufficiently small $\eps > 0$, \eqref{eq:tradeoff-2} is always smaller than this term near the worst-case, when $\pi(1) + \pi(-1) \approx 0.773n$.}

    To optimize the exponent in the runtime we can take the base-2 logarithm of \eqref{eq:tradeoff-1} and \eqref{eq:tradeoff-2}, write both expressions in terms of the single parameter $p = \frac{\pi(0)}{n}$, and choose $\epsilon < 1/12$ to minimize the constant in the exponent over worst-case $\pi$:
    \[
        \max_{p \in [0,1)} \min_{\eps \in [0,1/12)} 
        \Os\left(\min\left\{
            1 + \frac{H(2\eps)}{2} + c(\eps) 
            - 2 H\left(\frac{2 \eps}{1-p}\right)\frac{1-p}{2} - p,\;\;
            \frac{H(p) + 1-p}{2}
        \right\}\right)
    \]
    Computer evaluation reveals that this expression is maximized at $p = 0.22266$ at which point the best choice of $\epsilon$ is $\eps = 0.04493$ and the running time is $\Os(2^{0.771167n}) = \Os(1.7067^n)$ as desired.
\end{proof}

\section{Compatibility Testing}
\label{sec:compatibility-test}


In the previous section, we presented an algorithm for Equal Subset Sum that enumerated lists of partial solutions and searched the lists for a solution pair. However, some partial solution pairs were pseudosolutions: pairs of vectors $(\vec{a}, \vec{b})$ satisfying $(\vec{a} - \vec{b}) \cdot \vec{x} = 0$, but $\vec{a} - \vec{b} \not \in [-1 : 1]^n$. To solve this problem we referred to  a subroutine that could search two lists of vectors for a ``compatible pair'' satisfying $\vec{a} - \vec{b} \in [-1 : 1]^n$. This section presents that subroutine. 

An alternative approach would be to reduce our problem to an instance of Orthogonal Vectors. However, existing worst-case algorithms for Orthogonal Vectors run in time quadratic in the size of the input lists, which is too slow for our purposes. Instead, we must take advantage of the fact that our input vectors are effectively very sparse, since we need only worry about the $\epsilon n$ indices at which $\vec{a}_i$ or $\vec{b}_i = 2$. Exploiting this allows us to get an algorithm that approaches linear time in the length of the input lists as $\epsilon$ goes to $0$. Our algorithm builds on ideas from \cite{NederlofW21}, in which a similar approach was used. 

\begin{theorem}
    \label{thm:compatibility}
    Fix a constant $0 \le \eps \le 1/4$ and let $\inputA,\inputB \subseteq [0:2]^d$ be two sets of vectors such that for every $\vec{a} \in \inputA$ and $\vec{b} \in \inputB$ it holds that
    \begin{align*}
        |\vec{a}^{\,-1}(2)| = |\vec{b}^{\,-1}(2)| = \eps d \text{ and } |\vec{a}^{\,-1}(1)| =
        |\vec{b}^{\,-1}(1)| = d/2.
    \end{align*}
    There exists an algorithm that recovers $(\vec{a}, \vec{b}) \in \inputA \times \inputB$ such that $\vec{a}-\vec{b} \in [-1 : 1]^d$ with high probability,
    if such a pair exists, and runs in time $\Oh\left( 2^{c(\eps) d} \cdot \left(|\inputA| + |\inputB| \right)\right)$, where
    \begin{equation}
        \label{eq:c-certificate}
        c(\eps) \coloneqq
        (1-\eps) \cdot 
        H\left( \frac{\eps}{1-\eps} \right) +
        \left(1/2+\eps\right) \cdot H\left(
            \frac{4\eps}{1+2\eps}
        \right) - H(2\eps) +
     o(1).
    \end{equation}
\end{theorem}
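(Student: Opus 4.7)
The plan is to build a randomized bucketing procedure around the compatibility certificates sketched in the introduction. Given index sets $L, R \subseteq [d]$, call a vector $\vec{a} \in \inputA$ \emph{$L$-$R$ eligible} when $\vec{a}^{\,-1}(2) \subseteq L$ and $R \subseteq \vec{a}^{\,-1}(1)$, and symmetrically call $\vec{b} \in \inputB$ eligible when $\vec{b}^{\,-1}(2) \subseteq R$ and $L \subseteq \vec{b}^{\,-1}(1)$. A case analysis on the membership of each index in $L \cup R$ shows that any pair of eligible vectors is automatically compatible: on $L$ we have $\vec{b}_i = 1$, on $R$ we have $\vec{a}_i = 1$, and outside $L \cup R$ neither vector carries a $2$, so $\vec{a} - \vec{b} \in [-1:1]^d$. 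Conversely, every compatible pair without a $(2,2)$ coordinate admits such a certificate (take $L = \vec{a}^{\,-1}(2)$, $R = \vec{b}^{\,-1}(2)$); pairs with a $(2,2)$ coordinate do not arise in the intended ESS application by Condition~\ref{def:good-pair-3} of Definition~\ref{def:good-pair}, and in general are handled by a short preprocessing branch that guesses the $(2,2)$-set.

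The algorithm then repeatedly samples $(L, R)$ from a carefully chosen joint distribution; for each sample, it scans $\inputA$ and $\inputB$ once, marking eligible vectors in $O(d)$ time per vector via bitmask operations on the characteristic vectors of $L$, $R$, and the 1- and 2-coordinate sets of each input vector, and returns any pair from the cross product of the two eligible sets once both are non-empty. The total cost is $\Oh(N / q_{\mathrm{cert}} \cdot (|\inputA| + |\inputB|))$, where $q_{\mathrm{cert}}$ is the probability that a random $(L, R)$ is a valid certificate for a fixed target compatible pair and $N = \poly(d)$ amplifies the success probability to high probability via standard Chernoff bounds. Thus the theorem reduces to exhibiting a distribution with $-\log_2 q_{\mathrm{cert}} / d \le c(\eps) + o(1)$.

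After Stirling's approximation~\eqref{eq:stirling}, the target exponent decomposes as
\[
    c(\eps) \cdot d = \log_2 \binom{(1 - \eps) d}{\eps d} + \log_2 \binom{(1/2 + \eps) d}{2 \eps d} - \log_2 \binom{d}{2\eps d} + o(d),
\]
which suggests a two-stage sampling scheme: first guess a joint 2-profile $S \subseteq [d]$ of size $2\eps d$, intended to contain $\vec{a}^{\,-1}(2) \cup \vec{b}^{\,-1}(2)$ (one of $\binom{d}{2\eps d}$ choices); then, conditioned on $S$, sample the certificate $(L,R)$ by independently assigning the ``free'' indices of $L$ inside $\vec{b}^{\,-1}(1)$ (subject to containing the portion of $S$ belonging to $\vec{a}^{\,-1}(2)$) and symmetrically for $R$. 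The two positive binomial factors count the freedom in the two independent stages, while the subtracted $\binom{d}{2\eps d}$ is the normalization across all possible joint 2-profiles. The main technical obstacle is verifying that this distribution produces each valid certificate for the target pair with probability at least $2^{-c(\eps) d - o(d)}$, i.e.\ that the combinatorial factorization above matches the conditional probabilities line by line; a straightforward independent uniform-size distribution on $(L,R)$ does \emph{not} achieve $c(\eps)$, so the correlated two-stage structure is essential. Secondary concerns---efficient bookkeeping when $2^{c(\eps)d}$ exceeds $|\inputA| + |\inputB|$ (handled by bucketing the inputs once by certificate pattern rather than rescanning per sample) and amplification to high probability (absorbed into the $o(1)$ term)---are standard but must be assembled with care.
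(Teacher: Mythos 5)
Your approach diverges fundamentally from the paper's, and the divergence is where the gap lies. You propose to sample a single certificate pair $(L,R)$ from a (possibly correlated) distribution and rescan $\inputA$ and $\inputB$ once per sample; the running time is then $\Theta(1/q_{\mathrm{cert}}\cdot(|\inputA|+|\inputB|))$ where $q_{\mathrm{cert}}$ is the worst-case hit probability. For this to prove the theorem you need $\log_2(1/q_{\mathrm{cert}}) \le c(\eps)d + o(d)$. But no such distribution exists: for any $D$ over pairs $(L,R)$ and a uniformly random compatible target pair, a necessary condition for a hit is $\vec{a}^{\,-1}(2)\subseteq L$ and $\vec{b}^{\,-1}(2)\subseteq R$ while simultaneously $L\cap\vec{b}^{\,-1}(0)=\emptyset$ and $R\cap\vec{a}^{\,-1}(0)=\emptyset$. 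Optimizing over $|L|=|R|=\lambda d$ gives $\log_2(1/q_{\mathrm{cert}}) = \big(2H(\lambda) - H(2(\lambda-\eps))\big)d + o(d)$, which is minimized at $\lambda = 2\eps$, yielding $\log_2(1/q_{\mathrm{cert}}) = H(2\eps)d + o(d)$ — the quantity $K$ in the paper's Lemma~\ref{lem:candidates}, not $c(\eps)$. Numerically at the eventually-optimal $\eps\approx 0.045$, this is roughly $0.44d$ versus $c(\eps)d\approx 0.17d$, so your method is slower by a factor of about $2^{0.27d}$. Your own arithmetic decomposition of $c(\eps)d$ as $\log_2\binom{(1-\eps)d}{\eps d} + \log_2\binom{(1/2+\eps)d}{2\eps d} - \log_2\binom{d}{2\eps d}$ is a correct identity, but the negative sign on $\binom{d}{2\eps d}$ is a giveaway that this is \emph{not} the inverse of a guess probability: a ``guess $S$ of size $2\eps d$'' stage contributes $+H(2\eps)d$ to the cost exponent, never $-H(2\eps)d$.

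The paper's algorithm escapes this bound precisely by \emph{not} rescanning per certificate. It samples a fixed random \emph{family} $\pCert$ of $2^{H(2\eps)d}$ partial certificates (this size is exactly what Lemma~\ref{lem:candidates} requires, and matches the $1/q_{\mathrm{cert}}$ you cannot beat). Then, for each $\vec{a}\in\inputA$, it enumerates only the family members that are $\inputA$-certificates for $\vec{a}$ — Lemma~\ref{lem:ca} shows there are at most $2^{c_0 d}$ of them, where $c_0 = \big[\text{two positive entropy terms}\big] - 2H(\lambda) + K$. The $-2H(\lambda)+K$ part is a genuine saving: it reflects that the per-vector consistent-certificate count $|\pCert|\cdot q_{\vec{a}}$ is far below $|\pCert|$ because a single constraint ($\vec{a}$-side only) is easier to satisfy than the double constraint, and with $\lambda = 2\eps$, $K=H(2\eps)$ the entropy terms combine to $c(\eps)$. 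Turning this per-vector enumeration into an algorithm requires the block decomposition $U_1\sqcup\cdots\sqcup U_\ell$ and the $12^{d/\ell}$ preprocessing of Claim~\ref{clm:data-structures}, because without it one cannot enumerate the consistent certificates for a given $\vec{a}$ faster than $|\pCert| = 2^{H(2\eps)d}$. This block structure, and the resulting ability to take a Cartesian product of per-block auxiliary lists, is the missing ingredient; your ``bucketing the inputs by certificate pattern'' remark gestures in this direction but you frame it as a bookkeeping device for a non-critical regime, whereas it is in fact the core of the algorithm. (The paper itself remarks, at the end of Section~\ref{sec:technical-overview}, that a simpler bucketing scheme — essentially yours — achieves a weaker exponent that is not enough to improve on \cite{mucha2019equal}.)
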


In this section, the inputs to our problem are the sets $A$ and $B$. When we apply \Cref{thm:compatibility} in our algorithm for Equal Subset Sum, $|A|$ and $|B|$ will be \emph{exponential} in $n$, the number of inputs to the Equal Subset Sum instance, and the vectors in $A$ and $B$ will have dimension $n$. However, in this section we write $d$ for the dimension of the vectors in $A$ and $B$ to avoid confusion with the size of the input ($|A| + |B|$).

\subsection{Preprocessing}\label{sec:preprocessing}

Consider two sets of vectors $\inputA$ and $\inputB$ as in the theorem statement. Our algorithm will not return false positives so we assume without loss of generality that there exists some solution $(\vec{a}, \vec{b}) \in \inputA \times \inputB$ with $\vec{a} - \vec{b} \in [-1 : 1]^d$. 

We can divide the task of finding a compatible pair of vectors $(\vec{a}, \vec{b})$ into subproblems by partitioning the set of indices $[d]$ into multiple parts of equal cardinality and checking the compatibility of each set of indices separately. Let $\ell$ be a large constant that we will set later. Randomly partition $[d]$ into $\ell$ sets of equal size,\footnote{We can assume that $d$ is divisible by $\ell$ without loss of generality using simple padding arguments.} so
\[
     [d] = U_1 \sqcup \ldots \sqcup U_\ell.
\]
We first observe that with sufficiently high probability, the $0$, $1$, and $2$ indices of $\vec{a}$ and $\vec{b}$ are partitioned evenly between the $U_i$:

\begin{observation}\label{obs:support}
    With probability $d^{-\Oh(\ell)}$, for every $i \in [\ell]$ it holds that
	\begin{equation}\label{eq:support}
        |U_i \cap \vec{a}^{\,-1}(2)| = |U_i \cap \vec{b}^{\,-1}(2)| = \frac{\epsilon d}{\ell} \text{ and } |U_i \cap \vec{a}^{\,-1}(1)| = |U_i \cap \vec{b}^{\,-1}(1)| = \frac{d}{ 2\ell}.
    \end{equation}
\end{observation}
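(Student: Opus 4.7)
The plan is to reduce Observation~\ref{obs:support} to a direct multinomial counting estimate. The key idea is to classify each index $i \in [d]$ by the \emph{joint} value $(\vec{a}_i, \vec{b}_i) \in [0:2]^2$ rather than by the values of $\vec{a}$ and $\vec{b}$ separately; this yields (up to) nine cells
\[
    S_{j,k} \coloneqq \{ i \in [d] : \vec{a}_i = j,\ \vec{b}_i = k \},
\]
whose marginal sums are fixed by the hypotheses on $\vec{a}$ and $\vec{b}$ (for instance $\sum_k |S_{2,k}| = |\vec{a}^{\,-1}(2)| = \eps d$, and similarly for the three other value classes). I would then consider the stronger event $E$ that the random partition cuts each cell into exactly equal pieces, i.e., $|U_i \cap S_{j,k}| = |S_{j,k}|/\ell$ for every $i \in [\ell]$ and every $(j,k) \in [0:2]^2$. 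Summing $|U_i \cap S_{j,k}|$ over $k$ immediately yields $|U_i \cap \vec{a}^{\,-1}(2)| = \eps d/\ell$, and symmetrically for the three remaining constraints, so $E$ implies~\eqref{eq:support}.

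Next I would bound $\Pr[E]$ by a direct combinatorial count. The number of ordered partitions of $[d]$ into $\ell$ blocks of equal size $d/\ell$ equals $d!/((d/\ell)!)^\ell$, while the number of such partitions realizing $E$ equals $\prod_{(j,k)} |S_{j,k}|!/((|S_{j,k}|/\ell)!)^\ell$, since each cell may be independently partitioned into $\ell$ equal pieces and the resulting block sizes automatically sum to $d/\ell$. Applying Stirling's approximation~\eqref{eq:stirling-multinomial}, each factor satisfies $\binom{m}{m/\ell,\dots,m/\ell} = \Theta(\ell^m \cdot m^{-(\ell-1)/2})$ up to constants depending only on $\ell$. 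Because $\sum_{j,k} |S_{j,k}| = d$, the $\ell^m$ factors cancel between numerator and denominator, and using $|S_{j,k}| \leq d$ for each of the nine cells I would conclude
\[
    \Pr[E] \;=\; \Theta\!\left( \frac{d^{(\ell-1)/2}}{\prod_{(j,k)} |S_{j,k}|^{(\ell-1)/2}} \right) \;\geq\; d^{-4(\ell-1)} \;=\; d^{-O(\ell)}.
\]

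The argument is essentially elementary, so there is no serious obstacle. The one mild technicality is integrality: the quantities $|S_{j,k}|/\ell$ may fail to be integers even though the marginal counts $\eps d$ and $d/2$ are divisible by $\ell$. I would handle this by the same padding convention already used in this subsection to enforce $\ell \mid d$: add at most $O(\ell)$ dummy indices per cell so that every $|S_{j,k}|$ is divisible by $\ell$, which does not affect the exponent $O(\ell)$ of the probability bound. Alternatively one could relax~\eqref{eq:support} to a floor/ceiling version, which changes each $|U_i \cap \vec{a}^{\,-1}(v)|$ by at most $O(1)$ and is harmless for the subsequent use of the observation.
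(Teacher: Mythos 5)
Your proof is correct and ultimately performs the same underlying computation as the paper's---a Stirling-based multinomial count---but organizes it differently. The paper proves an auxiliary statement for a single set $X \subseteq [d]$ (that $|U_i \cap X| = |X|/\ell$ holds for all $i$ with probability $d^{-O(\ell)}$) and then applies it iteratively to $\vec{a}^{\,-1}(1)$, $\vec{a}^{\,-1}(2)$, and the various intersections with $\vec{b}^{\,-1}(1)$ and $\vec{b}^{\,-1}(2)$, conditioning at each step and observing that the induced sub-partition remains uniform. This chain of conditionings implicitly reconstructs exactly your event $E$ on the nine joint cells $S_{j,k}$; your version collapses the chain into one direct ratio of multinomials, which is cleaner and avoids spelling out the conditional-uniformity observation. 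Both routes give $d^{-O(\ell)}$ with essentially the same polynomial exponent, and both yield a bound at least as strong as what is needed.

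One place where your write-up is actually more careful than the paper's: the integrality issue. The paper silently applies its one-set lemma to sets like $\vec{b}^{\,-1}(1) \cap \vec{a}^{\,-1}(2)$, whose sizes have no reason to be multiples of $\ell$; your proof explicitly flags that $|S_{j,k}|/\ell$ may not be an integer and offers the padding or floor/ceiling fix, either of which suffices. A small cosmetic caveat in your Stirling step: the asymptotic $\binom{m}{m/\ell,\dots,m/\ell} = \Theta_\ell(\ell^m \cdot m^{-(\ell-1)/2})$ degenerates when $|S_{j,k}|$ is zero or $O(\ell)$, but since you only need a lower bound on $\Pr[E]$ and each such factor is at least $1 = \Omega_\ell(\ell^{|S_{j,k}|})$, replacing those cells' contribution by a constant only helps, and the final $d^{-O(\ell)}$ bound is unaffected.
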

\begin{proof}
    It suffices to show that for any set $X \subseteq [d]$, the event
    \begin{equation}
        |U_i \cap X| = \frac{|X|}{\ell}
        \label{eq:Ui-good-distribution}
    \end{equation}
    holds for every $i \in [\ell]$ with probability $d^{-\Oh(\ell)}$. 

    Once this statement is proved, we can complete the observation by repeatedly applying the statement above as follows: first, apply \eqref{eq:Ui-good-distribution} with $X = \vec{a}^{\,-1}(1)$. Conditioning on the event that the indices $\vec{a}^{\,-1}(1)$ are distributed evenly over a random $\ell$-partition of $U$, we can consider the event that the set of indices $\vec{a}^{\,-1}(2)$ are distributed evenly over a random $\ell$-partition of $(U \setminus \vec{a}^{\,-1}(1))$. Conditioning on the previous two events, we can then consider the distribution of $\vec{b}^{\,-1}(1) \cap \vec{a}^{\,-1}(1)$, $\vec{b}^{\,-1}(1) \cap \vec{a}^{\,-1}(2)$, and $\vec{b}^{\,-1}(1) \cap ([d] \setminus (\vec{a}^{\,-1}(1) \cup \vec{a}^{\,-1}(2))$ over the relevant subsets of $U$. Finally, we can do likewise for $\vec{b}^{\,-1}(2)$. Because each one of these events occurs with probability $d^{-\Oh(\ell)}$ conditioned on the previous events, the entire sequence of events occurs with probability $d^{-\Oh(\ell)}$.

    Returning to the proof of \eqref{eq:Ui-good-distribution}: the number of partitions of $U$ such that \eqref{eq:Ui-good-distribution} holds for all $i \in [\ell]$ is
    \begin{displaymath} 
    \binom{|X|}{\frac{|X|}{\ell},\ldots,\frac{|X|}{\ell}} \cdot \binom{|U \setminus X|}{\frac{|U \setminus X|}{\ell},\ldots,\frac{|U \setminus X|}{\ell}}
    \end{displaymath}
    which is
    \begin{displaymath}
        \Theta^*(2^{H(1/\ell,\ldots,1/\ell) d}) = \Theta^*(\ell^d).
    \end{displaymath}
    by Stirling's approximation for multinomial coefficients \eqref{eq:stirling-multinomial}. On the other hand, the total number of partitions of $U$ into $\ell$ sets is $\ell^{d}$, so \eqref{eq:Ui-good-distribution} holds with probability at least $d^{-O(1)}$. 
\end{proof}

For every $i \in [\ell]$, we will randomly select a family of partial
certificates $\pCert_i$ intended to certify if two vectors are compatible on the
indices in $U_i$. Each element of $\pCert_i$ is a tuple containing two subsets
of $U_i$. Formally, for $i \in [\ell]$, we select
\begin{equation}
    \label{eq:pCertSelection}
	\pCert_i \subseteq_r \binom{U_i}{\lambda|U_i|} \times \binom{U_i}{\lambda|U_i|} \text{ such that } |\pCert_i| = 2^{K \cdot |U_i|},
\end{equation}
uniformly at random for two parameters $\lambda$ and $K$ that depend only on $\epsilon$ and that will be fixed later.

\begin{definition}[Compatibility Certificates]
    \label{def:compatibility-cert}
    For a given vector $\vec{a} \in \inputA$, we say that the tuple $(L_i, R_i) \in \pCert_i$ is an \emph{$\inputA^{(i)}$-certificate} for $\vec{a}_{U_i}$ (that is, $\vec{a}$ restricted to the indices in $U_i$) if
    \begin{equation*}
        \vec{a}^{\,-1}(0) \cap R_i = \emptyset \text{ and } \vec{a}^{\,-1}(2) \cap U_i \subseteq L_i.
    \end{equation*}
    Likewise we say that $(L_i, R_i)$ is a \emph{$\inputB^{(i)}$-certificate} for $\vec{b}_{U_i}$ if
    \begin{equation*}
        \vec{b}^{\,-1}(0) \cap L_i = \emptyset \text{ and } \vec{b}^{\,-1}(2) \cap U_i \subseteq R_i.
    \end{equation*}
    
    If for a certain pair $(\vec{a}, \vec{b}) \in \inputA \times \inputB$ there exist tuples $(L_i, R_i) \in \pCert_i$ such that $(L_i, R_i)$ is an $\inputA^{(i)}$-certificate for $\vec{a}$ and a $\inputB^{(i)}$-certificate for $\vec{b}$ for all $i \in [\ell]$, then we say that the tuple
    \[
        ((L_1,R_1),\ldots,(L_\ell,R_\ell)) \in \pCert_1 \times \cdots \times \pCert_\ell
    \]
    is a \emph{compatibility certificate} for the pair $(\vec{a}, \vec{b})$.
\end{definition}

We have defined our compatibility certificates to act as proofs that pairs of vectors are compatible. For instance, if $\vec{a}^{\,-1}(0) \cap R_i = \emptyset$ and $\vec{b}^{\,-1}(2) \cap U_i \subseteq R_i$ for every $i \in [\ell]$, this means that there is no index $j \in [d]$ such that $a_i = 0$ and $b_j = 2$, an event which would cause $\vec{a}$ and $\vec{b}$ to be incompatible. Formally, we prove the following claim.

\begin{claim}
    If $C$ is a compatibility certificate for $(\vec{a}, \vec{b}) \in \inputA \times \inputB$, then $\vec{a} - \vec{b} \in
    [-1:1]^d$.
\end{claim}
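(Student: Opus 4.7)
The plan is to prove the contrapositive style: assume for contradiction that some coordinate $j \in [d]$ witnesses $a_j - b_j \notin [-1:1]$, and derive a contradiction with the definition of a compatibility certificate. Since $\vec{a}, \vec{b} \in [0:2]^d$, the only way to have $|a_j - b_j| > 1$ is to have either $(a_j, b_j) = (2, 0)$ or $(a_j, b_j) = (0, 2)$, so there are just these two symmetric cases to rule out.

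Fix a coordinate $j \in [d]$ witnessing a violation, and let $i \in [\ell]$ be the unique index with $j \in U_i$. In the first case, $a_j = 2$ and $b_j = 0$: then $j \in \vec{a}^{\,-1}(2) \cap U_i$, so by the $\inputA^{(i)}$-certificate condition $j \in L_i$; but also $j \in \vec{b}^{\,-1}(0)$, and the $\inputB^{(i)}$-certificate condition requires $\vec{b}^{\,-1}(0) \cap L_i = \emptyset$, which is a contradiction. The second case $(a_j, b_j) = (0, 2)$ is symmetric with the roles of $L_i$ and $R_i$ swapped: $j \in \vec{b}^{\,-1}(2) \cap U_i \subseteq R_i$ while simultaneously $j \in \vec{a}^{\,-1}(0)$ is forbidden from lying in $R_i$ by the $\inputA^{(i)}$-certificate condition.

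There is really no main obstacle here; the claim is essentially a direct unpacking of \Cref{def:compatibility-cert}. The only thing to be careful about is making sure the two asymmetric conditions in \Cref{def:compatibility-cert} (where the role of $L_i$ versus $R_i$ is reversed between $\inputA$- and $\inputB$-certificates) line up correctly to block exactly the two forbidden $(a_j, b_j)$ patterns. In particular, the ``$\vec{a}^{\,-1}(2) \subseteq L_i$ but $\vec{b}^{\,-1}(0) \cap L_i = \emptyset$'' pair rules out a $2$-over-$0$ collision, and the ``$\vec{b}^{\,-1}(2) \subseteq R_i$ but $\vec{a}^{\,-1}(0) \cap R_i = \emptyset$'' pair rules out a $0$-over-$2$ collision, together forcing $\vec{a} - \vec{b} \in [-1:1]^d$.
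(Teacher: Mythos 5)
Your proof is correct and follows essentially the same approach as the paper: assume for contradiction there is an index $j$ with $a_j - b_j \in \{-2, 2\}$, then use the $\inputA^{(i)}$- and $\inputB^{(i)}$-certificate conditions on the block $U_i$ containing $j$ to derive a contradiction. The paper handles only the $(a_j, b_j) = (0, 2)$ case explicitly and invokes symmetry; you spell out both, which is fine.
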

\begin{claimproof}
    Let $C = ((L_1,R_1),\ldots,(L_\ell,R_\ell)) \in \pCert_1 \times \cdots \times
    \pCert_\ell$ be a compatibility certificate for $(\vec{a}, \vec{b}) \in
    \inputA \times \inputB$ and assume for contradiction that $\vec{a} - \vec{b}
    \notin [-1:1]^d$. 
    
    This implies that there exists some index $j \in U_i \subseteq [d]$ such that $a_j - b_j \in \{-2,2\}$. Without loss of generality, suppose $a_j = 0$ and $b_j = 2$ as the other case is symmetric. Because $(L_i, R_i)$ is a $\inputB^{(i)}$-certificate for $\vec{b}$, we know that that $j \in R_i$. However, because $(L_i, R_i)$ is an $\inputA^{(i)}$-certificate for $\vec{a}$, we have $j \notin R_i$, a contradiction.
\end{claimproof}

\subsection{Number of Candidate Certificates}

Our next task is to set the parameters $\lambda$ and $K$ so that for a fixed compatible pair $(\vec{a}, \vec{b})$, we are likely to have at least one compatibility certificate in $\pCert_1\times \cdots \times \pCert_\ell$.

\begin{lemma}\label{lem:candidates}
    Suppose $(\vec{a}, \vec{b}) \in \inputA \times \inputB$ satisfies $\vec{a} - \vec{b} \in [-1 : 1]^n$. Conditioned on \eqref{eq:support}, if
    \begin{align*}
        \lambda & \in [\epsilon, 1/2 + \epsilon] \text{ and } \\ 
        K & \ge 
        2 H(\lambda) - H( 2(\lambda - \epsilon)) + o(1)
    \end{align*}
    with probability at least $2/3$ there exists a compatibility certificate in $\pCert_1\times \cdots \times \pCert_\ell$.
\end{lemma}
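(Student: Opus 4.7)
The plan is to fix an arbitrary compatible pair $(\vec{a}, \vec{b}) \in \inputA \times \inputB$, condition on the support event~\eqref{eq:support}, and analyze each index $i \in [\ell]$ independently (the families $\pCert_1, \ldots, \pCert_\ell$ are drawn independently). Write $s = |U_i| = d/\ell$. The key structural observation is that compatibility of $(\vec{a}, \vec{b})$ forces the disjointness conditions $\vec{a}^{\,-1}(2) \cap \vec{b}^{\,-1}(0) = \emptyset$ and $\vec{a}^{\,-1}(0) \cap \vec{b}^{\,-1}(2) = \emptyset$, since an index $j$ with $a_j = 2$ and $b_j = 0$ (or the reverse) would produce $a_j - b_j = \pm 2 \notin [-1:1]$.

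The first step is to count the pairs $(L_i, R_i) \in \binom{U_i}{\lambda s}^2$ that are simultaneously an $\inputA^{(i)}$-certificate for $\vec{a}$ and a $\inputB^{(i)}$-certificate for $\vec{b}$. A valid $L_i$ must contain $\vec{a}^{\,-1}(2) \cap U_i$ (size $\epsilon s$ by~\eqref{eq:support}) and avoid $\vec{b}^{\,-1}(0) \cap U_i$ (size $(1/2 - \epsilon)s$). By the disjointness just noted, these two index sets occupy a combined total of $s/2$ indices of $U_i$, so the remaining $(\lambda - \epsilon)s$ elements of $L_i$ can be chosen freely from the other $s/2$ indices; the range $\lambda \in [\epsilon, 1/2 + \epsilon]$ guarantees this is well-defined. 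This gives $\binom{s/2}{(\lambda - \epsilon)s}$ valid $L_i$, and by the symmetric argument the same count for $R_i$, yielding $\binom{s/2}{(\lambda-\epsilon)s}^2$ valid tuples within a universe of $\binom{s}{\lambda s}^2$ candidates.

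The second step applies Stirling's approximation~\eqref{eq:stirling} (polynomial prefactors cancel) to obtain
\[
p \;=\; \frac{\binom{s/2}{(\lambda-\epsilon)s}^2}{\binom{s}{\lambda s}^2} \;=\; \Theta\!\left(2^{(H(2(\lambda-\epsilon)) - 2H(\lambda))\, s}\right).
\]
Interpreting the $o(1)$ in $K \geq 2H(\lambda) - H(2(\lambda - \epsilon)) + o(1)$ as a function $\gamma(d) \to 0$ with $\gamma(d) \cdot d \to \infty$ (e.g.\ $\gamma(d) = 1/\log d$), the expected number of valid tuples drawn into $\pCert_i$ is $|\pCert_i| \cdot p = \Theta(2^{\gamma(d) \cdot s}) \to \infty$. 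Since $\pCert_i$ is sampled uniformly without replacement from the universe, the standard inequality
\[
\Pr\!\left[\pCert_i \cap \{\text{valid tuples}\} = \emptyset\right] \;\le\; (1 - p)^{|\pCert_i|} \;\le\; \exp(-|\pCert_i|\cdot p)
\]
shows that $\pCert_i$ misses every valid tuple with probability $o(1)$.

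Finally, since the $\pCert_i$ are independent and $\ell$ is a constant, a union bound yields that for all sufficiently large $d$ every $\pCert_i$ contains at least one valid tuple with probability at least $2/3$; choosing one valid tuple in each $\pCert_i$ assembles a compatibility certificate via~\Cref{def:compatibility-cert}. The main technical content lies in the combinatorial count above, which crucially leverages the compatibility assumption to reconcile the ``must contain'' and ``must avoid'' constraints on $L_i$ and $R_i$; the probability boosting is then a routine concentration-plus-union-bound argument.
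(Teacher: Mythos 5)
Your proposal is correct and follows essentially the same route as the paper's proof: count the valid $(L_i,R_i)$ pairs as $\binom{s/2}{(\lambda-\epsilon)s}^2$ within the universe $\binom{s}{\lambda s}^2$, apply Stirling to compute the per-certificate hit probability, multiply by $|\pCert_i| = 2^{Ks}$, and conclude with an exponential-miss bound plus a union bound over $i \in [\ell]$. One small point in your favor: you make explicit the disjointness $\vec{a}^{\,-1}(2)\cap\vec{b}^{\,-1}(0)=\emptyset$ forced by compatibility and correctly identify the free-choice set as $U_i\setminus(\vec{a}^{\,-1}(2)\cup\vec{b}^{\,-1}(0))$, whereas the paper names it (slightly imprecisely, though with the same cardinality) as $\vec{a}^{\,-1}(1)\cap U_i$; you also spell out how the $o(1)$ slack in $K$ must decay slowly enough to dominate the polynomial Stirling prefactor, a step the paper leaves implicit.
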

\begin{proof}
    Fix a compatible pair $(\vec{a}, \vec{b})$ as in the lemma statement and suppose that \eqref{eq:support} holds, so the elements of $a^{\,-1}(0)$, $a^{\,-1}(2)$, $b^{\,-1}(0)$ and $b^{\,-1}(2)$ are distributed evenly over our $\ell$-partition of $U$. 
    
    First, we consider the probability that for a certain $i \in [\ell]$ the random family of candidate certificates $\pCert_i$ contains a pair $(L, R)$ that is an $\inputA^{(i)}$-certificate for $\vec{a}_{U_i}$ and a $\inputB^{(i)}$-certificate for $\vec{b}_{U_i}$.   
    By \Cref{def:compatibility-cert}, for this to occur $L$ must contain the $\epsilon |U_i|$ elements of $\vec{a}^{\,-1}(2) \cap U_i$ and $L$ must be a subset of the $(1/2 + \epsilon) |U_i|$ elements of $U_i \setminus b^{\,-1}(0)$, leaving a choice of $(\lambda - \epsilon)|U_i|$ ``free'' elements from the $|U_i|/2$ elements of $\vec{a}^{\,-1}(1) \cap U_i$. $R$ is similar. So the number of choices for $(L, R)$ that satisfy both properties is
    \begin{displaymath}
        \binom{|U_i|/2}{(\lambda - \epsilon) |U_i|} \cdot
        \binom{|U_i|/2}{(\lambda - \epsilon) |U_i|}
        .
    \end{displaymath}
    On the other hand, the total number of pairs $(L,R) \subseteq U_i$ with $|L| =
    \lambda |U_i|,|R|=\lambda |U_i|$ is $\binom{|U_i|}{\lambda |U_i|}\cdot
    \binom{|U_i|}{\lambda |U_i|}$. Hence, the probability that
    a randomly selected $(L, R) \in \pCert_i$ is an $\inputA^{(i)}$-certificate for $\vec{a}_{U_i}$ and a $\inputB^{(i)}$-certificate for $\vec{b}_{U_i}$ is at least 
    \begin{displaymath}
	\binom{|U_i|/2}{(\lambda - \epsilon) |U_i|}^2 \bigg/ 
        \binom{|U_i|}{\lambda |U_i|}^2
    \end{displaymath}
    which simplifies to
    \[
        \Theta^*\left(2^{\left(H(2(\lambda - \epsilon)) - 2 H(\lambda)\right)|U_i|}\right)
    \]
    by Stirling's approximation \eqref{eq:stirling}. Hence by selecting
     \begin{displaymath}
        K \ge 2 H(\lambda) - H(2(\lambda - \epsilon)) + o(1)
    \end{displaymath}
    and $\lambda \in [\epsilon, 1/2 + \epsilon]$ we can ensure that $\pCert_i$ contains a pair $(L, R)$ that is an $\inputA^{(i)}$-certificate for $\vec{a}_{U_i}$ and a $\inputB^{(i)}$-certificate for $\vec{b}_{U_i}$ with arbitrarily high constant probability. This follows from the standard fact that
    \[
        \left(1 - \frac{1}{z}\right)^{rz} \leq e^{-r}
    \]
    for real numbers $z > 1$ and $r > 0$ if we let $1/z$ be the probability that
    a randomly selected $(L, R) \in \pCert_i$ is an $\inputA^{(i)}$-certificate for $\vec{a}_{U_i}$ and a $\inputB^{(i)}$-certificate for $\vec{b}_{U_i}$.
    
    Because this event occurs independently for each $\pCert_i$, $i \in [\ell]$, we can conclude that there exists a complete compatibility certificate with arbitrarily high constant probability by taking a union bound over the individual failure probabilities for $i \in [\ell]$.
\end{proof}

\subsection{Auxiliary Sets}

Our algorithm will look for a compatibility certificate that occurs in \emph{both} of two
\emph{auxiliary sets}, lists of $\inputA^{(i)}$- and $\inputB^{(i)}$-certificates with respect to certain vectors $\vec{x} \in [0:2]^{U_i}$. Specifically, for each $i \in [\ell]$ and for every vector $\vec{x} \in [0:2]^{U_i}$, we define the following two lists:
\begin{align}
    \label{eq:auxiliary-sets}
    \auxL^{(i)}(\vec{x}) \coloneqq & \{ (L, R) \in \pCert_i \mid (L, R) \text{ is an $\inputA^{(i)}$-certificate for } \vec{x} \} \text{ and }\\
    \auxR^{(i)}(\vec{x}) \coloneqq & \{ (L, R) \in \pCert_i \mid (L, R) \text{ is a $\inputB^{(i)}$-certificate for }\vec{x} \}.\nonumber
\end{align}
By definition, if $\vec{a}$ and $\vec{b}$ have a compatibility certificate in $\pCert_1 \times \dots \times \pCert_\ell$, then this compatibility certificate is a member of both $\auxL^{(1)} (\vec{a}_{U_1}) \times \cdots \times \auxL^{(\ell)}(\vec{a}_{U_\ell})$ and $\auxR^{(1)} (\vec{b}_{U_1}) \times \cdots \times \auxR^{(\ell)}(\vec{b}_{U_\ell})$.

\begin{claim}\label{clm:data-structures}
    For each $i \in [\ell]$ we can enumerate the lists $\auxL^{(i)}(\vec{x})$ and $\auxR^{(i)}(\vec{x})$ for \emph{every} vector $\vec{x} \in [0:2]^{U_i}$ in time $\Oh(d \cdot 12^{d/\ell})$.
\end{claim}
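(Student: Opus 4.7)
The plan is to observe that at each index $j \in U_i$ there are only a constant number of possible ``local states'' for the triple $(\vec{x}, L, R)$, so a brute-force enumeration combined with a hash table for $\pCert_i$ already achieves the claimed bound. Concretely, the local state at index $j$ is the tuple $(x_j, \mathbf{1}[j \in L], \mathbf{1}[j \in R]) \in [0:2] \times \{0,1\} \times \{0,1\}$, which has exactly $3 \cdot 2 \cdot 2 = 12$ possible values, so there are $12^{|U_i|} = 12^{d/\ell}$ total triples.

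First, I would preprocess $\pCert_i$ into a hash table keyed by bitmask representations of the pairs $(L, R)$; this takes time $\Oh(|\pCert_i| \cdot |U_i|) \le \Oh(4^{|U_i|} \cdot d/\ell)$, which is absorbed by the target bound. I would also allocate a dictionary of empty lists $\auxL^{(i)}(\vec{x})$ and $\auxR^{(i)}(\vec{x})$ keyed by $\vec{x} \in [0:2]^{U_i}$, lazily populated as triples are processed. Second, I would iterate over all $12^{|U_i|}$ triples $(\vec{x}, L, R) \in [0:2]^{U_i} \times 2^{U_i} \times 2^{U_i}$ in an index-by-index fashion. For each, in time $\Oh(|U_i|)$, I would (i) perform a hash lookup to test $(L, R) \in \pCert_i$, and (ii) check the $\inputA^{(i)}$-certificate conditions $\vec{x}^{\,-1}(0) \cap R = \emptyset$ and $\vec{x}^{\,-1}(2) \cap U_i \subseteq L$, as well as the symmetric $\inputB^{(i)}$-certificate conditions, by a single scan through $U_i$. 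Any triple passing the appropriate pair of checks is appended to $\auxL^{(i)}(\vec{x})$ or $\auxR^{(i)}(\vec{x})$.

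The total running time is thus $\Oh(|U_i|) \cdot 12^{|U_i|} = \Oh(d \cdot 12^{d/\ell})$, matching the claim. There is no real obstacle here: the argument reduces to a clean output-sensitive enumeration whose size is bounded by the product of per-index local state counts. A minor refinement would be to restrict the enumeration at each $j$ to only those local states that satisfy the certificate inequality at $j$ (for $x_j = 0, 1, 2$ there are $2, 4, 2$ valid $(L_j, R_j)$ combinations respectively, for $8$ per index), yielding a tighter $\Oh(d \cdot 8^{d/\ell})$ bound; however, since $12^{d/\ell}$ already suffices for \Cref{clm:data-structures}, I would not bother pursuing this refinement.
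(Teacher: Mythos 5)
Your proof is correct and takes essentially the same approach as the paper: a brute-force enumeration over all $(\vec{x}, (L,R))$ triples with a per-triple $O(|U_i|)$ certificate check, where the count of triples is bounded by $3^{|U_i|}\cdot 4^{|U_i|} = 12^{d/\ell}$. The only cosmetic difference is that you enumerate all $4^{|U_i|}$ pairs $(L,R)$ and test membership in $\pCert_i$ via a hash table, whereas the paper iterates directly over $\pCert_i$ (using $|\pCert_i|\le 4^{|U_i|}$); both give the same bound.
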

\begin{proof}
	Fix $i \in [\ell]$ and perform a brute-force enumeration: for every
        \[
            (\vec{x}, (L, R)) \in [0:2]^{U_i} \times \pCert_i,
        \]
        add $(L, R)$ to $\auxL^{(i)}(\vec{x})$ if $(L, R)$ is an $\inputA^{(i)}$-certificate for $\vec{x}$. Checking whether $(L, R)$ is an $\inputA^{(i)}$-certificate takes time $O(|U_i|) = O(d / \ell) = O(d)$, and we perform at most $\Oh(3^{d / \ell} \cdot 4^{d / \ell}) = \Oh(12^{d / \ell})$ checks since $|[0:2]^{U_i}| \leq 3^{d / \ell}$ and $|\pCert_i| \leq |\binom{U_i}{U_i / 2}|^2 \leq 2^{2 |U_i|} = 4^{d / \ell}$. Computing $\auxR^{(i)}(\vec{x})$ for $\vec{x} \in [0:2]^{U_i}$ is similar.
\end{proof}

\begin{lemma}\label{lem:ca}
    For any fixed $\vec{a} \in \inputA$ and $\vec{b} \in B$, conditioned on \eqref{eq:support} it holds that:
    \begin{align*}
		|\auxL^{(1)} (\vec{a}_{U_1})|\cdots |\auxL^{(\ell)}(\vec{a}_{U_\ell})| &\leq 2^{c_0 \cdot d} \text{ and } \\
            |\auxR^{(1)} (\vec{b}_{U_1})|\cdots |\auxR^{(\ell)}(\vec{b}_{U_\ell})| &\leq 2^{c_0 \cdot d}
    \end{align*}
    with probability at least $2/3$, where
    \begin{equation}
        c_0 \coloneqq H\left( \frac{\lambda-\epsilon}{1-\epsilon}
                \right)(1-\epsilon) + H\left( \frac{\lambda}{1/2 + \epsilon}
                \right)(1/2 + \epsilon)
                - 2H(\lambda)
        + K + o(1).
        \label{eq:ca}
    \end{equation}
\end{lemma}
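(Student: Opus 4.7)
The plan is to reduce the claim to a single expectation computation followed by Markov's inequality. Fix any $\vec{a} \in \inputA$ and $\vec{b} \in \inputB$ and condition on \eqref{eq:support} throughout, so the randomness is only over the independent draws $\pCert_1, \dots, \pCert_\ell$. Because whether a tuple $(L,R)\in\pCert_i$ lies in $\auxL^{(i)}(\vec{a}_{U_i})$ depends only on the three sets $\vec{a}^{\,-1}(0)\cap U_i$, $\vec{a}^{\,-1}(1)\cap U_i$, $\vec{a}^{\,-1}(2)\cap U_i$, whose sizes are pinned to $(1/2-\epsilon)|U_i|$, $|U_i|/2$, $\epsilon|U_i|$ by \eqref{eq:support}, the analysis is identical for every choice of $\vec{a}$ (and symmetrically $\vec{b}$).

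First I would count, for a fixed $\vec{a}_{U_i}$, the number of pairs $(L,R)\in\binom{U_i}{\lambda|U_i|}\times\binom{U_i}{\lambda|U_i|}$ that are $\inputA^{(i)}$-certificates. By \Cref{def:compatibility-cert}, $L$ must contain all $\epsilon|U_i|$ indices of $\vec{a}^{\,-1}(2)\cap U_i$, with $(\lambda-\epsilon)|U_i|$ remaining elements chosen from the $(1-\epsilon)|U_i|$ indices in $U_i\setminus \vec{a}^{\,-1}(2)$; and $R$ must avoid the $(1/2-\epsilon)|U_i|$ indices of $\vec{a}^{\,-1}(0)\cap U_i$, leaving $\lambda|U_i|$ elements to be chosen from $(1/2+\epsilon)|U_i|$ positions. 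Since $\pCert_i$ is obtained by choosing $2^{K|U_i|}$ tuples uniformly from $\binom{U_i}{\lambda|U_i|}^2$, linearity of expectation combined with Stirling's approximation \eqref{eq:stirling} yields
\[
    \mathbb{E}\bigl[|\auxL^{(i)}(\vec{a}_{U_i})|\bigr]
    = 2^{K|U_i|}\cdot
    \frac{\binom{(1-\epsilon)|U_i|}{(\lambda-\epsilon)|U_i|}\binom{(1/2+\epsilon)|U_i|}{\lambda|U_i|}}{\binom{|U_i|}{\lambda|U_i|}^2}
    = \Theta^*\!\left(2^{c_0 |U_i|}\right),
\]
matching the exponent in \eqref{eq:ca} up to the $o(1)$ term. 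The same bound holds for $\mathbb{E}[|\auxR^{(i)}(\vec{b}_{U_i})|]$ by a symmetric argument.

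Next, because $\pCert_1,\dots,\pCert_\ell$ are sampled independently, the random variables $|\auxL^{(1)}(\vec{a}_{U_1})|,\dots,|\auxL^{(\ell)}(\vec{a}_{U_\ell})|$ are independent, so
\[
    \mathbb{E}\!\left[\prod_{i=1}^\ell |\auxL^{(i)}(\vec{a}_{U_i})|\right]
    = \prod_{i=1}^\ell \mathbb{E}\bigl[|\auxL^{(i)}(\vec{a}_{U_i})|\bigr]
    = 2^{c_0 d + o(d)},
\]
where the polynomial factor from Stirling (and the product over the constant number $\ell$ of coordinates) is absorbed into $2^{o(d)}$. Markov's inequality then gives $\prod_i |\auxL^{(i)}(\vec{a}_{U_i})|\le 6\cdot 2^{c_0 d + o(d)}$ with probability at least $5/6$, and the same bound holds for the $\auxR$ product; a union bound yields that both hold simultaneously with probability at least $2/3$. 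Since the constant $6$, the polynomial Stirling factor, and any rounding of $\lambda|U_i|$ and $\epsilon|U_i|$ to integers all contribute only $2^{o(d)}$, they are absorbed into the $o(1)$ summand in the definition of $c_0$, completing the proof.

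There is no serious obstacle here: the argument is a standard expectation-plus-Markov calculation, and the key structural inputs (the sizes of $\vec{a}^{\,-1}(j)\cap U_i$ and the independence of the $\pCert_i$) are already handed to us by the conditioning on \eqref{eq:support} and by the sampling procedure \eqref{eq:pCertSelection}. The only bookkeeping care is in tracking the sub-exponential multiplicative factors so that they fit inside the $o(1)$ slack in $c_0$.
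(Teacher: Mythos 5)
Your proof is correct and follows essentially the same route as the paper: compute $\mathbb{E}[|\auxL^{(i)}(\vec{a}_{U_i})|]$ by linearity of expectation and Stirling, and conclude via Markov.  The only technical difference is in the final combining step.  You exploit independence of $\pCert_1,\dots,\pCert_\ell$ to factor the expectation of the product $\prod_i |\auxL^{(i)}(\vec{a}_{U_i})|$ and apply Markov once; the paper instead applies Markov to each factor separately to get $|\auxL^{(i)}(\vec{a}_{U_i})|\le (6\ell)\cdot 2^{c_0|U_i|}$ with failure probability $1/(6\ell)$, then union-bounds over $i\in[\ell]$.  Both routes work and both absorb the resulting constant $(6\cdot$ or $(6\ell)^\ell\cdot)$ and the polynomial Stirling factors into the $o(1)$ slack in $c_0$.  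Your variant leans slightly more on the problem structure (independence of the $\pCert_i$), while the paper's needs only the marginal distribution of each $\pCert_i$; in this setting the distinction is immaterial and the two arguments are of comparable length.
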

\begin{proof}
    Suppose that \eqref{eq:support} holds, so that the elements of $\vec{a}^{\,-1}(0)$, $\vec{a}^{\,-1}(2)$, $\vec{b}^{\,-1}(0)$ and $\vec{b}^{\,-1}(2)$ distribute ``evenly'' over our $\ell$-partition of $U$. Our first step is to prove that for any $i \in [\ell]$ and any $\vec{x} \in [0:2]^{U_i}$ with $|\vec{x}^{\,-1}(0)| = (1/2 - \epsilon) |U_i|$ and $|\vec{x}^{\,-1}(2)| = \epsilon |U_i|$ it holds that
    \begin{align}\label{eq:lemca}
		\Ex\left[|\auxL^{(i)}(\vec{x})|\right] \le 2^{c_0 \cdot |U_i|}.
    \end{align}
    By definition $|\auxL^{(i)}(\vec{x})|$ is equal to the number of sets $(L,R) \in \pCert_i$ such
    that $\vec{x}^{\,-1}(2) \subseteq L$ and $\vec{x}^{\,-1}(0) \cap R = \emptyset$. Since sets $(L,R) \in \pCert_i$ are drawn uniformly at random, the
    probability that $(L,R)$ is an $\inputA^{(i)}$-certificate for $\vec{x}$ is equal to
    \begin{displaymath}
        \Prob{(L,R) \text{ is $\inputA^{(i)}$-certificate for } \vec{x} } =
        \binom{(1-\epsilon)|U_i|}{(\lambda -
		\epsilon)|U_i|}\binom{|U_i|}{\lambda |U_i|}^{-1}
        \cdot
        \binom{(1/2 + \epsilon)|U_i|}{\lambda|U_i|}\binom{|U_i|}{\lambda |U_i|}^{-1}.
    \end{displaymath}
    Here, the first term comes from dividing the number of ways to choose $L$ such that $\vec{x}^{\,-1}(2) \subseteq L$ by the total number of ways to choose $L$, and the second term comes from dividing the number of ways to choose $R$ such that $\vec{x}^{\,-1}(0) \cap R = \emptyset$ from the total number of ways to choose $R$.
    
    By linearity of expectation, applying Stirling's approximation \eqref{eq:stirling} to bound the binomials, we have:
    \begin{displaymath}
		\Ex[|\auxL^{(i)}(\vec{x})|] \le 
        2^{\left( H\left( \frac{\lambda-\epsilon}{1-\epsilon}
                \right)(1-\epsilon) + H\left( \frac{\lambda}{1/2 + \epsilon}
                \right)(1/2 + \epsilon)
                - 2H(\lambda) + o(1)
        \right) |U_i|} \cdot 2^{K \cdot |U_i|},
    \end{displaymath}
    which is precisely \eqref{eq:lemca}.
    
    Applying Markov's inequality to \eqref{eq:lemca} yields
    \[
        |\auxL^{(i)}(\vec{x})| \geq (6\ell) \cdot 2^{c_0 \cdot |U_i|}
    \]
    with probability at most $1/(6\ell)$, so it follows that
    \[
        |\auxL^{(i)}(\vec{a}_{U_i})| \leq (6\ell) \cdot 2^{c_0 \cdot |U_i|}
    \]
    for \emph{all} $i \in [\ell]$ with probability at least $5/6$ by a union bound. A symmetric argument leads to the conclusion that
    \[
        |\auxR^{(i)}(\vec{b}_{U_i})| \leq (6\ell) \cdot 2^{c_0 \cdot |U_i|}
    \]
    for \emph{all} $i \in [\ell]$ with probability at least $5/6$. Taking a final union bound implies that \emph{both} events occur with probability at least $2/3$.
\end{proof}

\subsection{Algorithm for Compatibility Testing}

We are now ready to present our algorithm for compatibility testing and prove \Cref{thm:compatibility} (see Algorithm~\ref{alg:2ov}). First, we draw a random family of partial certificates $\pCert_i$ for each $i \in [\ell]$ as in \eqref{eq:pCertSelection}. Next, we build auxiliary sets $\auxL^{(i)}(\vec{x})$ and $\auxR^{(i)}(\vec{x})$ for every $i \in [\ell]$ and every vector $\vec{x} \in [0:2]^{U_i}$.

The core of the algorithm is constructing a set that for every vector $\vec{a}
\in \inputA$ contains every tuple in $\auxL^{(1)} (\vec{a}_{U_1}) \times \cdots \times \auxL^{(\ell)}(\vec{a}_{U_\ell})$. If, during the enumeration of this set for a particular $\vec{a} \in \inputA$, we find that the set contains more than $2^{c_0 d}$ elements, we break the loop and continue with the next element of $\inputA$. We do the same for every $\vec{b} \in \inputB$ and check if the two lists contain a common element; if so, this is a compatibility certificate.

\begin{algorithm}[ht!]
    \DontPrintSemicolon
    \nlnonumber
    \textbf{function} $\textsf{Compatibility}(\inputA,\inputB)$:\\
    Randomly partition $[d] = U_1 \sqcup \ldots \sqcup U_\ell$\label{ln:pre1} \tcp*{see~Sec~\ref{sec:preprocessing}}
    Draw sets $\pCert_1,\ldots,\pCert_\ell$\label{ln:pre2} \tcp*{see~\eqref{eq:pCertSelection}}
	Construct $\auxL^{(1)}(\vec{x}),\ldots,\auxL^{(\ell)}(\vec{x}),\auxR^{(1)}(\vec{x}),\ldots,\auxR^{(\ell)}(\vec{x})$ for all $\vec{x} \in [0:2]^{U_i}$\label{ln:auxsets} \tcp*{see~\eqref{eq:auxiliary-sets}}
    Let $\textsf{Candidates} = \{\}$\\
	\ForEach{$\vec{a} \in \inputA$}{
		\ForEach(\tcp*[f]{break if more than $2^{c_0 \cdot d}$}){$C \in
            \auxL^{(1)}(\vec{a}_{U_1}) \times \cdots \times
		\auxL^{(\ell)}(\vec{a}_{U_\ell})$\label{ln:itera}}{
            Add $C$ to \textsf{Candidates}.
        }
    }
    \ForEach{$\vec{b} \in \inputB$}{
		\ForEach(\tcp*[f]{break if more than $2^{c_0 \cdot d}$}){$C \in
            \auxR^{(1)}(\vec{b}_{U_1}) \times \cdots \times
		\auxR^{(\ell)}(\vec{b}_{U_\ell})$\label{ln:iterb}}{
            \If{$C \in$ \textsf{Candidates}}{\Return compatible pair}
        }
    }
    \Return no compatible pair
	\caption{Outline of our algorithm for compatibility testing. Implementation details omitted from this figure are specified in the proof of~\cref{lem:2ov-with-constants}.}
    \label{alg:2ov}
\end{algorithm}


\begin{lemma}[Correctness and Runtime of Algorithm \ref{alg:2ov}]\label{lem:2ov-with-constants}
    Given $\inputA,\inputB \subseteq [0:2]^d$ satisfying the conditions in \Cref{thm:compatibility}, Algorithm \ref{alg:2ov} runs in time 
    \begin{displaymath}
        \left(2^{c_0 \cdot d} (|\inputA| + |\inputB|) + 12^{d/\ell}\right) \cdot d^{\Oh(\ell)}
    \end{displaymath}
    and recovers a compatible pair $(\vec{a}, \vec{b}) \in \inputA \times
    \inputB$ such that $\vec{a} - \vec{b} \in [-1:1]^d$ with probability $d^{-O(\ell)}$ if such a pair exists.
\end{lemma}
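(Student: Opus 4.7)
\bigskip

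\noindent
\textbf{Proof Proposal.} My plan is to prove the runtime and correctness claims separately, treating Lemma~\ref{lem:candidates}, Lemma~\ref{lem:ca}, Observation~\ref{obs:support}, and Claim~\ref{clm:data-structures} as black boxes. For the runtime, I will handle the algorithm line by line. Lines~\ref{ln:pre1} and~\ref{ln:pre2} are clearly polynomial time (note $|\pCert_i| = 2^{K |U_i|} = 2^{K d/\ell}$, which is absorbed by $12^{d/\ell}$ for $K \le \log_2 12$, a range we stay within). Line~\ref{ln:auxsets} takes $\ell \cdot \Oh(d \cdot 12^{d/\ell}) = d^{O(1)} \cdot 12^{d/\ell}$ by Claim~\ref{clm:data-structures}. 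For the two main loops, each iteration over $\auxL^{(1)}(\vec{a}_{U_1}) \times \cdots \times \auxL^{(\ell)}(\vec{a}_{U_\ell})$ (respectively for $\inputB$) is capped at $2^{c_0 \cdot d}$ by the break condition, and inserting/looking up a certificate in the \textsf{Candidates} hash table costs $d^{O(\ell)}$ per operation (since each certificate has total description size $O(d)$). Summing over $\vec{a} \in \inputA$ and $\vec{b} \in \inputB$ gives the claimed bound $(2^{c_0 \cdot d}(|\inputA| + |\inputB|) + 12^{d/\ell}) \cdot d^{O(\ell)}$.

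For correctness, fix a compatible pair $(\vec{a},\vec{b}) \in \inputA \times \inputB$ with $\vec{a} - \vec{b} \in [-1:1]^d$. I will chain three good events. First, by Observation~\ref{obs:support}, the random partition $[d] = U_1 \sqcup \cdots \sqcup U_\ell$ satisfies the even-distribution condition~\eqref{eq:support} with probability $d^{-O(\ell)}$. Conditioned on this event, Lemma~\ref{lem:candidates} guarantees (by tuning $\lambda$ and $K$) that with probability at least $2/3$ there exists a compatibility certificate $C = ((L_1,R_1),\ldots,(L_\ell,R_\ell)) \in \pCert_1 \times \cdots \times \pCert_\ell$ for $(\vec{a},\vec{b})$. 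By the definition of the auxiliary sets in~\eqref{eq:auxiliary-sets}, this is equivalent to saying $C \in \auxL^{(1)}(\vec{a}_{U_1}) \times \cdots \times \auxL^{(\ell)}(\vec{a}_{U_\ell})$ and $C \in \auxR^{(1)}(\vec{b}_{U_1}) \times \cdots \times \auxR^{(\ell)}(\vec{b}_{U_\ell})$. Finally, Lemma~\ref{lem:ca} guarantees that with probability at least $2/3$, both of these product sets have size at most $2^{c_0 \cdot d}$.

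The main obstacle is ensuring the break condition does not prevent the compatibility certificate $C$ from being processed. This is handled precisely by Lemma~\ref{lem:ca}: when the product of the auxiliary set sizes is at most $2^{c_0 \cdot d}$ for the specific $\vec{a}$ in question, the inner loop on line~\ref{ln:itera} enumerates the entire Cartesian product (including $C$) without triggering the break, so $C$ is added to \textsf{Candidates}. The symmetric argument shows that when the loop on line~\ref{ln:iterb} reaches $\vec{b}$, $C$ is also enumerated, and the test $C \in \textsf{Candidates}$ succeeds, causing the algorithm to return a compatible pair. By a union bound over the two failure events of probability $1/3$ each (conditioned on the good partition), success probability conditional on~\eqref{eq:support} is at least $1/3$, so the overall success probability is $d^{-O(\ell)} \cdot \Omega(1) = d^{-O(\ell)}$, as required. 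Parameter choices for $\lambda$ and $K$ will be made to simultaneously satisfy Lemma~\ref{lem:candidates} and minimize $c_0$ from~\eqref{eq:ca}; the specific optimization yielding the exponent $c(\eps)$ of~\eqref{eq:c-certificate} will be a routine calculus computation, which I would defer to a short remark after the main proof.
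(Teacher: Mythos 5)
Your proposal is correct and follows essentially the same approach as the paper's proof: both treat Observation~\ref{obs:support}, Lemma~\ref{lem:candidates}, Lemma~\ref{lem:ca}, and Claim~\ref{clm:data-structures} as black boxes, chain the three good events (even partition, existence of a certificate, small auxiliary products), apply a union bound to get conditional success probability $\ge 1/3$, and bound the runtime line by line with the break conditions. Your explicit remark that the break condition must not preempt enumeration of the certificate — guaranteed by Lemma~\ref{lem:ca} — makes implicit reasoning in the paper more transparent, and deferring the parameter choice $(\lambda, K)$ to the proof of Theorem~\ref{thm:compatibility} matches the paper's own structure.
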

\begin{proof}
    First, we consider runtime. Lines~\ref{ln:pre1} and~\ref{ln:pre2} take
    polynomial time. By~\cref{clm:data-structures}, the construction of
    auxiliary lists in Line~\ref{ln:auxsets} takes $\Oh(d \cdot 12^{d/\ell})$
    time. Finally, the interior loops in Lines~\ref{ln:itera}
    and~\ref{ln:iterb} repeat $2^{c_0 \cdot d}(|\inputA|+|\inputB|)$ times. The
    operations inside the interior loops take time $O(1)$ (adding $C$ to
    \textsf{Candidates}) and time $O(\log(2^{c_0 \cdot d}(|\inputA|+|\inputB|))) = \poly(d)$ (searching \textsf{Candidates}, if we sort \textsf{Candidates} ahead of time). Summing the runtime of each line completes the running time analysis. 
    
    It remains to prove correctness. Because the algorithm never returns a false positive, it remains to show that the algorithm recovers a compatible pair $(\vec{a}, \vec{b}) \in \inputA \times \inputB$ if one exists. 
    
    Fix such a pair. By \Cref{obs:support}, with probability $d^{-O(\ell)}$
    \eqref{eq:support} holds and the 0's, 1's, and 2's in $\vec{a}$ and
    $\vec{b}$ are evenly distributed between $U_1, \ldots, U_\ell$. Conditioned on this event, the number of tuples to check in Line~\ref{ln:itera} and Line~\ref{ln:iterb} is at most $2^{c_0 \cdot d}$ with probability $2/3$ by \Cref{lem:ca}. Likewise conditioned on \eqref{eq:support}, the set $\textsf{Candidates}$ contains a compatibility certificate for $\vec{a}$ and $\vec{b}$ with probability $2/3$ by \Cref{lem:candidates}. By a union bound, these events occur simultaneously with probability at least $1/3$.

    If this happens, the set $\textsf{Candidates}$ will contain every tuple $C \in \auxL^{(1)}(\vec{a}_{U_1}) \times \cdots \times \auxL^{(\ell)}(\vec{a}_{U_\ell})$, including the valid compatibility certificate(s). This compatibility certificate is also present in $\auxR^{(1)}(\vec{a}_{U_1}) \times \cdots \times \auxR^{(\ell)}(\vec{a}_{U_\ell})$ so it is recovered deterministically by the loop in Line~\ref{ln:iterb}.
\end{proof}

\begin{proof}[Proof of~\cref{thm:compatibility}]
    Set $\lambda = 2 \eps$ and $K = H(2 \epsilon) + o(1)$ and observe that the conditions required by~\cref{lem:candidates} are satisfied. Substituting these values into \eqref{eq:ca} yields
    \[
        c_0 = (1-\eps) H\left( \frac{\eps}{1-\eps} \right) + (1/2 + \epsilon)\cdot H\left(
            \frac{2 \eps}{1/2 + \epsilon}
        \right) -  H(2\epsilon) + o(1).
    \]
    Finally, choosing a large constant value for $\ell$ such as $\ell = 100/c_0$ and substituting into the runtime given by \Cref{lem:2ov-with-constants} gives the final runtime claimed by \Cref{thm:compatibility}.
\end{proof}

\bibliographystyle{alpha}
\bibliography{main}

\appendix

\section{ Proof of \texorpdfstring{\Cref{lem:prime-dist-main}}{} }
\label{apx:prime-dist-main}

The prime hashing technique rests on the following lemma that follows from the prime number theorem:
\begin{lemma}[Folklore]
    \label{lem:prime-hashing} 
        For any sufficiently large positive integer $r$ and any positive integer $z$, a prime $\bp \sim [r : 2r]$ divides $z$ with probability at most $\log_2(z)/r$.
\end{lemma}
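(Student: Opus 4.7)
The plan is to combine two standard ingredients: an elementary count of the distinct prime divisors of $z$, together with the prime number theorem applied to the interval $[r:2r]$.

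First, I would bound the number of distinct primes dividing $z$. Writing the distinct prime factors as $p_1 < p_2 < \ldots < p_k$, the chain of inequalities $z \geq p_1 p_2 \cdots p_k \geq 2^k$ immediately gives $k \leq \log_2(z)$. In particular, at most $\log_2(z)$ primes in $[r:2r]$ can possibly divide $z$.

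Next, I would invoke the prime number theorem to lower-bound the total number of primes in $[r:2r]$. Specifically, $\pi(2r) - \pi(r) = (1 + o(1)) \, r / \ln r$, so for sufficiently large $r$ the count of primes in the interval is at least $r / \log_2(r)$ (an effective Bertrand-type bound suffices just as well). Since $\bp$ is distributed uniformly over the primes in $[r:2r]$, I would conclude that
\[
    \Prob{\bp \mid z} \;\leq\; \frac{\log_2(z)}{\pi(2r) - \pi(r)} \;\leq\; \frac{\log_2(z)}{r},
\]
where the final inequality absorbs a logarithmic density factor into what ``sufficiently large $r$'' means (equivalently, the folklore statement is loose and can be sharpened to $O(\log(z) \log(r)/r)$).

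There is essentially no real obstacle to this argument; the only mild subtlety is the $\log r$ slack just mentioned. In every application of the lemma in the paper, both $\log_2(z)$ and $\log_2(r)$ are polynomial in $n$ while $r = 2^{\Omega(n)}$ is exponential in $n$, so the two versions are interchangeable for our purposes and the looser folklore form suffices.
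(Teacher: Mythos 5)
Your overall strategy (count the prime divisors of $z$ that could land in $[r:2r]$, divide by the number of primes in that interval) is the standard one, and the paper offers no proof to compare against since it cites the lemma as folklore. However, your execution has a genuine gap in the last step. The count ``at most $\log_2(z)$ distinct prime divisors'' combined with $\pi(2r)-\pi(r) = (1+o(1))\,r/\ln r$ gives
\[
    \Prob{\bp \mid z} \;\le\; \frac{\log_2(z)}{\pi(2r)-\pi(r)} \;=\; \Theta\!\left(\frac{\log_2(z)\,\ln r}{r}\right),
\]
which exceeds the claimed bound $\log_2(z)/r$ by a factor of $\ln r$. That factor cannot be ``absorbed into what sufficiently large $r$ means'': it is an unbounded multiplicative error, not a constant, so no threshold on $r$ makes $\log_2(z)\ln(r)/r \le \log_2(z)/r$ hold. (The parenthetical remark also has the direction backwards: $O(\log(z)\log(r)/r)$ is a \emph{weakening} of the stated bound, not a sharpening.)

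The fix is to count only the prime divisors that matter: every prime in $[r:2r]$ is at least $r$, so if $k$ such primes divide $z$ then $r^k \le z$ and hence $k \le \log_r(z) = \ln z/\ln r$. Dividing by $\pi(2r)-\pi(r) \ge (1-o(1))\,r/\ln r$, the two $\ln r$ factors cancel and the probability is at most $(1+o(1))\ln(z)/r \le \log_2(z)/r$ for sufficiently large $r$, since $\ln z = (\ln 2)\log_2 z < \log_2 z$. Your closing observation is nevertheless correct that the paper's only use of the lemma (inside \Cref{lem:prime-dist-main}, where the bound is immediately relaxed to $n^{c}/p_{max}$) would tolerate an extra $\log r = O(n^{O(1)})$ factor, so nothing downstream is endangered --- but as written the proposal does not establish the lemma as stated.
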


This lemma bounds the expected number of collisions observed when we hash any set of integers over a sufficiently large, randomly chosen prime. Recall our prime distribution lemma:

\primedist*
\begin{proof}
    Let $G$, $Y$, $p_{max}$, $\bp$, and $\bm{R}$ be defined as in the lemma statement, and define the set of \emph{colliding pairs}
    \[
        \bm{P} \coloneqq \left\{ (g_1, g_2) \in G \times G \; | \; g_1 \neq g_2, \bp \text{ divides } |g_1 - g_2| \right\}.
    \]

    By \Cref{lem:prime-hashing}, we have that for any pair of distinct elements $(g_1, g_2) \in G \times G$,
    \[
        \Prx_{
        \substack{\bm{p} \text{ prime} \\
        \sim [p_{max} \;:\; 2p_{max}]} }\left[\bm{p} \; \text{divides} \; |g_1 - g_2|\right]
        \leq \frac{\log_2(|g_1 - g_2|)}{p_{max}} 
        \leq \frac{\log_2(\diam(G))}{p_{max}} 
        < \frac{n^{c}}{p_{max}}.
    \]
    Thus we have
    \[
        \Ex_{\bp}[|\bm{P}|] = O(|G|^2) \cdot \frac{n^c}{p_{max}}
    \]
    by linearity of expectation. Applying Markov's inequality and letting the $O$-notation absorb the constant, we conclude that
    \begin{equation}
        |\bm{P}| = O(|G|^2) \cdot \frac{n^{c}}{p_{max}}
        \label{eq:P-cardinality}
    \end{equation}
    with constant probability. We proceed to prove that if very few residue classes contain many elements of $G \pmod{\bp}$, \eqref{eq:P-cardinality} does \emph{not} hold. Thus many residue classes will contain many elements of $G \pmod{\bp}$ with constant probability.

    \begin{claim}
        Suppose that the set of residue classes \[
        \bm{R}' \coloneqq \left\{ r \in [\bp] \; \middle| \; \left| \left\{ g \in G \; \middle| \; g 
        \equiv r \pmod{\bp} \right\} \right| \geq \frac{|G|}{4p_{max}} \right\}
        \]
        has cardinality $|\bm{R}'| \leq 2 p_{max} \cdot n^{-(c+1)}$. Then \eqref{eq:P-cardinality} does not hold.
        \label{claim:Rprime-lb}
    \end{claim}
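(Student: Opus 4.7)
The plan is to derive a lower bound on the number of colliding pairs $|\bm{P}|$ that contradicts \eqref{eq:P-cardinality} whenever the hypothesis $|\bm{R}'| \leq 2 p_{max} \cdot n^{-(c+1)}$ holds. The key observation is that once we cap the ``light'' residue classes at $|G|/(4p_{max})$ elements each, their total contribution to $|G|$ is bounded, so most of $G$ must pile up in the few heavy classes --- and Cauchy--Schwarz then forces many within-class collisions.

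First I would count the elements of $G$ lying in light residue classes. For each residue class $r \notin \bm{R}'$ we have $|\{g \in G : g \equiv r \pmod{\bp}\}| < |G|/(4p_{max})$, and since $\bp \leq 2p_{max}$ there are fewer than $2p_{max}$ residue classes in total. Summing gives
\[
    \sum_{r \notin \bm{R}'} |\{g \in G : g \equiv r \pmod{\bp}\}| \;<\; 2p_{max} \cdot \frac{|G|}{4p_{max}} \;=\; \frac{|G|}{2}.
\]
Hence at least $|G|/2$ elements of $G$ lie in residue classes of $\bm{R}'$. Writing $G_r$ for the preimage of $r$ in $G$, we obtain $\sum_{r \in \bm{R}'} |G_r| \geq |G|/2$.

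Next I would apply Cauchy--Schwarz across the classes of $\bm{R}'$:
\[
    \sum_{r \in \bm{R}'} |G_r|^2 \;\geq\; \frac{\bigl(\sum_{r \in \bm{R}'} |G_r|\bigr)^2}{|\bm{R}'|} \;\geq\; \frac{(|G|/2)^2}{2 p_{max} \cdot n^{-(c+1)}} \;=\; \frac{|G|^2 \cdot n^{c+1}}{8 p_{max}}.
\]
Every ordered pair of distinct elements of $G$ sharing a residue class contributes to $|\bm{P}|$, so
\[
    |\bm{P}| \;\geq\; \sum_{r \in [\bp]} |G_r|(|G_r|-1) \;\geq\; \sum_{r \in \bm{R}'} |G_r|^2 \;-\; |G| \;\geq\; \frac{|G|^2 \cdot n^{c+1}}{8 p_{max}} - |G|.
\]

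Finally I would compare to \eqref{eq:P-cardinality}. The assumption $p_{max} \leq O(|G|)$ together with $p_{max} = 2^{\Omega(n)}$ ensures $|G| = \omega(p_{max}/n^{c+1})$, so the $-|G|$ term is absorbed by the leading term up to constants. For $n$ sufficiently large, the leading factor $n^{c+1}/8$ exceeds any fixed constant multiple of $n^{c}$, so $|\bm{P}|$ strictly exceeds $O(|G|^2) \cdot n^c / p_{max}$, contradicting \eqref{eq:P-cardinality}. The main subtlety, and the only non-routine step, is verifying that the $-|G|$ correction and the $2p_{max}$ (instead of $\bp$) upper bound on the number of classes do not wipe out the gain from Cauchy--Schwarz; both follow from the quantitative bounds $p_{max} = 2^{\Omega(n)}$ and $p_{max} \leq O(|G|)$ assumed in \Cref{lem:prime-dist-main}.
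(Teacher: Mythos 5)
Your proof is correct and follows essentially the same route as the paper's: bound the mass of $G$ in light residue classes by $|G|/2$, apply Cauchy--Schwarz over the classes in $\bm{R}'$ to lower-bound the number of colliding pairs, and absorb the $-|G|$ correction using $p_{max} \le O(|G|)$ and $p_{max} = 2^{\Omega(n)}$. If anything you are slightly more explicit than the paper in justifying why the $-|G|$ term is negligible.
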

    \begin{claimproof}
    Suppose that $|\bm{R}'|$ is bounded as in the claim statement. Under this assumption, the residue classes in the set $[\bp] \setminus \bm{R}'$ contain fewer than 
    \begin{equation}
        \bp \cdot \frac{|G|}{4 p_{max}} \leq 2 p_{max} \cdot \frac{|G|}{4 p_{max}} = \frac{|G|}{2}
        \label{eq:G-in-Rprime-bound}
    \end{equation}
    elements of $G$, so at least $|G|/2$ elements fall into residue classes in $\bm{R}'$. 
    
    We will lower bound the number of colliding pairs in $\bm{R}'$, which is the cardinality of the set
    \[
        \bm{P}' \coloneqq \{ (g_1, g_2) \in G \times G \; | \; g_1 \neq g_2, \; \bp \text{ divides } |g_1 - g_2|, \text{ and } g_1, g_2 \; (\bmod{\;\bp}) \in \bm{R}'\}.
    \]

    We observe that $|\bm{P}'|$ approximates the squared $L_2$-norm of the frequency vector that counts the distribution of the elements of $G$ over the residue classes in $\bm{R}'$. Specifically, letting $G[r]$ denote $\{g \in G \; | \; g = r \pmod{\bp}\}$, we have
    \begin{equation}
        \label{eq:Pprime-cauchy-schwarz-bound}
        |\bm{P}'| = \sum_{r \in \bm{R}'} \frac{|G[r]| \cdot (|G[r]| - 1)}{2}
        > \left(\sum_{r \in \bm{R}'} \frac{|G[r]|^2}{2}\right) - |G|,
    \end{equation}
    where we use $|G|$ as an easy upper bound on $\sum_{r \in \bm{R}'} |G[r]|$. Thus we can lower bound $|\bm{P}'|$ using the following standard fact that follows from the Cauchy-Schwarz Inequality: for any $d$-dimensional vector $\vec{v}$, 
    \begin{equation}
        \label{eq:cauchy-schwarz}
        \sum_{i 
        \in [d]} v_i^2 = ||\vec{v}||_2^2 \geq \frac{||\vec{v}||_1^2}{d}.
    \end{equation}
    Applying \eqref{eq:cauchy-schwarz} to \eqref{eq:Pprime-cauchy-schwarz-bound} using the fact that $\sum_{r \in \bm{R}'} |G[r]| \geq |G|/2$ yields
    \[
        |\bm{P}'| \geq \frac{(|G| / 2)^2}{2 \cdot |\bm{R}'|} - |G| = \Omega\left(\frac{|G|^2}{|\bm{R}'|}\right) 
        \geq \Omega(|G|^2) \cdot \frac{n^{c + 1}}{p_{max}}.
    \]
    However, this contradicts \eqref{eq:P-cardinality} as $\bm{P}' \subseteq \bm{P}$.
    \end{claimproof}

    Because \eqref{eq:P-cardinality} holds with constant probability, we can conclude that the set of residue classes 
    \[
        \bm{R}' \coloneqq \left\{ r \in [\bp] \; \middle| \; \left| \left\{ g \in G \;
        \middle| \; g \equiv r \pmod{\bp} \right\} \right| \geq \frac{|G|}{4p_{max}} \right\}
    \]
    has cardinality at least $2 p_{max} \cdot n^{-(c+1)}$ with constant probability by \Cref{claim:Rprime-lb}.

    Condition on this event and consider the distribution of the elements of the set $Y \supseteq G$ over the residue classes in $\bm{R}'$. The size of the set
    \[
        \{ y \in Y \; | \; y \equiv \bm{r} \pmod{\bp} \}
    \]
    is trivially at most 
    \[
        \frac{|Y|}{|\bm{R}'|} < \frac{|Y|}{2 p_{max}} \cdot n^{c+1}
    \]
    in expectation over the choice of a random residue $\bm{r} \in \bm{R}'$, and at most twice this value for at least half of the elements of $\bm{R}'$ by Markov's inequality. 

    Thus the subset of residue classes that contain at least $|G|/4 p_{max}$ elements of $G$ and at most $n^{c+1} \cdot |Y|/p_{max}$ elements of $Y$ has cardinality at least
    \[
        \frac{|\bm{R}'|}{2} > \frac{p_{max}}{n^{c+1}}
    \]
    if \eqref{eq:P-cardinality} holds by \Cref{claim:Rprime-lb}, which happens with constant probability over the choice of $\bp$. This is exactly the set $\bm{R}$. This completes the proof of \Cref{lem:prime-dist-main}.
\end{proof}

\section{Proof of \texorpdfstring{\Cref{claim:C0-runtime-analysis}}{}}
\label{apx:C0-runtime-lemma-analysis}

\begin{claim}
    For every positive integer $d$ there exists a constant $\gamma > 0$, depending only on $d$, such that
    \[
        \frac{(d+1)^n}{(d+1)^{\frac{2n}{3(2d+1)}} (d!)^{\frac{4n}{3(2d+1)}}} \cdot 2^{\gamma n} < (2d + 1)^{n/2}.
    \]
    \label{claim:C0-runtime-analysis}
\end{claim}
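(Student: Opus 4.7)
The plan is to reduce the inequality to a scalar statement via $n$-th roots and logarithms, then establish the scalar statement by combining Stirling's approximation for large $d$ with direct verification for small $d$. Since $\gamma$ is allowed to depend on $d$, each $d$ can be handled individually, which is essential because the margin turns out to be very narrow for $d=1$.

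First I would take $n$-th roots of both sides and then base-2 logarithms, reducing the claim to showing that, for every integer $d \geq 1$, the quantity
\[
F(d) \;\coloneqq\; \tfrac{1}{2}\log_2(2d+1) \;-\; \left(1 - \tfrac{2}{3(2d+1)}\right)\log_2(d+1) \;+\; \tfrac{4}{3(2d+1)}\log_2(d!)
\]
is strictly positive; then any $\gamma \in (0, F(d))$ works. Equivalently, after clearing denominators one needs $2(6d+1)\log_2(d+1) - 8\log_2(d!) < 3(2d+1)\log_2(2d+1)$.

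Next, for the large-$d$ regime I would apply Stirling in the form $\log_2(d!) = d\log_2 d - d\log_2 e + \tfrac{1}{2}\log_2(2\pi d) + o(1)$. A short computation using $\tfrac{4d}{6d+3} = \tfrac{2}{3} - o(1)$, together with $\log_2(2d+1) = 1 + \log_2 d + o(1)$ and $\log_2(d+1) = \log_2 d + o(1)$, yields
\[
F(d) \;=\; \tfrac{1}{6}\log_2 d \;+\; \tfrac{1}{2} \;-\; \tfrac{2}{3}\log_2 e \;+\; o(1).
\]
Since $\tfrac{2}{3}\log_2 e \approx 0.962$, the constant term is negative (about $-0.46$), but the $\tfrac{1}{6}\log_2 d$ term dominates once $d$ exceeds some constant threshold $d_0$. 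This handles all $d \geq d_0$.

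Finally, for the finitely many small values $1 \leq d < d_0$ I would verify $F(d) > 0$ by direct arithmetic: for instance $F(1) = \tfrac{1}{2}\log_2 3 - \tfrac{7}{9} \approx 0.015 > 0$, and the gap grows monotonically with $d$ in this range, so each of the finitely many cases is dispatched by a single numerical check. The main (mild) obstacle is that the coarse Stirling bound above only becomes useful after $d$ is moderately large (numerically, around $d \approx 7$), so one must either tune the Stirling bound more carefully or be willing to check a handful of small cases by hand; both are easy, and neither requires anything beyond elementary manipulation.
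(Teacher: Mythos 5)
Your proposal is correct and follows essentially the same two-part strategy as the paper: apply a Stirling-type estimate to handle all sufficiently large $d$, then dispatch the finitely many remaining small values of $d$ by direct numerical verification. The one cosmetic difference is that the paper uses the one-sided, non-asymptotic bound $(k/e)^k \le k!$ (which yields an explicit threshold of $d \ge 9$ with no unquantified error term), whereas you invoke the asymptotic form of Stirling with an $o(1)$ remainder, so your threshold $d_0 \approx 7$ would need an explicit error bound (e.g., Robbins' form of Stirling) to be fully rigorous; this is a minor tightening that does not change the nature of the argument.
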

\begin{claimproof}
    Recall that our goal is to prove that the quantity
    \begin{equation}
        \frac{(d+1)^n}{(d+1)^{\frac{2n}{3(2d+1)}} (d!)^{\frac{4n}{3(2d+1)}}}
        \label{eq:quantity-1}
    \end{equation}
    is smaller than $(2d+1)^{n/2}$ by a function exponential in $n$ for any integer $d \geq 1$.

    We begin by rewriting the denominator of \eqref{eq:quantity-1} in terms of the factorial $(d+1)!$ and factoring out the common $n$ in the exponent:
    \begin{equation*}
        \frac{(d+1)^n}{(d+1)^{\frac{2n}{3(2d+1)}} (d!)^{\frac{4n}{3(2d+1)}}}
        = \left(
        \frac{d+1}{((d+1)!)^{\frac{2}{3(2d+1)}} \left(\frac{(d+1)!}{d+1}\right)^{\frac{2}{3(2d+1)}}}
        \right)^n
    \end{equation*}

    Next, we recall the standard fact that
    \[
        \left(\frac{k}{e}\right)^k \leq k!
    \]
    for positive integer $k$, which is a variant of Stirling's formula. Setting $k = d+1$ and applying the inequality to \eqref{eq:quantity-1}, we have that for $d \geq 0$,

    \begin{align*}
        \left(
        \frac{d+1}{((d+1)!)^{\frac{2}{3(2d+1)}} \left(\frac{(d+1)!}{d+1}\right)^{\frac{2}{3(2d+1)}}}
        \right)^n &\leq 
        \left(
        \frac{d+1}{(\frac{d+1}{e})^{\frac{(d+1) \cdot 2}{3(2d+1)}} (\frac{d+1}{e})^{\frac{(d+1) \cdot 2}{3(2d+1)}} (d+1)^{-\frac{2}{3(2d+1)}}} 
        \right)^n \\
        &=
        \left(
        \frac{(d+1) \cdot e^{\frac{4d + 4}{6d + 3}}}{
            (d+1)^{\frac{4d+2}{6d+3}}
        }
        \right)^n \\
        &= \left(
        (d+1)^{1/3} \cdot e^{\frac{4d + 4}{6d + 3}}
        \right)^n
    \end{align*}

    Compare the base of this exponential function with that of the Meet-in-the-Middle runtime, which is $(2d+1)^{1/2}$. The derivative of 
    \[
        (d+1)^{1/3} \cdot e^{\frac{4d + 4}{6d + 3}}
    \]
    with respect to $d$ is strictly less than the derivative of 
    \[
        (2d+1)^{1/2}
    \]
    with respect to $d$, and the two quantities are equal at $d \approx 8.88$. Thus we conclude that the lemma holds for integers $d \geq 9$.

    It remains to show that the lemma holds for integers $1 \leq d < 9$, which is easily verified. The table below lists calculated upper bounds on \eqref{eq:quantity-1} for $1 \leq d < 9$.

    \begin{table}[ht]
    \centering
    \begin{tabular}{l|l|l}
    \hline
    \multicolumn{1}{|l|}{$d$} & $\eqref{eq:quantity-1} $  & \multicolumn{1}{l|}{$(2d+1)^{n/2}$} \\ \hline
    1   & $< 1.715^n$  & $> 1.732^n$  \\ \hline
    2   & $< 2.154^n$  & $> 2.236^n$  \\ \hline
    3   & $< 2.492^n$  & $> 2.645^n$  \\ \hline
    4   & $< 2.772^n$  & $3^n$        \\ \hline
    5   & $< 3.013^n$  & $> 3.316^n$  \\ \hline
    6   & $< 3.227^n$  & $> 3.360$    \\ \hline
    7   & $< 3.419^n$  & $> 3.872^n$  \\ \hline
    8   & $<3.595^n$   & $> 4.123^n$                            
    \end{tabular}
    \end{table} \qedhere
    \end{claimproof}
\end{document}